\let\newfloat\newfloat@ltx
\def\HC{\mathcal{H}}
\def\LC{\mathcal{L}}
\def\ad{^{\dagger}}
\newcommand{\fsnull}[1]{}
\newcommand{\old}[1]{}
\tikzset{every picture/.style=remember picture}
\newcommand{\dya}[1]{\ket{#1}\!\bra{#1}}
\newcommand{\poly}{\operatorname{poly}}
\newcommand{\BC}{\mathcal{B}}
\newcommand{\CC}{\mathcal{C}}
\newcommand{\DC}{\mathcal{D}}
\newcommand{\EC}{\mathcal{E}}
\newcommand{\NC}{\mathcal{N}}
\newcommand{\OC}{\mathcal{O}}
\newcommand{\SC}{\mathcal{S}}
\newcommand{\TC}{\mathcal{T}}
\newcommand{\YC}{\mathcal{Y}}
\newcommand{\Var}{{\rm Var}}
\renewcommand{\geq}{\geqslant}
\renewcommand{\leq}{\leqslant}
\renewcommand{\vec}[1]{\boldsymbol{#1}}  
\newcommand*{\id}{\openone}
\newcommand{\bs}{\textsf{BS}}
\newcommand{\sg}{\sigma }
\newcommand{\SWAP}{{\rm SWAP}}
\newcommand{\thv}{\vec{\theta}}
\def\be{\begin{equation}}
\def\ee{\end{equation}}
\def\bs{\begin{split}}
\def\e{\end{split}}
\def\ba{\begin{eqnarray}}
\def\bea{\begin{eqnarray}}
\def\tea{\end{eqnarray}}
\def\ea{\end{eqnarray}}
\def\eea{\end{eqnarray}}
\newtheorem{theorem}{Theorem}
\newtheorem{lemma}{Lemma}
\newtheorem{corollary}{Corollary}
\newtheorem{definition}{Definition}
\def\be{\begin{equation}}
\def\te{\end{equation}}
\def\ee{\end{equation}}
\def\ba{\begin{eqnarray}}
\def\bea{\begin{eqnarray}}
\def\tea{\end{eqnarray}}
\def\ea{\end{eqnarray}}
\def\eea{\end{eqnarray}}
\def\la{{\langle}}
\def\ra{{\rangle}}
\newcommand{\beq}{\begin{equation}}
\newcommand{\eeq}{\end{equation}}
\newcommand{\brauer}{\mathfrak{B}_k(d)}
\def\GL{\mathbb{GL}}
\begin{document}

\title{Quantum neural networks form Gaussian processes}

\author{Diego Garc\'ia-Mart\'in}
\affiliation{Information Sciences, Los Alamos National Laboratory, Los Alamos, New Mexico 87545, USA}
\author{Mart\'{i}n Larocca}
\affiliation{Theoretical Division, Los Alamos National Laboratory, Los Alamos, New Mexico 87545, USA}
\affiliation{Center for Nonlinear Studies, Los Alamos National Laboratory, Los Alamos, New Mexico 87545, USA}
\author{M. Cerezo}
\thanks{cerezo@lanl.gov}
\affiliation{Information Sciences, Los Alamos National Laboratory, Los Alamos, New Mexico 87545, USA}

\begin{abstract}
It is well known that artificial neural networks initialized from independent and identically distributed priors converge to Gaussian processes in the limit of a large number of neurons per hidden layer. In this work we prove an analogous result for  Quantum Neural Networks (QNNs). Namely, we show that the outputs of certain models based on Haar random unitary or orthogonal deep QNNs  converge to Gaussian processes in the limit of large Hilbert space dimension $d$. The derivation of this result is more nuanced than in the classical case due to the role played by the input states, the measurement observable, and the fact that the entries of unitary matrices are not independent. Then, we show that the efficiency of predicting measurements at the output of a QNN using Gaussian process regression depends on the observable's bodyness. Furthermore, our theorems imply that the concentration of measure phenomenon in Haar random QNNs is worse than previously thought, as we prove that expectation values and gradients  concentrate as 
$\mathcal{O}\left(\frac{1}{e^d \sqrt{d}}\right)$. Finally, we discuss how our results improve our understanding of  concentration in $t$-designs.

\end{abstract}

\maketitle

Neural Networks (NNs) have revolutionized the fields of Machine Learning (ML) and artificial intelligence. Their tremendous  success across many fields of research  in a wide variety of applications~\cite{alzubaidi2021review,khurana2023natural,jumper2021highly} is certainly astonishing. While much of this success has come through heuristics, the past few decades have witnessed a significant increase in our theoretical understanding of their inner workings.
One of the most interesting results regarding NNs is that fully-connected models with a single hidden layer converge to Gaussian Processes (GPs) in the limit of large number of hidden  neurons, when the parameters are initialized from independent and identically distributed (i.i.d.) priors~\cite{neal1996priors}. More recently, it has been shown that i.i.d.-initialized, fully-connected, multi-layer NNs also converge to GPs in the infinite-width limit~\cite{lee2017deep}. Furthermore, 
other architectures, such as convolutional NNs~\cite{novak2019bayesian}, transformers~\cite{hron2020infinite} or recurrent NNs~\cite{yang2019wide}, are also GPs under certain assumptions. More than just a mathematical curiosity, the correspondence between NNs and GPs opened up the possibility of performing exact Bayesian inference for regression and learning tasks using wide NNs~\cite{neal1996priors,rasmussen2006gaussian}. Training wide NNs with GPs requires inverting the covariance matrix of the training set, a process that can be computationally expensive. Recent studies have explored the use of quantum linear algebraic techniques to efficiently perform these matrix inversions, potentially offering polynomial speedups over standard classical methods~\cite{zhao2019bayesian,kus2021sparse}.

Indeed, with the advent of quantum computers there has been an enormous interest in merging quantum computing with ML, leading to the thriving field of Quantum Machine Learning (QML)~\cite{biamonte2017quantum,cerezo2022challenges,cerezo2020variationalreview,liu2021representation,schuld2021machine}.  Rapid progress has been made in this field, largely fueled by the hope that QML may provide a quantum advantage in the near-term for some practically-relevant problems. While the prospects for such a practical quantum advantage remain unclear~\cite{schuld2022is}, a number of promising analytical results have already been put forward~\cite{larocca2021theory,anschuetz2022interpretable,abbas2020power,huang2020predicting}. Still, much remains to be learned about QML models.

In this work, we contribute to the QML body of knowledge by proving that under certain conditions, the outputs of deep Quantum Neural Networks (QNNs) -- i.e., parametrized quantum circuits acting on input states drawn from a training set-- converge to GPs in the limit of large Hilbert space dimension (see Fig.~\ref{fig:Summary}). 
 Our results are derived for QNNs that are Haar random over the unitary and orthogonal groups. 
Unlike the classical case, where the proof of the emergence of GPs stems from the central limit theorem, the situation becomes more intricate in the quantum setting as the entries of the QNN are not independent -- the rows and columns of a unitary matrix are constrained to be mutually orthonormal.  Hence, our proof strategy boils down to showing that each moment of the QNN's output distribution converges to that of a multivariate Gaussian. Importantly, we also show that the Bayesian distribution of a QNN acting on qubits is efficient (inefficient) for predicting  local (global) measurements. We then use our results to provide a precise characterization of the concentration of measure phenomenon in deep random quantum circuits~\cite{mcclean2018barren,cerezo2020cost,marrero2020entanglement,patti2020entanglement,holmes2020barren,arrasmith2021equivalence}. Here, our theorems indicate that the expectation values, as well as the gradients, of Haar random processes concentrate faster than previously reported~\cite{popescu2006entanglement}. Finally, we discuss how our results can be leveraged to study QNNs that are not fully Haar random but instead form $t$-designs, which constitutes a much more practical assumption~\cite{harrow2009random,harrow2018approximate,haferkamp2022random}.

\section{Gaussian processes and classical machine learning}

We begin by introducing GPs.
\begin{definition}[Gaussian process]\label{def:GP}
A collection  of random variables $\{X_{1},X_{2},\dots\}$ is a GP if and only if, for every finite set of indices $\{1,2,\dots,m\}$, the vector $(X_{1},X_{2},\dots,X_{m})$ follows a multivariate Gaussian  distribution, which we denote as $\NC(\vec{\mu},\vec{\Sigma})$. Said otherwise, every linear combination of  $\{X_{1},X_{2},\dots,X_{m}\}$ follows a univariate Gaussian distribution.
\end{definition}
In particular, $\NC(\vec{\mu},\vec{\Sigma})$ is determined by its $m$-dimensional mean vector $\vec{\mu}=(\mathbb{E}[X_{1}],\ldots,\mathbb{E}[X_{m}])$, and its $m\times m$ dimensional covariance matrix with entries $(\vec{\Sigma})_{\alpha\beta}={\rm Cov}[X_{\alpha},X_{\beta}]$. 

GPs are extremely important in ML since they can be used as a form of kernel method to solve learning tasks~\cite{neal1996priors,rasmussen2006gaussian}. For instance,  consider a regression problem where the data domain is $\mathscr{X}=\mathbb{R}$ and the label domain is $\mathscr{Y}=\mathbb{R}$. Instead of finding a single function $f:\mathscr{X}\rightarrow\mathscr{Y}$ which solves the regression task, a GP instead assigns probabilities to a set of possible $f(x)$, such that the probabilities are higher for the ``more likely'' functions. Following a Bayesian inference approach,  
one then selects the functions that best agree with some set of empirical observations~\cite{rasmussen2006gaussian,schuld2021machine}.

Under this framework, the output over the distribution of functions $f(x)$, for  $x\in\mathscr{X}$, is a random variable. Then, given a  set of training samples $x_1,\ldots,x_m$, and some covariance function $\kappa(x,x')$, Definition~\ref{def:GP} implies that if one has a GP, the outputs $f(x_1),\ldots,f(x_m)$ are random variables sampled from some multivariate Gaussian distribution $\NC(\vec{\mu},\vec{\Sigma})$. From here, the GP is used to make predictions about the output $f(x_{m+1})$ (for some new data instance  $x_{m+1}$), given the previous observations $f(x_1),\ldots,f(x_m)$. Explicitly, one constructs the joint distribution $P(f(x_1),\ldots,f(x_m),f(x_{m+1}))$ from the averages and the covariance function $\kappa$, and obtains the sought-after ``predictive distribution'' $P(f(x_{m+1})|f(x_1),\ldots,f(x_m))$ via marginalization. The power of the GP relies on the fact that this distribution usually contains less uncertainty than  $P(f(x_{m+1}))=\NC(\mathbb{E}[f(x_{m+1})],\kappa(x_{m+1},x_{m+1}))$ (see the Methods).

\begin{figure}[t]
    \centering
\includegraphics[width=1\columnwidth]{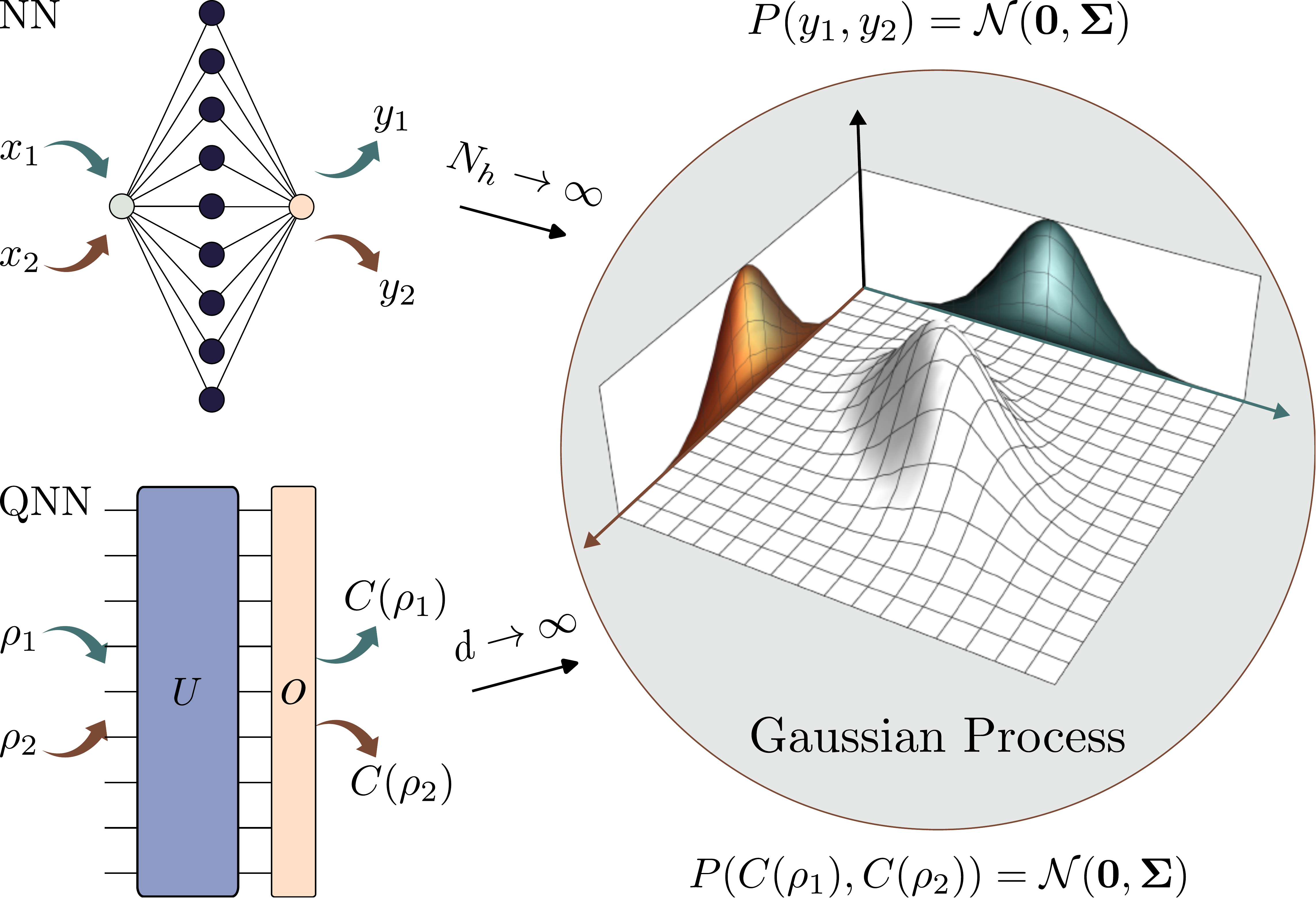}
    \caption{\textbf{Schematic of our main results.} It is well known that certain classical NNs with $N_h$ neurons per hidden layer become GPs when $N_h\rightarrow \infty$. That is, given  inputs $x_1$ and $x_2$, and corresponding outputs $y_1$ and $y_2$, then the joint probability $P(y_1,y_2)$ is a multivariate Gaussian $\NC(\vec{0},\vec{\Sigma})$. In this work, we show that a similar result holds under certain conditions for deep QNNs in the limit of large Hilbert space dimension, $d\rightarrow \infty$. Now, given quantum states $\rho_1$ and $\rho_2$, $C(\rho)=\Tr[U\rho U\ad O]$ is such that  $P(C(\rho_1),C(\rho_2))=\NC(\vec{0},\vec{\Sigma})$.
    \label{fig:Summary}}
\end{figure}

\section{Haar random deep QNNs form GPs}

In what follows we consider a setting where one is given repeated access to a dataset $\mathscr{D}$ containing quantum states $\{\rho_{i}\}_{i}$ on a $d$-dimensional Hilbert space that satisfy $\Tr[\rho_i^2]\in \Omega(\frac{1}{\poly(\log(d))})$ for all $i$. We will make no assumptions regarding the origin of these states, as they can correspond to classical data encoded in quantum states~\cite{lloyd2020quantum,perez2020data}, or quantum data obtained from some quantum mechanical process~\cite{schatzki2021entangled,larocca2022group}. Then, the states are sent through a deep QNN, denoted  $U$. While in general $U$ can be parametrized by some set of trainable parameters $\thv$, we leave such dependence implicit for ease of notation.   At the output of the circuit one measures the expectation value of a traceless Hermitian operator taken from a set  $\mathscr{O}=\{O_j\}_j$ such that $\Tr[O_j O_{j'}]=d\delta_{j,j'}$ and $O_j^2=\id$, for all $j,j'$ (e.g., Pauli strings). We denote the QNN outputs as
\begin{equation}\label{eq:cij}
    C_j(\rho_i)=\Tr[U\rho_i U\ad O_j]\,.
\end{equation}
Then, we collect these quantities over some set of states from $\mathscr{D}$ and some set of measurements from $\mathscr{O}$ in a vector
\beq \label{eq:random_vector} 
\mathscr{C}=\left(C_j(\rho_i),\ldots, C_{j'}(\rho_{i'}),\ldots \right) \,.
\eeq
As we will show below, in the large-$d$ limit $\mathscr{C}$ converges to a GP when the QNN unitaries $U$ are sampled
according to the Haar measure on $\mathbb{U}(d)$ and $\mathbb{O}(d)$, the degree-$d$ unitary and orthogonal groups, respectively (see Fig.~\ref{fig:Summary}). We recall that $\mathbb{U}(d)=\{U\in\mathbb{C}^{d\times d}\,,\, UU\ad=U\ad U=\id\}$ and that $\mathbb{O}(d)=\{U\in\mathbb{R}^{d\times d}\,,\, UU^T=U^TU=\id\}$. We will henceforth  use the notation $\mathbb{E}_{\mathbb{U}(d)}$ and $\mathbb{E}_{\mathbb{O}(d)}$ to respectively  denote Haar averages over $\mathbb{U}(d)$ and $\mathbb{O}(d)$. Moreover, we assume that when the circuit is sampled from $\mathbb{O}(d)$,  the states in $\mathscr{D}$ and the measurement operators in $\mathscr{O}$ are real valued.

\subsection{Moment computation in the large-$d$ limit}

As we discuss in the Methods, we cannot rely on simple central-limit-theorem arguments  to show that  $\mathscr{C}$ forms a GP. Hence, our proof strategy is based on computing all the moments of the vector $\mathscr{C}$  and showing that they  asymptotically match those of a multivariate Gaussian distribution. To conclude the proof we show that these moments unequivocally  determine the distribution, for which we can use Carleman's condition~\cite{petz2004asymptotics,kleiber2013multivariate}. We refer the reader to the Supplemental Information (SI) for the detailed proofs of the results in this manuscript.

First, we present the following lemma.
\begin{lemma}\label{lem:exp-cov}
    Let $C_j(\rho_i)$ be the expectation value of a Haar random QNN as in Eq.~\eqref{eq:cij}. Then for any $\rho_i\in \mathscr{D}$, $O_j\in \mathscr{O}$,
    \begin{equation}
        \mathbb{E}_{\mathbb{U}(d)}[C_j(\rho_i)]=\mathbb{E}_{\mathbb{O}(d)}[C_j(\rho_i)]=0\,.
    \end{equation}
    Moreover, for any pair of states $\rho_i,\rho_{i'}\in \mathscr{D}$ and operators $O_{j},O_{j'}\in \mathscr{O}$ we have
    \begin{equation}
       {\rm Cov}_{\mathbb{U}(d)}[C_j(\rho_i)C_{j'}(\rho_{i'})]={\rm Cov}_{\mathbb{O}(d)}[C_j(\rho_i)C_{j'}(\rho_{i'})]=0\nonumber\,,
    \end{equation}
    if $j\neq j'$ and 
    \small
    \begin{align}
       \vec{\Sigma}_{i, i'}^{\mathbb{U}}&=\frac{d}{d^2-1}\left(\Tr[\rho_{i}\rho_{i'}]-\frac{1}{d}\right)\,,\\
       \vec{\Sigma}_{i, i'}^{\mathbb{O}}&=\frac{2(d+1)}{(d+2)(d-1)}\left(\Tr[\rho_{i}\rho_{i'}]\left(1-\frac{1}{d+1}\right)-\frac{1}{d+1}\right)\,,
    \end{align}
    \normalsize
    if $j=j'$. Here, we have defined $\vec{\Sigma}_{i, i'}^{G}={\rm Cov}_{G}[C_j(\rho_i)C_{j}(\rho_{i'})]$, where $G=\mathbb{U}(d),\mathbb{O}(d)$.
\end{lemma}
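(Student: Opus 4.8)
The plan is to reduce both moments to Weingarten (Haar-integration) calculus on $\mathbb{U}(d)$ and $\mathbb{O}(d)$. For the first moment I would write $\mathbb{E}_G[C_j(\rho_i)]=\Tr\!\big[\mathbb{E}_G[U\rho_i U\ad]\,O_j\big]$ and use the first-order twirl $\mathbb{E}_{\mathbb{U}(d)}[UXU\ad]=\mathbb{E}_{\mathbb{O}(d)}[UXU\trp]=\tfrac{\Tr[X]}{d}\,\id$; since $\Tr[\rho_i]=1$ and $O_j$ is traceless, $\Tr[\tfrac1d\id\,O_j]=0$, proving the first claim for both groups. For the covariance, the vanishing of the means reduces the task to computing $\mathbb{E}_G[C_j(\rho_i)C_{j'}(\rho_{i'})]$, which I would rewrite as a single two-copy expectation value, $C_j(\rho_i)C_{j'}(\rho_{i'})=\Tr\!\big[U^{\ot 2}(\rho_i\ot\rho_{i'})(U\ad)^{\ot 2}(O_j\ot O_{j'})\big]$, and push the Haar average onto the two-copy twirl $T_G(X):=\mathbb{E}_G[U^{\ot 2}X(U\ad)^{\ot 2}]$, so that $\mathbb{E}_G[C_j(\rho_i)C_{j'}(\rho_{i'})]=\Tr\!\big[T_G(\rho_i\ot\rho_{i'})(O_j\ot O_{j'})\big]$.

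By Schur--Weyl duality, $T_{\mathbb{U}(d)}(X)$ lies in $\spn\{\id,\SWAP\}$ with coefficients fixed by $\Tr[X]$ and $\Tr[X\SWAP]$ via the two-copy unitary Weingarten matrix; for the orthogonal group the commutant of $O^{\ot 2}$ is three-dimensional, $\spn\{\id,\SWAP,\dya{\Omega}\}$ with $\ket{\Omega}=\sum_k\ket{kk}$, and the coefficients of $T_{\mathbb{O}(d)}(X)$ are fixed by $\Tr[X]$, $\Tr[X\SWAP]$ and $\bra{\Omega}X\ket{\Omega}$ via the orthogonal Weingarten matrix $\mathrm{Wg}^{\mathbb{O}}$ for products of four entries of $O$. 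Evaluating the functionals is elementary: on $X=\rho_i\ot\rho_{i'}$ one has $\Tr[X]=1$, $\Tr[X\SWAP]=\Tr[\rho_i\rho_{i'}]$, and $\bra{\Omega}X\ket{\Omega}=\Tr[\rho_i\rho_{i'}\trp]=\Tr[\rho_i\rho_{i'}]$ (the last step using that the states are real in the $\mathbb{O}(d)$ case), while pairing $T_G(X)$ against $O_j\ot O_{j'}$ only needs $\Tr[O_j\ot O_{j'}]=0$, $\Tr[(O_j\ot O_{j'})\SWAP]=\Tr[O_jO_{j'}]=d\,\delta_{j,j'}$, and $\bra{\Omega}O_j\ot O_{j'}\ket{\Omega}=\Tr[O_jO_{j'}\trp]=d\,\delta_{j,j'}$ (again using reality of the $O_j$). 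The common factor $\delta_{j,j'}$ forces the covariance to vanish when $j\neq j'$; for $j=j'$ it remains to collect the Weingarten coefficients and simplify, which gives $\vec{\Sigma}^{\mathbb{U}}_{i,i'}=\tfrac{d}{d^2-1}\big(\Tr[\rho_i\rho_{i'}]-\tfrac1d\big)$ in the unitary case and, after the analogous algebra with $\mathrm{Wg}^{\mathbb{O}}$, the stated $\vec{\Sigma}^{\mathbb{O}}_{i,i'}$.

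The explicit Weingarten evaluations and the final algebraic simplification are routine. The step I expect to demand the most care is the orthogonal case: one must correctly identify the third commutant element $\dya{\Omega}$ (equivalently, work over the Brauer algebra rather than the symmetric group), use the orthogonal Weingarten function for products of four $O$ entries with the right normalization, and pin down exactly where the reality hypothesis on $\mathscr{D}$ and $\mathscr{O}$ enters -- namely in collapsing $\Tr[\rho_i\rho_{i'}\trp]$ and $\Tr[O_jO_{j'}\trp]$ to $\Tr[\rho_i\rho_{i'}]$ and $d\,\delta_{j,j'}$. It is also worth checking the minor coincidence that the two pair-partition channels feeding the $\dya{\Omega}$ and $\SWAP$ components carry identical coefficients on $X=\rho_i\ot\rho_{i'}$, which is what makes $\vec{\Sigma}^{\mathbb{O}}_{i,i'}$ collapse to the clean closed form in the statement.
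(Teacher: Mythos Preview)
Your proposal is correct and follows essentially the same approach as the paper: compute the first moment via the one-copy twirl $\mathbb{E}_G[UXU\ad]=\tfrac{\Tr[X]}{d}\id$ and the covariance via the two-copy twirl expanded in the commutant basis $\{\id,\SWAP\}$ for $\mathbb{U}(d)$ and $\{\id,\SWAP,\Pi\}$ (your $\dya{\Omega}$) for $\mathbb{O}(d)$, then evaluate the resulting traces using $\Tr[O_jO_{j'}]=d\,\delta_{j,j'}$, purity of the states, and the reality assumption to collapse transposes in the orthogonal case. The paper carries out exactly these steps with the explicit Weingarten matrices for $k=2$, so your sketch matches it line by line.
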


Lemma~\ref{lem:exp-cov} shows that the expectation value of the QNN outputs is always zero. More notably, it  indicates that the covariance between the outputs is null if we measure different observables  (even if we use the same input state and the same circuit).  This implies that the distributions $C_j(\rho_i)$ and $C_{j'}(\rho_{i'})$ are uncorrelated if $j\neq j'$. That is, knowledge of the measurement outcomes for one observable and different input states does not provide any information about the outcomes  of other measurements, at these or any other input states.Therefore, in what follows, we will focus on the case where $\mathscr{C}$ contains expectation values for different states but the same operator. In this case, Lemma~\ref{lem:exp-cov} shows that the covariances will be positive, zero, or negative depending on whether $\Tr[\rho_{i}\rho_{i'}]$ is larger, equal, or smaller than $\frac{1}{d}$, respectively. 

\begin{table}[t]
    \centering
\begin{tabular}{|l|*{3}{c|}}\hline
Dataset. For all $\rho_i\neq\rho_{i'}\in\mathscr{D}$:  & GP & Correlation & Statement\\\hline\hline
$\Tr[\rho_i\rho_{i'}]\in\Omega\left(\frac{1}{\poly(\log(d))}\right)$   & Yes & Positive & Theorem~\ref{th:gp_main}\\\hline
$\Tr[\rho_i\rho_{i'}]=\frac{1}{d}$   & Yes & Null& Theorem~\ref{th:gaus-indept}\\\hline
$\Tr[\rho_i\rho_{i'}]=0$   & Yes & Negative& Theorem~\ref{th:GP-negative}\\\hline
\end{tabular}
    \caption{\textbf{Summary of main results.} In the first column we present conditions for the states in the dataset under which the deep QNN's outputs form GPs. In the remaining columns we  report the correlation in the GP variables and the associated theorem where the main result is stated. In all cases we assume that we measure the same operator $O_j$ for all $\rho_i,\rho_{i'}\in \mathscr{D}$. In Theorem~\ref{cod:dataset-average}, we extend  some of these results to the cases where the conditions are only met on average when sampling states over $\mathscr{D}$.}
    \label{tab:train_states}
\end{table}

We now state a useful result. 

\begin{lemma}\label{lem:moments}
    Let $\mathscr{C}$ be a vector of expectation values of a Haar random QNN as in Eq.~\eqref{eq:random_vector}, where one measures the same operator $O_j$ over a set  of states from $\mathscr{D}$. Furthermore, let $\rho_{i_1},\ldots,\rho_{i_k}\in\mathscr{D}$ be a multiset of states taken from those appearing in $\mathscr{C}$. In the large-$d$ limit, if $k$ is odd, then $\mathbb{E}_{\mathbb{U}(d)}\left[C_{j}(\rho_{i_1})\cdots C_{j}(\rho_{i_k})\right]=\mathbb{E}_{\mathbb{O}(d)}\left[C_{j}(\rho_{i_1})\cdots C_{j}(\rho_{i_k})\right]=0$. Moreover, if $k$ is even and  $\Tr[\rho_i\rho_{i'}]\in\Omega\left(\frac{1}{\poly(\log(d))}\right)$ for all $i, i'$,  we have
    \begin{align}
\mathbb{E}_{\mathbb{U}(d)}\left[C_{j}(\rho_{i_1})\cdots C_{j}(\rho_{i_k})\right]&=\frac{1}{d^{k/2}}\sum_{\sigma\in T_{k}}\prod_{\{t,t'\}\in \sigma} \Tr[\rho_{t}\rho_{t'}]\\
&=\frac{\mathbb{E}_{\mathbb{O}(d)}\left[C_{j}(\rho_{i_1})\cdots C_{j}(\rho_{i_k})\right]}{2^{k/2}}\,,\nonumber
\end{align}
where the summation runs over all the possible disjoint pairing of indexes in the set $\{1,2,\ldots,k\}$, $T_k$, and the product is over the different pairs in each pairing.
\end{lemma}
Using Lemma~\ref{lem:moments} as our main tool, we will be able to prove that deep QNNs form GPs for different types of datasets. In Table~\ref{tab:train_states}, we present a summary of our main results.

\subsection{Positively correlated GPs}

We begin by studying the case when the states in the dataset satisfy $\Tr[\rho_i\rho_{i'}]\in\Omega\left(\frac{1}{\poly(\log(d))}\right)$ for all $\rho_i,\rho_{i'}\in\mathscr{D}$. According to Lemma~\ref{lem:exp-cov}, this implies that the variables are positively correlated. 
In the large-$d$ limit, we can derive the following theorem.
\begin{theorem} \label{th:gp_main}
Under the same conditions for which Lemma~\ref{lem:moments} holds, the vector $\mathscr{C}$ forms a GP with mean vector $\vec{\mu}=\vec{0}$ and covariance matrix given by $
    \vec{\Sigma}_{i, i'}^{\mathbb{U}}=\frac{\vec{\Sigma}_{i, i'}^{\mathbb{O}}}{2}= \frac{\Tr[\rho_{i}\rho_{i'}]}{d}$.
\end{theorem}
Theorem~\ref{th:gp_main} indicates that  the covariances for the orthogonal group are twice as large as those arising from the unitary group. 
In  Fig.~\ref{fig:multivariate_gauss}, we present results obtained by numerically simulating a unitary Haar random QNN for  a system of $n=18$ qubits. The circuits were sampled using known results for the distribution of the entries of random unitary matrices~\cite{petz2004asymptotics}. In the left panels of  Fig.~\ref{fig:multivariate_gauss}, we show the corresponding two-dimensional GP obtained for two initial states that satisfy $\Tr[\rho_i\rho_{i'}]\in\Omega(1)$. Here, we see that the variables are positively correlated in accordance with the prediction in Theorem~\ref{th:gp_main}.

The fact that the outputs of deep QNNs form GPs reveals a deep connection between QNNs and quantum kernel methods. While it has already been pointed out that QNN-based QML constitutes a form of kernel-based learning~\cite{schuld2021quantum}, our results solidify this connection for the case of Haar random circuits. Notably, we can recognize that the kernel arising in the GP covariance matrix is proportional to the Fidelity kernel, i.e., to the Hilbert-Schmidt inner product between the data states~\cite{havlivcek2019supervised,schuld2021quantum,thanasilp2022exponential}. Moreover, since the predictive distribution of a GP can be expressed as a function of the covariance matrix (see the Methods), and thus of the kernel entries, our results further  cement the fact that quantum models such as those in Eq.~\eqref{eq:cij} are functions in the reproducing kernel Hilbert space~\cite{schuld2021quantum}.

\subsection{Uncorrelated GPs}

We now consider the case when $\Tr[\rho_i\rho_{i'}]=\frac{1}{d}$ for all $\rho_i\neq \rho_{i'}\in\mathscr{D}$. We find the following result.
\begin{theorem}\label{th:gaus-indept}
    Let $\mathscr{C}$ be a vector of expectation values of an operator in $\mathscr{O}$ over a set of states from $\mathscr{D}$, as in Eq.~\eqref{eq:random_vector}. If $\Tr[\rho_i\rho_{i'}]=\frac{1}{d}$ for all $i\neq i'$, then in the large-$d$ limit $\mathscr{C}$ forms a GP with mean vector $\vec{\mu}=\vec{0}$ and diagonal covariance matrix
\begin{equation}
\vec{\Sigma}_{i, i'}^{\mathbb{U}}= \frac{\vec{\Sigma}_{i, i'}^{\mathbb{O}}}{2} = \begin{cases}
    \frac{\Tr[\rho_i^2]}{d}\quad\text{if $i=i'$} \\
    0\qquad\;\;\, \text{if $i\neq i'$}\
\end{cases}\,.
\end{equation}
\end{theorem}

\begin{figure}
\includegraphics[width=1\columnwidth]{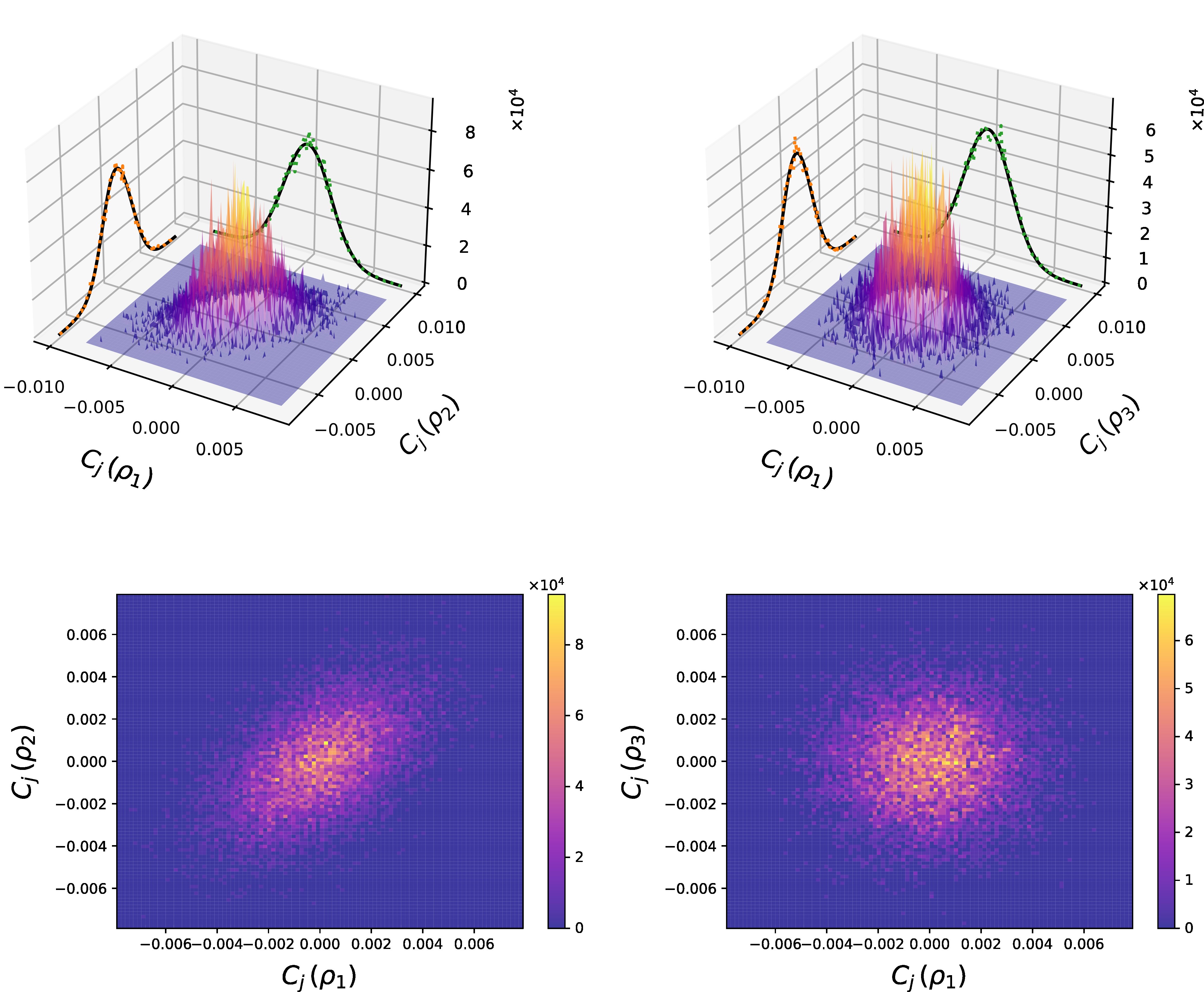}

    \caption{\textbf{Two-dimensional GPs.} We plot the joint probability density function, as well as its scaled marginals, for the measurement outcomes at the output of a unitary Haar random QNN acting on $n=18$ qubits.  The measured observable is $O_j=Z_1$, where $Z_1$ denotes the Pauli $z$ operator on the first qubit. Moreover, the input states are: $\rho_1=\dya{0}^{\otimes n}$ and $\rho_2=\dya{{\rm GHZ}}$ with $\ket{{\rm GHZ}}=\frac{1}{\sqrt{2}}(\ket{0}^{\otimes n}+\ket{1}^{\otimes n})$, for the left panel; $\rho_1$ and $\rho_3=\dya{\Psi}$ with $\ket{\Psi}=\frac{1}{\sqrt{d}}\ket{0}^{\otimes n}+\sqrt{1-\frac{1}{d}}\ket{1}^{\otimes n}$ for the right panel. In both cases we took $10^4$ samples.  }
    \label{fig:multivariate_gauss}
\end{figure}

In the right panel of Fig.~\ref{fig:multivariate_gauss}, we plot the  GP corresponding to two initial states such that $\Tr[\rho_i\rho_{i'}]=\frac{1}{d}$. In this case, the variables appear uncorrelated as predicted by Theorem~\ref{th:gaus-indept}. Importantly, in  the SI we show that when $\Tr[\rho_i\rho_{i'}]\in o(\frac{1}{\poly(\log(d))})$ for all $\rho_i\neq \rho_{i'}$, $\mathscr{C}$ will form an uncorrelated GP if one takes the covariance matrix to be approximately diagonal in the large-$d$ limit. Then, in the Methods we show that the results of Theorems~\ref{th:gp_main} and~\ref{th:gaus-indept} are valid for generalized datasets, where the conditions on the overlaps need only be met on average.

\subsection{Negatively correlated GPs}

Here we study the case of orthogonal states, i.e. when $\Tr[\rho_i\rho_{i'}]=0$  for all $\rho_i\neq \rho_{i'}\in\mathscr{D}$. We prove the following theorem. 

\begin{theorem} \label{th:GP-negative}
 Let $\mathscr{C}$ be a vector of expectation values of an operator in $\mathscr{O}$ over a set of states from $\mathscr{D}$, as in Eq.~\eqref{eq:random_vector}. If $\Tr[\rho_i\rho_{i'}]=0$ for all $i\neq i'$, then in the large-$d$ limit $\mathscr{C}$ forms a GP  with mean vector $\vec{\mu}=\vec{0}$ and covariance matrix
\begin{equation}
\vec{\Sigma}_{i, i'}^{\mathbb{U}(d)} = \frac{\vec{\Sigma}_{i, i'}^{\mathbb{O}(d)} }{2} =\begin{cases}
    \frac{\Tr[\rho_i^2]}{d}\quad\text{if $i=i'$} \\
    -\frac{1}{d^2}\quad\;\, \text{if $i\neq i'$}\,
\end{cases}.  
\end{equation}
\end{theorem}
Note that the magnitude of the covariances is $\Theta(\frac{1}{d^2})$ while that of the variances is $\Theta\left(\frac{1}{d\poly(\log(d))}\right)$. That is, in the large-$d$ limit the covariances are much smaller than the variances.

\subsection{Deep QNN outcomes, and their linear combination}

\begin{figure}[t!]
    \centering
    \includegraphics[width=\columnwidth]{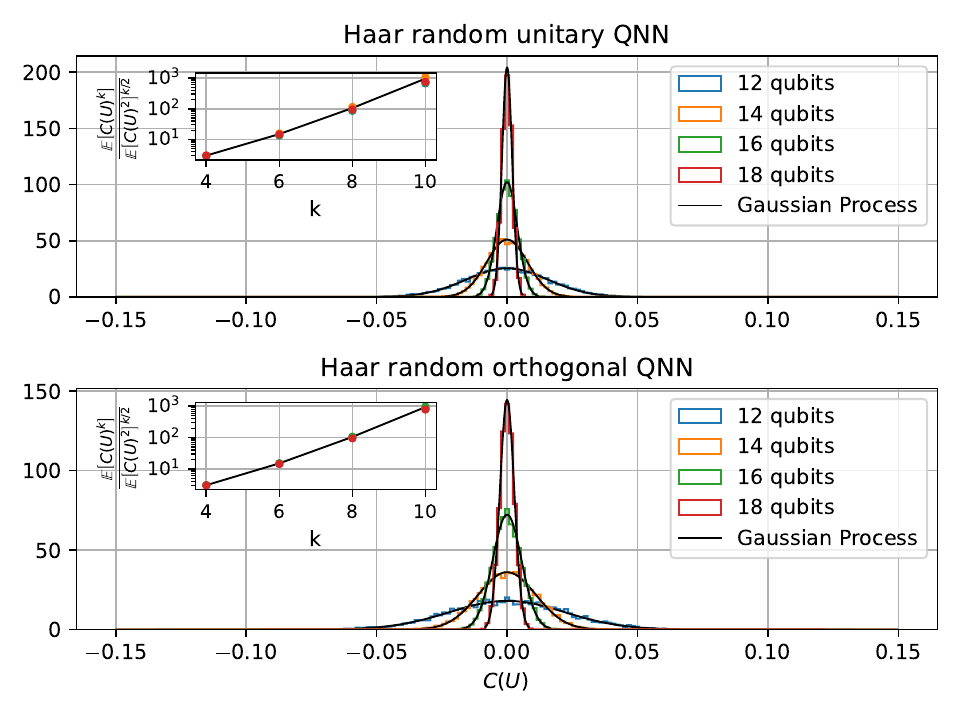}
    \caption{\textbf{Probability density function for $C_j(\rho_i)$, for Haar random  QNNs and different problem sizes.} We consider 
    unitary and orthogonal QNNs with $n$ qubits, and we take $\rho_i=\dya{0}^{\otimes n}$, and $O_j=Z_1$. The colored histograms are built from $10^4$ samples in each case, and the solid black lines represent the corresponding Gaussian distributions $\NC\left(0,\sigma^2\right)$, where $\sigma^2$ is given in Corollary~\ref{cor:gaussian}. The insets show the numerical versus predicted value of $\mathbb{E}[C_j(\rho_i)^k]/\mathbb{E}[C_j(\rho_i)^2]^{k/2}$. For a Gaussian distribution with zero mean,  such quotient is  $\frac{k!}{2^{k/2}(k/2)!}$ (solid black line).}
    \label{fig:Pauli_GP}
\end{figure}

In this section, and the following ones, we will study the implications of  Theorems~\ref{th:gp_main},~\ref{th:gaus-indept} and~\ref{th:GP-negative}. Unless stated otherwise, the corollaries we present can be applied to all considered  datasets (see Table~\ref{tab:train_states}).  

First, we study the univariate probability distribution  $P(C_j(\rho_i))$. 
\begin{corollary}\label{cor:gaussian}
    Let $C_j(\rho_i)$ be the expectation value of a Haar random QNN as in Eq.~\eqref{eq:cij}.   Then, for any $\rho_i\in \mathscr{D}$ and $O_j\in \mathscr{O}$, we have 
    \begin{equation}
        P(C_j(\rho_i))=\NC(0,\sigma^2)\,,
    \end{equation}
 where $\sigma^2=\frac{1}{d},\frac{2}{d}$ when $U$ is Haar random over $\mathbb{U}(d)$ and $\mathbb{O}(d)$, respectively.
\end{corollary}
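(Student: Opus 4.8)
The plan is to treat Corollary~\ref{cor:gaussian} as the single-state specialization of the moment machinery already in place. First I would note that any pure state obeys $\Tr[\rho_i\rho_i]=1$, so the only overlap that ever appears, $\Tr[\rho_i\rho_i]=1$, trivially lies in $\Omega(1/\poly(\log(d)))$ and the hypotheses of Lemma~\ref{lem:moments}(a) are met for the ``dataset'' consisting of $k$ copies of $\rho_i$. That lemma then gives, for even $k$,
\begin{equation}
\mathbb{E}_{\mathbb{U}(d)}\!\left[C_j(\rho_i)^k\right]=\frac{1}{d^{k/2}}\sum_{\sigma\in T_k}\prod_{\{t,t'\}\in\sigma}\Tr[\rho_i\rho_i]=\frac{|T_k|}{d^{k/2}}=\frac{k!}{2^{k/2}(k/2)!}\,\frac{1}{d^{k/2}}\,,
\end{equation}
while the odd moments vanish; the $\mathbb{O}(d)$ moments follow by inserting the extra factor $2^{k/2}$ from the same lemma.

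Second, I would compare these against the moments of a centered Gaussian: for $X\sim\NC(0,\sigma^2)$ the odd moments are zero and, for even $k$, $\mathbb{E}[X^k]=\sigma^k\,(k-1)!!=\sigma^k\,\tfrac{k!}{2^{k/2}(k/2)!}$. Matching term by term forces $\sigma^k=d^{-k/2}$, hence $\sigma^2=1/d$ in the unitary case and, with the extra $2^{k/2}$, $\sigma^2=2/d$ in the orthogonal case. Consistency at $k=2$ is supplied independently by Lemma~\ref{lem:exp-cov}: setting $\Tr[\rho_i\rho_i]=1$ there yields $\vec{\Sigma}^{\mathbb{U}}_{i,i}=\tfrac{1}{d+1}$ and $\vec{\Sigma}^{\mathbb{O}}_{i,i}=\tfrac{2}{d+2}$, which tend to $1/d$ and $2/d$ in the large-$d$ limit, and that lemma also gives the vanishing mean.

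Third, to turn convergence of all moments into convergence in distribution, I would invoke the method of moments together with Carleman's condition, as announced in the Methods: the limiting sequence $m_{2k}=(2k-1)!!\,\sigma^{2k}$ satisfies $\sum_{k}m_{2k}^{-1/(2k)}=\infty$ (since $m_{2k}^{-1/(2k)}$ decays only like $1/\sqrt{k}$), so the Gaussian is the unique probability law with those moments, and therefore $P(C_j(\rho_i))$ converges to $\NC(0,\sigma^2)$ with $\sigma^2=1/d$ or $2/d$. Equivalently, the statement is just Theorem~\ref{th:gp_main} applied to the one-element dataset $\mathscr{D}=\{\rho_i\}$, whose $1\times1$ covariance ``matrix'' equals $\Tr[\rho_i\rho_i]/d=1/d$ (resp.\ $2/d$).

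I do not expect a genuine obstacle here, as all the analytic content is already carried out in Lemmas~\ref{lem:exp-cov} and~\ref{lem:moments}; what remains is the standard combinatorial identity $|T_k|=(k-1)!!$ for the number of perfect matchings of $k$ elements and the Gaussian moment formula. The one step meriting a line of care is verifying that Carleman's condition applies to the limiting moment sequence -- which it does for any Gaussian -- so that moment convergence legitimately upgrades to convergence in distribution rather than merely matching a Gaussian's moments.
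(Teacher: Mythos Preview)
Your proposal is correct and follows essentially the same approach as the paper: specialize Lemma~\ref{lem:moments}(a) to the case of all states equal, obtain $\mathbb{E}[C_j(\rho_i)^k]=\tfrac{k!}{2^{k/2}(k/2)!}\,d^{-k/2}$ (times $2^{k/2}$ for $\mathbb{O}(d)$), identify these as the moments of $\NC(0,1/d)$ (resp.\ $\NC(0,2/d)$), and invoke Carleman's condition to conclude. Your additional remarks---the consistency check at $k=2$ via Lemma~\ref{lem:exp-cov} and the observation that this is the one-point case of Theorem~\ref{th:gp_main}---are correct and not in the paper's proof, but they are supplementary rather than a different route.
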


Corollary~\ref{cor:gaussian} shows that when a single state from $\mathscr{D}$  is sent through the QNN, and a single operator from $\mathscr{O}$ is measured, the outcomes follow a Gaussian distribution with a variance that vanishes inversely proportional to the Hilbert space dimension.  This means that for large problem sizes, we can expect the results to be extremely concentrated around their mean (see below for more details). In Fig.~\ref{fig:Pauli_GP} we compare the predictions from Corollary~\ref{cor:gaussian} to numerical simulations. We find that the simulations match our theoretical results very closely, for both the unitary and the orthogonal groups. Moreover, we can observe that the standard deviation for orthogonal Haar random QNNs is larger than that for unitary ones. In Fig.~\ref{fig:Pauli_GP} we also plot the quotient $\frac{\mathbb{E}[C_j(\rho_i)^k]}{\mathbb{E}[C_j(\rho_i)^2]^{k/2}}$ obtained from our numerics, and we verify that  it follows the value $\frac{k!}{2^{k/2}(k/2)!}$ of a Gaussian distribution.

At this point, it is worth making an important remark. According to Definition~\ref{def:GP}, if $\mathscr{C}$ forms a GP, then any linear combination of its entries will follow a univariate Gaussian distribution. In particular, if $\{C_j(\rho_1),C_j(\rho_2),\ldots,C_j(\rho_m)\}\subseteq \mathscr{C}$, then $P(C_j(\widetilde{\rho}))$ with $\widetilde{\rho}=\sum_{i=1}^m c_i \rho_i$ will be equal to $\NC(0,\widetilde{\sigma}^2)$ for some $\widetilde{\sigma}$. Note that the real-valued coefficients $\{c_i\}_{i=1}^m$ need not be a probability distribution, meaning that $\widetilde{\rho}$ is not necessarily a quantum state. The previous then raises an important question: What happens if $\widetilde{\rho}\propto \id$? A direct calculation shows that $C_j(\widetilde{\rho})=\sum_{i=1}^m  C_j(c_i \rho_i)\propto \Tr[U\id U\ad O_j]=\Tr[O_j]=0$. How can we then unify these two perspectives? On the one hand  $C_j(\widetilde{\rho})$ should be normally distributed, but on the other hand we know that it is always constant. To solve this issue, we note that the only dataset we considered where the identity can be constructed is the one where $\Tr[\rho_i\rho_{i'}]=0$ for all $i\neq i'$\footnote{This follows from the fact that if $\mathscr{D}$ contains a complete basis then for any $\widetilde{\rho}\in \mathscr{D}^\perp$, one has that if $\Tr[\widetilde{\rho}\rho_i]=0$ for all $\rho_i\in\mathscr{D}$, then $\widetilde{\rho}=0$. Here,  $\mathscr{D}^\perp$ denotes the kernel of the projector onto the subspace spanned by the vectors in $\mathscr{D}$.  }. In that case, we can leverage Theorem~\ref{th:GP-negative} along with the identity $\widetilde{\sigma}^2=
\Var_{G}[\sum_{i=1}^dC_j(\rho_i)]=\sum_{i,i'}{\rm Cov}_{G}[C_j(\rho_i),C_j(\rho_{i'})]$  to explicitly prove that  $\Var_G[\sum_{i=1}^dC_j(\rho_i)]=0$ (for $G=\mathbb{U}(d),\mathbb{O}(d)$).  Hence, we find a zero-variance Gaussian distribution, i.e., a delta distribution in the QNN's outcomes (as expected).

\subsection{Predictive power of the GP for qubit systems}

Consider we are given a (potentially continuous) set $\mathcal{D}$ of $n$-qubit states  and the following task, divided into two phases. In a first \textit{data-aquisition phase}, one is allowed to send some states from $\mathcal{D}$ through some fixed unknown unitary $V$ acting on $n'\leq n$ qubits, perform measurements, and record the outcomes. Crucially, the unitary $V$ need not be Haar random, but could be, in principle,  any unitary (even the identity). Then, during a second \textit{prediction phase}, access to $V$ is no longer granted, but one has to predict the value of $\Tr[V\rho_iV^\dagger O]$ for any  $\rho_i\in\mathcal{D}$, where $O$ is some fixed Pauli string acting on the $n'$ qubits. 

While one could opt for some tomographic approach to learn $V$ and solve the previous task,  our work enables the use of the predictive power of the GP. In particular, one starts with the prior that $V$ could be any unitary in $\mathbb{U}(2^{n'})$, and therefore the probability distribution of $\Tr[V\rho_iV^\dagger O]$ is a univariate Gaussian  as per our main theorems (assuming $\DC$ satisfies the appropriate conditions). In the first stage, one  measures the expectation value $\Tr[V\rho_iV\ad O]$ for some training set $\mathscr{D}\subset\DC$. Then,  during the second stage one computes the overlaps between the states in $\mathscr{D}$ (to build the covariance matrix), as well as with any new state from $\DC$ on which we wish to apply the predictive power of the GP. These measurements can then  be used to update the prior and make predictions (see the  Methods for details on  this procedure). 

As evidenced from Lemma~1, the entries of the covariance matrix are  suppressed as $2^{n'}$, i.e., exponentially in $n'$. Hence, and as explained in the Methods, this implies that if $V$ acts on all $n$ qubits (or on $\Theta(n)$ qubits), then an exponential number of measurements will be needed in order to  use Bayesian inference to learn any information about new outcomes given previous ones. However, the situation becomes much more favorable if the QNN acts on $n'\in\mathcal{O}(\log(n))$ qubits, as here only a polynomial number of measurements are needed to use the GP's predictive power (provided that the overlaps between the $n'$-qubit quantum states are not super-polynomially vanishing in $n$). 
In fact, we show in Fig.~\ref{fig:4} simulations on up to $n=200$ qubits where a GP is used as a regression tool to interpolate/extrapolate and accurately predict measurement results at the output of a quantum dynamical process (see the Methods for details).  

\begin{figure}[t]
    \centering
    \includegraphics[width=\linewidth]{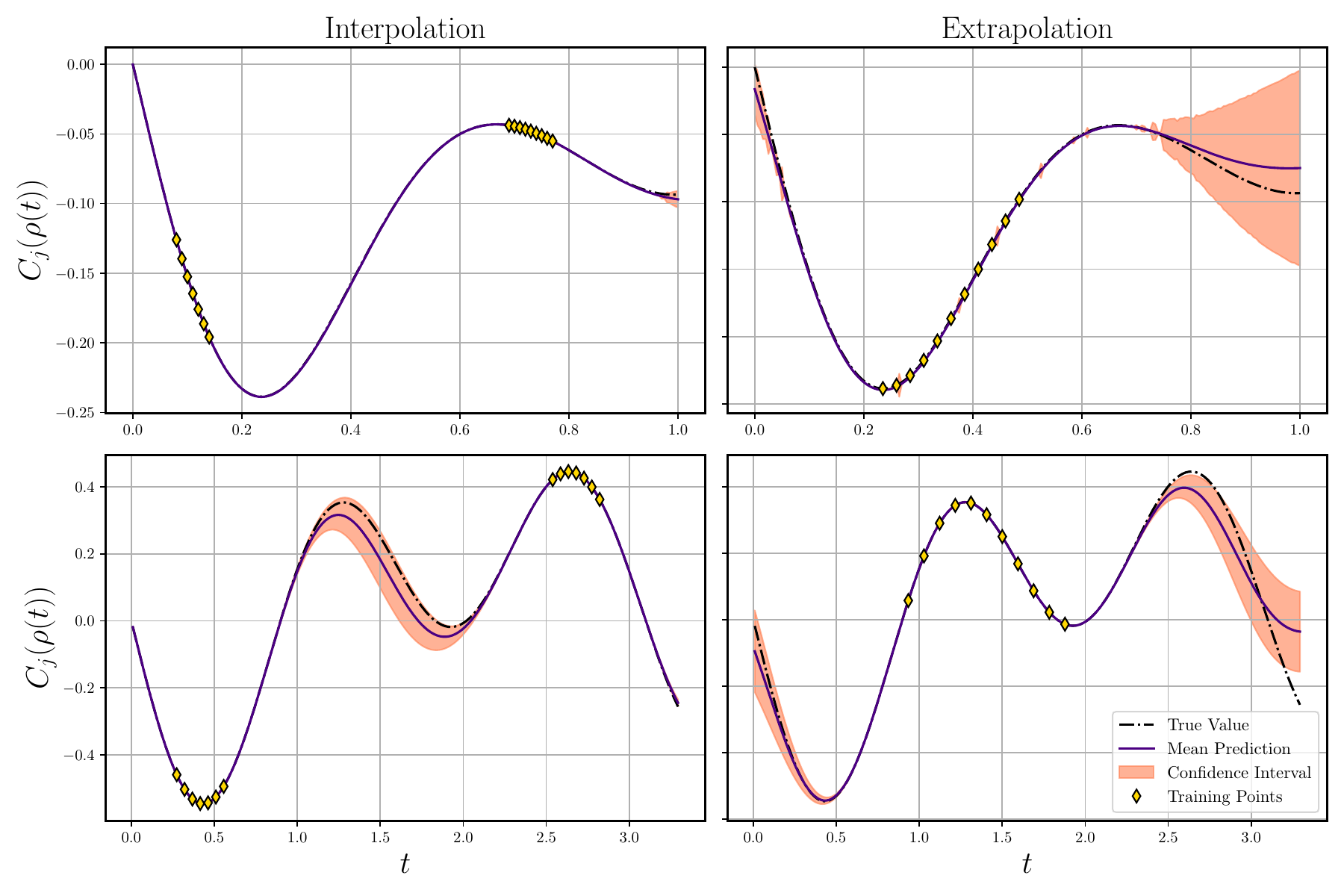}
    \caption{{\bf Quantum GP regression.}  The plots show the time evolution of two \emph{local} random Pauli operators of an $n$-qubit system under an $XY$ Hamiltonian with random transverse fields in one (bottom panels, $n=200$) and two spatial dimensions (top panels, $n=25$). The details can be found in the Methods. The dots are used as observations from which predictions are inferred. The latter correspond to the solid purple line, with the shaded regions indicating a two-sigma ($\sim95\%$) confidence interval. We also plot the true value (black line) for reference.}
    \label{fig:4}
\end{figure}

\subsection{Concentration of measure}

In this section, we show that Corollary~\ref{cor:gaussian}  provides a more precise characterization of the concentration of measure and the barren plateau phenomena for Haar random circuits than that found in the literature~\cite{mcclean2018barren,cerezo2020cost,marrero2020entanglement,patti2020entanglement,holmes2020barren,arrasmith2021equivalence,popescu2006entanglement}. First, it implies that deep orthogonal QNNs  will  exhibit barren plateaus, a result not previously known.  Second, we recall that in standard barren plateau analyses, one only looks at the first two moments of the distribution of cost values $C_j(\rho_i)$ (or, similarly, of gradient values $\partial_{\theta} C_j(\rho_i)$). Then one uses  Chebyshev's inequality, which states that for any $c>0$, the probability $P(|X|\geq c)\leq \frac{\Var[X]}{c^2}$, to prove that  $P(|C_j(\rho_i)|\geq c)$ and  $P(|\partial_{\theta}C_j(\rho_i)|\geq c)$ are in $\OC(\frac{1}{d})$~\cite{cerezo2020cost,arrasmith2021equivalence}. However, 
 having a full characterization of $P(C_j(\rho_i))$ allows us to compute  tail probabilities and obtain a much tighter bound. For instance, for $U$ being Haar random over $\mathbb{U}(d)$, we find
\begin{corollary}\label{cor:double}
    Let $C_j(\rho_i)$ be the expectation value of a Haar random QNN as in Eq.~\eqref{eq:cij}. Assuming that there exists a parametrized gate in $U$ of the form $e^{-i \theta H}$ for some Pauli operator $H$, then 
    \begin{equation}
       P(|C_j(\rho_i)|\geq c), \,P(|\partial_{\theta}C_j(\rho_i)|\geq c)\in\OC\left(\frac{1}{ce^{dc^2}\sqrt{d}}\right)\,.\nonumber
    \end{equation}
\end{corollary}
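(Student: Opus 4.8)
The plan is to exploit that Corollary~\ref{cor:gaussian} delivers the \emph{entire} distribution of $C_j(\rho_i)$ — a zero-mean Gaussian of variance $\sigma^2=1/d$ for $\mathbb{U}(d)$ — not just its first two moments. Standard barren-plateau arguments only have $\mathbb{E}[C_j(\rho_i)]$ and $\Var[C_j(\rho_i)]$ available and must therefore invoke Chebyshev's inequality, which already caps the tail at $\mathcal{O}(1/(c^2 d))$. Knowing the law is exactly Gaussian lets us instead apply a sharp Gaussian (Mills-ratio) tail bound,
\begin{equation}
\Pro(|X|\ge c)\le \frac{2\sigma}{c\sqrt{2\pi}}\,e^{-c^2/(2\sigma^2)}\,,\qquad X\sim\NC(0,\sigma^2)\,,
\end{equation}
and substituting $\sigma^2\in\Theta(1/d)$ produces a tail of the claimed form $\mathcal{O}\!\left(\frac{1}{c\sqrt d}\,e^{-\Theta(dc^2)}\right)$. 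For the cost value $C_j(\rho_i)$ this is the entire argument.

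For the gradient, the strategy is to reduce $\partial_\theta C_j(\rho_i)$ to a linear combination of cost values on \emph{orthonormal} states, so that Theorem~\ref{th:GP-negative} applies. Writing the QNN as $U=W\,e^{-i\theta H}\,V$, with the post-gate block $W$ carrying the Haar randomness, a direct differentiation gives
\begin{equation}
\partial_\theta C_j(\rho_i)=-i\,\Tr\!\big[W\,[H,\sigma_i]\,W\ad\,O_j\big]\,,\qquad \sigma_i:=e^{-i\theta H}\,V\rho_i V\ad\,e^{i\theta H}\,,
\end{equation}
where $\sigma_i$ is pure since $\rho_i\in\mathscr{D}$ is. Because $H$ is a Pauli ($H^2=\id$), the traceless Hermitian operator $A:=-i[H,\sigma_i]$ maps into the two-dimensional subspace $\spn\{\ket{\phi_i},H\ket{\phi_i}\}$ (with $\sigma_i=\dya{\phi_i}$), so it has eigenvalues $\pm\lambda$ with $\lambda=\sqrt{1-\Tr[H\sigma_i]^2}\le 1$, and mutually orthonormal eigenvectors $\{v_+,v_-\}$. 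Hence
\begin{equation}
\partial_\theta C_j(\rho_i)=\lambda\,\big(C_j(\dya{v_+})-C_j(\dya{v_-})\big)\,.
\end{equation}
Since $\Tr[\dya{v_+}\dya{v_-}]=0$, Theorem~\ref{th:GP-negative} (with $k=2$) certifies that $(C_j(\dya{v_+}),C_j(\dya{v_-}))$ is jointly Gaussian, so by Definition~\ref{def:GP} the combination above is univariate Gaussian with mean zero and, using the covariances of Theorem~\ref{th:GP-negative}, variance $\lambda^2\big(\tfrac{1}{d+1}+\tfrac{1}{d+1}+\tfrac{2}{d^2-1}\big)=\lambda^2\,\tfrac{2d}{d^2-1}\in\mathcal{O}(1/d)$. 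Feeding this into the Gaussian tail bound closes the proof; the $\mathbb{O}(d)$ case is identical up to the factor-of-two in the covariances.

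The only genuinely delicate part is the gradient reduction, where two points need care. First, one should state explicitly that the sub-circuit $W$ acting after the gate $e^{-i\theta H}$ is treated as Haar random over $\mathbb{U}(d)$ — the usual hypothesis underlying barren-plateau results — which is what makes the differentiation above meaningful. Second, I would route the reduction through the commutator $[H,\sigma_i]$ and its rank-two eigendecomposition rather than through the parameter-shift identity $\partial_\theta C_j(\rho_i)=C_j^{(\theta+\pi/4)}(\rho_i)-C_j^{(\theta-\pi/4)}(\rho_i)$: there the two shifted input states have overlap $\Tr[H\sigma_i]^2$, which for a generic state falls strictly between the three regimes of Table~\ref{tab:train_states}, so none of Theorems~\ref{th:gp_main},~\ref{th:gaus-indept},~\ref{th:GP-negative} would directly certify joint Gaussianity. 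The commutator route avoids this because eigenvectors of a Hermitian operator for distinct eigenvalues are automatically orthogonal, placing us exactly in the $\Tr[\rho_i\rho_{i'}]=0$ case of Theorem~\ref{th:GP-negative}. Everything else — the commutator identity, the rank/eigenvalue bookkeeping for $A$, and the final elementary tail estimate — is routine.
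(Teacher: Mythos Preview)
Your treatment of $P(|C_j(\rho_i)|\ge c)$ matches the paper's exactly: both invoke Corollary~\ref{cor:gaussian} and then the Mills-ratio/\,${\rm Erfc}$ tail bound for a Gaussian.

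For the gradient, however, the paper takes a different and simpler route. It uses the parameter-shift identity $\partial_\theta C_j(\rho_i)=C_j^+(\rho_i)-C_j^-(\rho_i)$ and then a \emph{union bound}: the event $|\partial_\theta C_j|\ge c$ is contained in $\{|C_j^+|\ge c/2\}\cup\{|C_j^-|\ge c/2\}$, and each summand is bounded by the Gaussian tail from Corollary~\ref{cor:gaussian} applied at the shifted parameter point (which is still Haar under the same hypothesis you articulate for $W$). Crucially, this needs only the \emph{marginal} Gaussianity of $C_j^\pm$, so the obstacle you flag---that the overlap of the two shifted input states does not fall into any of the three dataset regimes of Table~\ref{tab:train_states}, so joint Gaussianity of $(C_j^+,C_j^-)$ is not certified by Theorems~\ref{th:gp_main}--\ref{th:GP-negative}---never arises. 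Your dismissal of the parameter-shift route is therefore based on a requirement the paper does not actually need.

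That said, your commutator approach is correct and in fact buys more than the paper's. By writing $-i[H,\sigma_i]=\lambda(\dya{v_+}-\dya{v_-})$ with $\langle v_+|v_-\rangle=0$ and invoking Theorem~\ref{th:GP-negative}, you establish that $\partial_\theta C_j(\rho_i)$ is \emph{itself} exactly Gaussian with an explicit variance $2\lambda^2 d/(d^2-1)\le 2/d$, rather than merely bounding its tail by twice a Gaussian tail. So the paper's argument is shorter and uses less machinery (only Corollary~\ref{cor:gaussian}), while yours yields a sharper structural statement about the gradient's law.
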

Corollary~\ref{cor:double} indicates that the QNN outputs, and their  gradients, actually  concentrate with a probability which vanishes exponentially with $d$. In an $n$-qubit system, where $d=2^n$, then $P(|C_j(\rho_i)|\geq c)$ and $P(|\partial_{\theta}C_j(\rho_i)|\geq c)$ are doubly exponentially vanishing with $n$. The tightness of our bound arises from the fact that Chebyshev's inequality is loose for highly narrow Gaussian distributions. Moreover, our bound is also tighter than that provided by Levi's lemma~\cite{popescu2006entanglement}, as it includes an extra $\OC\left(\frac{1}{\sqrt{d}}\right)$ factor. Corollary~\ref{cor:double} also implies that the  narrow gorge region of the landscape~\cite{arrasmith2021equivalence}, i.e., the  fraction of non-concentrated $C_j(\rho_i)$ values, also decreases exponentially with $d$. 

In the Methods we furthermore show how our results can be used to study the concentration of functions of QNN outcomes, e.g., standard loss functions used in the literature, like the mean-squared error.

\subsection{Implications for $t$-designs}

We now note that our results allow us to characterize the output distribution for QNN's that form $t$-designs, i.e., for QNNs whose unitary  distributions have the same properties up to the first $t$ moments as sampling random unitaries from $\mathbb{U}(d)$ with respect to the Haar measure. With this in mind, one can readily see that the following corollary holds.
\begin{corollary}\label{cor:t-design}
    Let $U$ be drawn from a $t$-design. Then, under the same conditions for which Theorems~\ref{th:gp_main},~\ref{th:gaus-indept} and~\ref{th:GP-negative} hold, the vector $\mathscr{C}$ matches the first $t$ moments of a GP.
\end{corollary}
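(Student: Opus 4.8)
The plan is to combine the moment-matching property that \emph{defines} a $t$-design with Lemma~\ref{lem:moments}, which has already identified the large-$d$ Haar moments of $\mathscr{C}$ with those of the relevant multivariate Gaussian. First I would note that each component $C_j(\rho_i)=\Tr[U\rho_i U\ad O_j]$ is a balanced polynomial in the matrix elements of $U$: homogeneous of degree one in the entries of $U$ and of degree one in those of $U^{*}$. Consequently, for any $k$ and any choice of states $\rho_{i_1},\dots,\rho_{i_k}$ (with repetition) among those appearing in $\mathscr{C}$, the joint moment $\mathbb{E}_{\nu}[C_j(\rho_{i_1})\cdots C_j(\rho_{i_k})]$ --- with $\nu$ the measure of the $t$-design --- can be written as $\Phi_{i_1,\dots,i_k}\big(\mathbb{E}_{\nu}[U^{\otimes k}\otimes (U^{*})^{\otimes k}]\big)$ for a fixed, $U$-independent linear functional $\Phi_{i_1,\dots,i_k}$ assembled from $\bigotimes_a \rho_{i_a}$, $O_j^{\otimes k}$, and a permutation reshuffling the ``bra'' and ``ket'' legs (using $U\ad=\overline{U}^{T}$ to trade daggers for conjugates).

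Next I would invoke the definition of a $t$-design, $\mathbb{E}_{\nu}[U^{\otimes t}\otimes (U^{*})^{\otimes t}]=\mathbb{E}_{\mathbb{U}(d)}[U^{\otimes t}\otimes (U^{*})^{\otimes t}]$, together with the standard fact that a $t$-design is a $k$-design for every $k\leq t$ (contract $t-k$ tensor factors with the identity). Applying $\Phi_{i_1,\dots,i_k}$ to both sides then shows that every joint moment of $\mathscr{C}$ of total order at most $t$, taken over the $t$-design, equals the corresponding Haar moment exactly. The argument closes by appealing to Lemma~\ref{lem:exp-cov} (orders $1$ and $2$), Lemma~\ref{lem:moments} (general order), and Theorems~\ref{th:gp_main},~\ref{th:gaus-indept}, and~\ref{th:GP-negative}: under the stated hypotheses on $\mathscr{D}$, in the limit $d\to\infty$ these Haar moments coincide with the moments of the centered multivariate Gaussian $\NC(\vec{0},\vec{\Sigma})$ with $\vec{\Sigma}$ as in the pertinent theorem (odd moments vanish; even moments are the Wick-type pairing sums). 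Chaining the two equalities yields that the first $t$ moments of $\mathscr{C}$ under a $t$-design match those of $\NC(\vec{0},\vec{\Sigma})$, which is the claim.

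The step I expect to require the most care is not computational but a matter of interpretation and bookkeeping: one must read ``matches the first $t$ moments of a GP'' as ``all joint moments of total order at most $t$ agree,'' and verify that the single tensor identity for $U^{\otimes t}\otimes (U^{*})^{\otimes t}$ does simultaneously control all of them --- including the cross-moments among distinct entries of $\mathscr{C}$, which it does, since those arise simply from different choices of the $\rho_{i_a}$. I would also state plainly what the argument does \emph{not} give: a $t$-design fixes only finitely many moments, so Carleman's condition is unavailable and one cannot upgrade the conclusion to ``$\mathscr{C}$ is a GP'' --- precisely the gap between Corollary~\ref{cor:t-design} and Theorems~\ref{th:gp_main}--\ref{th:GP-negative}. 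For an $\varepsilon$-approximate $t$-design the equalities above become approximate, with an additive error of order $\varepsilon\,\|\Phi_{i_1,\dots,i_k}\|$, hence controlled by $\varepsilon$, $k$, and the operator norms of the $\rho_{i_a}$ and $O_j$; I would flag this extension but not carry out the estimate.
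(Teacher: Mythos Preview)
Your proposal is correct and follows the same approach as the paper: invoke the defining moment-matching property of a $t$-design to identify the first $t$ joint moments of $\mathscr{C}$ with their Haar counterparts, then cite Theorems~\ref{th:gp_main}--\ref{th:GP-negative} (via Lemmas~\ref{lem:exp-cov} and~\ref{lem:moments}) to identify those Haar moments with the GP moments. The paper's own proof is a two-sentence ``by definition'' remark; your version simply makes explicit why the $t$-design identity on $\mathbb{E}[U^{\otimes k}\otimes (U^{*})^{\otimes k}]$ controls all joint moments of order $k\leq t$, and adds the (correct) caveat that Carleman is unavailable here.
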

Corollary~\ref{cor:t-design} extends our results beyond the strict condition of the QNN being Haar random to being a $t$-design, which is a more realistic assumption~\cite{harrow2009random,harrow2018approximate,haferkamp2022random}. In particular, we can study the concentration phenomenon in $t$-designs: using an extension of Chebyshev's inequality to higher order moments leads to $P(|C_j(\rho_i)|\geq c), \,P(|\partial_{\theta}C_j(\rho_i)|\geq c)\in\OC\left(\frac{\left(2\left\lfloor\frac{t}{2}\right\rfloor\right)!}{2^{\left\lfloor\frac{t}{2}\right\rfloor }(dc^2)^{ \left\lfloor\frac{t}{2}\right\rfloor}\left(\left\lfloor\frac{t}{2}\right\rfloor\right)!}\right)$ (see the SI for a proof). Note that for $t=2$ we recover the known concentration result for barren plateaus, but for $t\geq 4$ we obtain new polynomial-in-$d$-tighter bounds.

\section{ Discussion and Outlook}

In this manuscript we have shown that under certain conditions, the output distribution of deep Haar random QNNs converges to a Gaussian process in the limit of large Hilbert space dimension. While this result had been conjectured in~\cite{liu2021representation}, a formal proof was still lacking. We remark that although our result mirrors its classical counterpart --that certain classical NNs form GPs--, there exist nuances that differentiate our findings from the classical case. For instance, we need to make assumptions on the states processed by the QNN, as well as on the measurement operator. Moreover, some of these assumptions are unavoidable, as Haar random QNNs will not necessarily always converge to a GP. That is, not all QNNs  and all measurements will lead to a GP. As an example, we have that if $O_j$ is a projector onto a computational basis state, then one recovers a Porter-Thomas distribution~\cite{porter1956fluctuations}. Ultimately, these subtleties arise because the entries of unitary matrices are not independent. In contrast, classical NNs are not subject to this constraint.

It is worth noting that our theorems have further implications beyond those discussed here. First and foremost, the fact that GPs can be efficiently used for regression in certain cases paves the way for new and exciting research avenues at the intersection of quantum information and Bayesian learning. Moreover, we envision that our methods and results will be useful in more general settings where Haar random unitaries or $t$-designs are considered, such as quantum scramblers and black holes~\cite{hayden2007black,oliviero2022black,holmes2020barren}, many-body physics~\cite{nahum2018operator}, quantum decouplers and quantum error correction~\cite{brown2015decoupling}. Finally, we leave for future work the study of whether GPs arise in other architectures, such as matchgate circuits~\cite{jozsa2008matchgates,diaz2023showcasing}.

\section{Methods}

\subsection{Sketch of the proof of our main results}

Since our main results are mostly based on Lemmas~\ref{lem:exp-cov} and~\ref{lem:moments}, we will here outline the main steps used to prove these lemmas. In particular, to prove them,  we need to calculate, in the large-$d$ limit, quantities of the form
\beq \label{eq:moments} 
\mathbb{E}_{G}\left[\Tr\left[U^{\otimes k} \Lambda  (U\ad)^{ \otimes k} O^{\otimes k}\right] \right] \,, 
\eeq 
for arbitrary $k$,  and for $G=\mathbb{U}(d),\mathbb{O}(d)$. Here, the operator $\Lambda$ is defined as $\Lambda=\rho_{i_1}\otimes\cdots\otimes\rho_{i_k}$, where the states $\rho_{i}$ belong to $\mathscr{D}$, and where $O$ is an operator in $\mathscr{O}$. The first moment ($k=1$), $\vec{\mu}$, and the second moments ($k=2$), $\vec{\Sigma}^G_{i,i'}$ can be directly computed using standard formulas for integration over the unitary and orthogonal groups (see the SI). This readily recovers the results in Lemma~\ref{lem:exp-cov}. However, for larger $k$, a direct computation quickly becomes intractable, and we need to resort to asymptotic Weingarten calculations. More concretely, let us exemplify our calculations for the unitary group and for the case when the states in the dataset are such that $\Tr[\rho_{i}\rho_{i'}]\in\Omega\left(\frac{1}{\poly(\log(d))}\right)$ for all $\rho_i,\rho_{i'}\in\mathscr{D}$. As shown in the SI, we can  prove the following lemma.

\begin{lemma}\label{lem:twirl_u}
Let $X$ be an operator in $\BC(\HC^{\otimes k})$, the set of bounded linear operators acting on the $k$-fold tensor product of a $d$-dimensional Hilbert space $\HC$. Let $S_k$ be the symmetric group on $k$ items, and let $P_d$ be the subsystem permuting representation of $S_k$ in $\HC^{\otimes k}$. Then, for large Hilbert space dimension ($d\rightarrow\infty$), the twirl of $X$ over $\mathbb{U}(d)$ is
\begin{equation}
\begin{split}
    \mathbb{E}_{\mathbb{U}(d)}[U^{\otimes k}X(U\ad)^{\otimes k}]=&\frac{1}{d^k}\sum_{\sigma\in S_k}\Tr[XP_d(\sigma)]P_d(\sigma^{-1})\\&+\frac{1}{d^k}\sum_{\sigma,\pi\in S_k}c_{\sigma,\pi}\Tr[XP_d(\sigma)]P_d(\pi)\,,\nonumber\end{split}
\end{equation}
where the constants $c_{\sigma,\pi}$ are in $\OC(1/d)$.    
\end{lemma}
We recall that the 
subsystem permuting representation of a permutation $\sigma\in S_k$ is 
\begin{equation}\label{eq:rep-S_k-main}
P_d(\sigma)=\sum_{i_1,\dots,i_k=0}^{d-1} |i_{\sigma^{-1}(1)},\dots,i_{\sigma^{-1}(k)} \ra\la i_1,\dots,i_k|\,.
\end{equation}

Lemma~\ref{lem:twirl_u} implies that~\eqref{eq:moments} is equal to

\beq \label{eq:moments_weingarten_u} \begin{split} \mathbb{E}_{\mathbb{U}(d)}&\left[\Tr\left[U^{\otimes k} \Lambda  (U\ad)^{ \otimes k} O^{\otimes k}\right] \right]\\ &=\frac{1}{d^k}\sum_{\sigma\in S_k}\Tr[\Lambda P_d(\sigma)]\Tr[P_d(\sigma^{-1})O^{\otimes k} ] \\& +\frac{1}{d^k}\sum_{\sigma,\pi\in S_k}c_{\sigma,\pi}\Tr[\Lambda P_d(\sigma)]\Tr[P_d(\pi)O^{\otimes k}] \,.\end{split} \eeq 
We now note that, by definition, since $O$ is traceless and such that $O^2=\id$, then $\Tr[P_d(\sigma)O^{\otimes k}]=0$ for odd $k$ (and for all $\sigma$). This result  implies that all the odd moments are exactly zero, and also that the non-zero contributions in Eq.~\eqref{eq:moments_weingarten_u} for the even moments come from permutations consisting of cycles of even length. We remark that as a direct consequence, the first  moment, $\mathbb{E}_{\mathbb{U}(d)}\left[ \Tr[U\rho_{i}U\ad O] \right]$, is zero for any $\rho_{i}\in \mathscr{D}$, and thus we have $\vec{\mu}=\vec{0}$. 
To compute higher moments, we show that $\Tr[P_d(\sigma)O^{\otimes k} ]=d^{r}$ if  $k$ is even and $\sigma$ is a product of $r$ disjoint cycles of even length. The maximum of $\Tr[P_d(\sigma)O^{\otimes k} ]$ is therefore achieved when $r$ is maximal, i.e., when $\sigma$ is a product of $k/2$ disjoint transpositions (cycles of length two), leading to $\Tr[P_d(\sigma)O^{\otimes k} ]=d^{k/2}$.
Then, we look at the factors $\Tr[\Lambda P_d(\sigma)]$ and include them in the analysis. We have that for all $\pi$ and $\sigma$ in $S_k$,
\begin{align} 
    \frac{1}{d^k}  \Big|(c_{\sigma,\pi}&\Tr[\Lambda P_d(\sigma)]\Tr[P_d(\pi)O^{\otimes k}]\\+c_{\sigma^{-1},\pi}&\Tr[\Lambda P_d(\sigma^{-1})])\Tr[P_d(\pi)O^{\otimes k}]\Big| \in\OC\left(\frac{1}{d^{\frac{k+2}{2}}}\right)\,. \nonumber
\end{align}
Moreover, since $\Tr[\rho_{i}\rho_{i'}]\in\Omega\left(\frac{1}{\poly(\log(d))}\right)$ for all pair of states $\rho_i,\rho_{i'}\in\mathscr{D}$, it holds that if $\sigma$ is a product of $k/2$ disjoint transpositions, then 
     \begin{equation}
         \frac{1}{d^k}\Tr[\Lambda P_d(\sigma)]\Tr[P_d(\sigma^{-1})O^{\otimes k} ]\in\widetilde{\Omega}\left(\frac{1}{d^{k/2}}\right)\,,
     \end{equation}
      where the $\widetilde{\Omega}$ notation omits $\poly(\log(d))^{-1}$ factors, whereas
     \begin{equation} \begin{split}
         \frac{1}{d^k}\Big|&\Tr[\Lambda P_d(\sigma)]\Tr[P_d(\sigma^{-1})O^{\otimes k} ]+\\ &\Tr[\Lambda P_d(\sigma^{-1})]\Tr[P_d(\sigma)O^{\otimes k} ]\Big|\in\OC\left(\frac{1}{d^{\frac{k+2}{2}}}\right) \,,\end{split}
     \end{equation}
     for any other $\sigma$. We remark that if $\sigma$ consist only of transpositions, then it is its own inverse, that is, $\sigma=\sigma^{-1}$.

It immediately follows that for fixed $k$ and $d\rightarrow\infty$, the second sum in Eq.~\eqref{eq:moments_weingarten_u} is suppressed at least inversely proportional to the dimension of the Hilbert space with respect to the first one (i.e. exponentially in the number of qubits for QNNs made out of qubits)\footnote{We remark that  as long as $k$ scales with $d$ as $\mathcal{O}(\log \log d)$, our asymptotic analysis and hence the convergence to a GP are still valid.
This can be seen from the fact that there are $k! - \frac{k!}{2^{k/2}(k/2)!}$ permutations that are not the product of disjoint transpositions. 
Hence,  we find
$
\frac{k! - \frac{k!}{2^{k/2}(k/2)!}}{\frac{k!}{2^{k/2}(k/2)!}} = \frac{1- \frac{1}{2^{k/2}(k/2)!}}{\frac{1}{2^{k/2}(k/2)!}} \approx 2^{k/2}(k/2)!\approx  2^{k/2} \sqrt{\pi \log\log d} \left(\frac{ \log\log d}{e}\right)^{\log\log d}  \nonumber  < \sqrt{\log\log d} \,(\log\log d)^{\log\log d} = \sqrt{\log\log d} \, (\log d)^{\log\log\log d}\,,
$
where we used Stirling approximation for the factorial, and replaced $k=\log\log d$. As this ratio is quasi-polynomial in $\log d$, but all contributions that arise from permutations that are not the product of disjoint transpositions are suppressed as $\mathcal{O}(\frac{1}{d})$, the conclusion  follows.}. Likewise, the contributions in the first sum in~\eqref{eq:moments_weingarten_u} coming from permutations that are not the product of $k/2$ disjoint transpositions are also suppressed at least inversely proportional to the Hilbert space dimension. Therefore, in the large-$d$ limit we arrive at 
\small
\begin{equation}
 \label{eq:gp_moments-met} 
\mathbb{E}_{\mathbb{U}(d)}\left[\Tr\left[U^{\otimes k} \Lambda  (U^{\dagger})^{\otimes k} O^{\otimes k}\right] \right]= \frac{1}{d^{k/2}}\sum_{\sigma\in T_{k}}\prod_{\{t,t'\}\in \sigma} \Tr[\rho_{t}\rho_{t'}] \,, \end{equation}
\normalsize
where we have defined $T_k\subseteq S_k$ to be the subset of permutations which are exactly given by a product of $k/2$ disjoint transpositions. Note that this is precisely the statement in Lemma~\ref{lem:moments}.

From here we can easily see that if every state in $\Lambda$ is the same, i.e., if $\rho_{i_\lambda}=\rho$ for $\lambda=1,\ldots,k$, then $\Tr[\rho_{t}\rho_{t'}]=1$ for all $t,t'$, and we need to count how many terms are there in Eq.~\eqref{eq:gp_moments-met}. Specifically, we need to count how many different ways there exist to split $k$ elements into pairs (with $k$ even). A straightforward calculation shows that 
\begin{equation}
    \sum_{\sigma\in T_{k}}\prod_{\{t,t'\}\in \sigma} 1=\frac{1}{(k/2)!}\binom{k}{2,2,\dots,2}  = \frac{k!}{2^{k/2} (k/2)!} \,.
\end{equation} 
Therefore, we arrive at 
\begin{equation}
\mathbb{E}_{\mathbb{U}(d)}\left[\Tr\left[U^{\otimes k} \Lambda  (U^{\dagger})^ {\otimes k} O^{\otimes k}\right] \right]= \frac{1}{d^{k/2}} \frac{k!}{2^{k/2}(k/2)!}\,.
\end{equation} 
Identifying $\sigma^2=\frac{1}{d}$ implies that the moments $\mathbb{E}_{\mathbb{U}(d)}\left[\Tr\left[U\rho U\ad O\right]^k \right]$ exactly match those of a Gaussian distribution $\NC(0,\sigma^2)$. 

To prove that these moments unequivocally  determine the distribution of $\mathscr{C}$, we use Carleman's condition.

\begin{lemma}[Carleman's condition, Hamburger case~\cite{kleiber2013multivariate}]\label{lem:carleman-main}
Let $\gamma_k$ be the (finite) moments of the distribution of a random variable $X$ that can take values on the real line $\mathbb{R}$. These moments determine uniquely the distribution of $X$ if
\beq \sum_{k=1}^\infty \gamma_{2k}^{-1/2k} = \infty \ .\eeq
\end{lemma}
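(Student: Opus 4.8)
The plan is to prove the statement in its measure-theoretic form: if $\mu$ is a probability measure on $\mathbb{R}$ with finite moments $\gamma_k=\int x^k\,d\mu$, and $\nu$ is another probability measure with the \emph{same} moments $\gamma_k$ for all $k\geq 0$, then $\sum_{k=1}^\infty \gamma_{2k}^{-1/2k}=\infty$ forces $\mu=\nu$. First I would set $\rho=\mu+\nu$, a finite measure, and record that $M_n:=\int |x|^n\,d\rho<\infty$ for every $n$: the even ones are $M_{2n}=\int x^{2n}\,d\mu+\int x^{2n}\,d\nu=2\gamma_{2n}$, and the odd ones are finite by the Cauchy--Schwarz (Lyapunov) inequality $M_{2n+1}\leq (M_{2n}M_{2n+2})^{1/2}$, which also shows that $(M_n)$ is log-convex. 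If $\rho$ has bounded support the claim is immediate from the Weierstrass approximation theorem, so I may assume unbounded support, in which case $M_n^{1/n}\to\infty$.

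Next I would pass to Fourier transforms. Consider $\psi(t)=\int e^{itx}\,(d\mu(x)-d\nu(x))$, the Fourier transform of the finite signed measure $\mu-\nu$. Since $\mu,\nu$ are probability measures, $|\psi(t)|\leq 2$; since all moments of $\rho$ are finite, $\psi\in C^\infty$ with $\psi^{(n)}(t)=\int (ix)^n e^{itx}\,(d\mu-d\nu)$, hence $|\psi^{(n)}(t)|\leq M_n$ uniformly in $t$, and $\psi^{(n)}(0)=i^n(\gamma_n-\gamma_n)=0$ for all $n$. Thus $\psi$ lies in the Carleman class $C\{M_n\}$ of smooth functions whose $n$-th derivative is bounded by $M_n$ everywhere, and \emph{all} of its derivatives vanish at the single point $t=0$. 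Because $M_{2n}^{-1/2n}=(2\gamma_{2n})^{-1/2n}$ and $2^{-1/2n}\to 1$, the hypothesis $\sum_k\gamma_{2k}^{-1/2k}=\infty$ is equivalent to $\sum_n M_{2n}^{-1/2n}=\infty$, and since these terms appear among $\{M_m^{-1/m}\}_{m\geq1}$ (and $(M_n)$ is log-convex, so $M_n^{1/n}$ is nondecreasing), this gives $\sum_n M_n^{-1/n}=\infty$, which is precisely the Denjoy--Carleman condition for the class $C\{M_n\}$ to be quasi-analytic.

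Finally I would invoke the Denjoy--Carleman theorem: a function in a quasi-analytic class all of whose derivatives vanish at one point is identically zero. Hence $\psi\equiv 0$, so the finite signed measure $\mu-\nu$ has vanishing Fourier transform and is therefore the zero measure, i.e. $\mu=\nu$. The genuinely nontrivial ingredient is the Denjoy--Carleman theorem itself, i.e. quasi-analyticity of the Carleman class; this is the step I would expect to be the main obstacle, and one proves it either through the associated function and a Phragm\'en--Lindel\"of argument (or Bang's elementary construction), or one bypasses it entirely via the operator-theoretic route: on $L^2(\mu)$ the multiplication operator $X:p(x)\mapsto x\,p(x)$, defined on polynomials, is symmetric and acts as a Jacobi matrix in the orthonormal-polynomial basis; Carleman's growth condition forces its deficiency indices to be $(0,0)$, i.e. $X$ is essentially self-adjoint, and determinacy of the Hamburger moment problem is equivalent to this essential self-adjointness. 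Either way, the conceptual point of the reduction is that $\widehat{\mu-\nu}$ is automatically bounded, so the polynomial control of its derivatives by $M_n$ can stand in for genuine analyticity and let the growth condition on the even moments do all the work.
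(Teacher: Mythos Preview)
Your proposal is a correct and standard proof of Carleman's determinacy criterion via the Denjoy--Carleman theorem on quasi-analytic classes (with the operator-theoretic alternative also correctly sketched). However, the paper does \emph{not} prove this lemma at all: it is stated as a classical result, cited to the literature, and then merely \emph{applied} by verifying that the computed moments $\gamma_{2k}=\frac{(2k)!}{d^k 2^k k!}$ satisfy $\sum_k \gamma_{2k}^{-1/2k}=\infty$ through the elementary estimate
\[
\sum_{k\geq 1}\left(\frac{(2k)!}{d^k 2^k k!}\right)^{-1/2k}\;\geq\;\sum_{k\geq 1}\left((2k)^k\right)^{-1/2k}=\sum_{k\geq 1}\frac{1}{\sqrt{2k}}=\infty.
\]
So there is nothing to compare: you have supplied a genuine proof of a result the paper treats as a black box. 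One minor remark on your argument: the claim that log-convexity of $(M_n)$ makes $M_n^{1/n}$ nondecreasing is not needed and is slightly delicate as stated; the passage from $\sum_k M_{2k}^{-1/2k}=\infty$ to $\sum_n M_n^{-1/n}=\infty$ is immediate simply because the latter sum contains the former as a subseries of nonnegative terms.
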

Explicitly, we have
\begin{align}
    \sum_{k=1}^\infty \left(\frac{1}{d^k} \frac{(2k)!}{2^k k!}\right)^{-1/2k} &= \sqrt{2d} \sum_{k=1}^\infty \left((2k)\cdots(k+1)\right)^{-1/2k}\nonumber\\
    &\geq   \sum_{k=1}^\infty \left((2k)^k\right)^{-1/2k}\nonumber\\
    &= \sum_{k=1}^\infty \frac{1}{\sqrt{2k}} = \infty \,.
\end{align}
Hence, according to  Lemma~\ref{lem:carleman-main}, Carleman's condition is satisfied, and $P(C_j(\rho_i))$ is distributed following a Gaussian distribution.

A similar argument can be given to show that the moments of $\mathscr{C}$ match those of a GP. Here, we need to compare Eq.~\eqref{eq:gp_moments-met}  with the $k$-th order moments of a GP, which are provided by  Isserlis  theorem~\cite{isserlis1918formula}. Specifically, if we want to compute a $k$-th order  moment of a GP, then we have that $\mathbb{E}[X_{1}X_{2}\cdots X_{k}]=0$ if $k$ is odd, and
\begin{equation}\label{eq:wick-met}
    \mathbb{E}[X_{1}X_{2}\cdots X_{k}]=
        \sum_{\sigma\in T_{k}}\prod_{\{t,t'\}\in \sigma} {\rm Cov} [X_{t},X_{t'}]\,,
\end{equation}
if $k$ is even. Clearly, Eq.~\eqref{eq:gp_moments-met} matches Eq.~\eqref{eq:wick-met} by identifying ${\rm Cov} [X_{t},X_{t'}]=\frac{\Tr[\rho_t \rho_{t'}]}{d}$. We can again prove that these moments uniquely determine the distribution of $\mathscr{C}$ from the  fact that since its marginal distributions are determinate via Carleman's condition (see above), then so is the distribution of $\mathscr{C}$~\cite{kleiber2013multivariate}.  Hence, $\mathscr{C}$ forms a GP.

\subsection{Generalized datasets}

Up to this point we have derived our theorems by imposing strict conditions on the overlaps between every pair of states in the dataset. However, we can extend these results to the cases where the conditions are only met on average when sampling states over $\mathscr{D}$. 
\begin{theorem}\label{cod:dataset-average}
    The results of Theorems~\ref{th:gp_main} and~\ref{th:gaus-indept} will hold, on average, if $\mathbb{E}_{\rho_i,\rho_{i'}\sim \mathscr{D}}\Tr[\rho_i\rho_{i'}]\in\Omega\left(\frac{1}{\poly(\log(d))}\right)$ and  $\mathbb{E}_{\rho_i,\rho_{i'}\sim \mathscr{D}}\Tr[\rho_i\rho_{i'}]=\frac{1}{d}$, respectively.
\end{theorem}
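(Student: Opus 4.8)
The plan is to reduce the statement to a moment computation, exactly as in the proofs of Theorems~\ref{th:gp_main} and~\ref{th:gaus-indept}, but now carrying an extra expectation over the sampling of the states from $\mathscr{D}$. Concretely, I would regard the entries of $\mathscr{C}$ as $C_j(\rho_1),\ldots,C_j(\rho_k)$ with the $\rho_t$ drawn i.i.d.\ from $\mathscr{D}$, and study the joint law over the Haar measure and this sampling. Since the moment map is linear, Fubini gives
\begin{equation}
\mathbb{E}_{\{\rho_t\}\sim\mathscr{D}}\,\mathbb{E}_{\mathbb{U}(d)}\!\left[\prod_{t=1}^k C_j(\rho_t)\right] = \mathbb{E}_{\{\rho_t\}\sim\mathscr{D}}\left(\mathbb{E}_{\mathbb{U}(d)}\!\left[\prod_{t=1}^k C_j(\rho_t)\right]\right),
\end{equation}
so one first applies Lemma~\ref{lem:moments} to the inner Haar average (conditioned on the sampled states) and then averages the result over the sampling.

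For odd $k$ the inner moment vanishes for every realization, hence so does the average. For even $k$, Lemma~\ref{lem:moments}(a) expresses the inner moment, to leading order in $1/d$, as $\frac{1}{d^{k/2}}\sum_{\sigma\in T_k}\prod_{\{t,t'\}\in\sigma}\Tr[\rho_t\rho_{t'}]$. The crucial observation is that each $\sigma\in T_k$ is a perfect matching, so every index appears in exactly one factor; since the $\rho_t$ are independent draws, the overlaps belonging to distinct pairs of a fixed $\sigma$ are independent random variables, and within a pair $\rho_t,\rho_{t'}$ are two independent draws. Therefore
\begin{equation}
\mathbb{E}_{\{\rho_t\}\sim\mathscr{D}}\!\left[\prod_{\{t,t'\}\in\sigma}\Tr[\rho_t\rho_{t'}]\right] = \prod_{\{t,t'\}\in\sigma}\mathbb{E}_{\rho,\rho'\sim\mathscr{D}}\big[\Tr[\rho\rho']\big] = \Big(\mathbb{E}_{\rho,\rho'\sim\mathscr{D}}\big[\Tr[\rho\rho']\big]\Big)^{k/2},
\end{equation}
independently of $\sigma$. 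Summing over the $|T_k|=(k-1)!!$ matchings reproduces exactly the $k$-th moment of a centred multivariate Gaussian with off-diagonal covariance $\mathbb{E}_{\rho,\rho'\sim\mathscr{D}}[\Tr[\rho\rho']]/d$ and diagonal $\Tr[\rho_t^2]/d=1/d$ for pure states, i.e.\ the covariance of Theorem~\ref{th:gp_main} with $\Tr[\rho_i\rho_{i'}]$ replaced by its $\mathscr{D}$-average. The hypothesis $\mathbb{E}_{\rho,\rho'\sim\mathscr{D}}[\Tr[\rho\rho']]\in\Omega(1/\poly(\log d))$ is precisely what keeps Lemma~\ref{lem:moments}(a) in force and the covariance non-vanishing; the hypothesis $\mathbb{E}_{\rho,\rho'\sim\mathscr{D}}[\Tr[\rho\rho']]=1/d$ instead sends the off-diagonal to $1/d^2$, i.e.\ to zero at leading order, recovering the uncorrelated covariance of Theorem~\ref{th:gaus-indept}. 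Having matched all moments, I would close the argument as in the proofs of Theorems~\ref{th:gp_main} and~\ref{th:gaus-indept}, invoking Carleman's condition~\cite{petz2004asymptotics,kleiber2013multivariate} to conclude the averaged moments determine a unique distribution, the claimed GP; the orthogonal case is identical via the $\mathbb{O}(d)$ branch of Lemma~\ref{lem:moments}, picking up the factor $2^{k/2}$.

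The main obstacle is controlling what Lemma~\ref{lem:moments}(a) relegates to subleading order once the $\mathscr{D}$-average is taken. For a fixed dataset the pointwise condition $\Tr[\rho_i\rho_{i'}]\in\Omega(1/\poly(\log d))$ guarantees a clean $1/d$-expansion, but under sampling some draws can produce anomalously small overlaps, so one must argue that the exact Weingarten expansion of the inner moment --- a finite signed sum of products of overlaps weighted by Weingarten functions --- still yields, after averaging, the pair-partition term as the unique leading contribution. This amounts to bounding $\mathbb{E}_{\{\rho_t\}}$ of the non-pair-partition terms (those with a cycle of length $>2$ or extra disconnected contractions carrying additional powers of $1/d$) and, relatedly, handling moments with repeated slots where $\mathbb{E}[(\Tr[\rho\rho'])^2]$ rather than $(\mathbb{E}[\Tr[\rho\rho']])^2$ appears; since $(\Tr[\rho\rho'])^2\leq\Tr[\rho\rho']$ for pure states, these are suppressed relative to the diagonal contribution, but making this uniform over general ensembles $\mathscr{D}$ is the delicate step. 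Everything else is bookkeeping already carried out for the previous theorems.
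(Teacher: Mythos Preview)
Your approach is essentially the same as the paper's: both reduce to rerunning the moment computations of Theorems~\ref{th:gp_main} and~\ref{th:gaus-indept} with the pointwise overlap conditions replaced by their $\mathscr{D}$-averaged versions, then invoke Carleman's condition. The paper's proof is in fact terser than yours---it simply states that one substitutes the averaged condition wherever the pointwise one was used---so your explicit use of Fubini, the factorization over perfect matchings via independence of i.i.d.\ draws, and your flagging of the subleading-term control are all more careful than what the paper provides.
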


In Theorem~\ref{cod:dataset-average} we generalized the results of Theorems~\ref{th:gp_main} and~\ref{th:gaus-indept} to hold  on average when a) $\mathbb{E}_{\rho_i,\rho_{i'}\sim \mathscr{D}}\Tr[\rho_i\rho_{i'}]\in\Omega\left(\frac{1}{\poly(\log(d))}\right)$ and  b) $\mathbb{E}_{\rho_i,\rho_{i'}\sim \mathscr{D}}\Tr[\rho_i\rho_{i'}]=\frac{1}{d}$, respectively. Interestingly, these two cases have practical relevance. Let us start with Case a). Consider a multiclass classification problem, where each state $\rho_i$ in $\mathscr{D}$ belongs to one of $Y$ classes, with $Y\in\OC(1)$, and where the dataset is composed of an (approximately) equal number of states from each class. That is, for each $\rho_i$ we can assign a label $y_i=1,\ldots,Y$. Then, we assume that the classes are well separated in the Hilbert feature space, a standard and sufficient assumption for the model to be able to solve the learning task~\cite{lloyd2020quantum,larocca2022group}. By well separated we mean that
\begin{align}
&\Tr[\rho_i\rho_{i'}]\in\Omega\left(\frac{1}{\poly(\log(d))}\right)\,,\quad\text{if}\quad y_i=y_{i'}\,,\\
     &\Tr[\rho_i\rho_{i'}]\in\OC\left(\frac{1}{2^n}\right)\,,\quad\text{if}\quad y_i\neq y_{i'}\,.
\end{align}
In this case, it can be verified that for any pair of states $\rho_i$ and $\rho_{i'}$ sampled from $\mathscr{D}$, one has $\mathbb{E}_{\rho_i,\rho_{i'}\sim\mathscr{D}}[\Tr[\rho_i\rho_{i'}]]\in\Omega\left(\frac{1}{\poly(\log(d))}\right)$.

Next, let us evaluate Case b). Such situation arises precisely if the sates in $\mathscr{D}$ are Haar random states. Indeed, we can readily show that 
\small
\begin{align} \mathbb{E}_{\rho_i,\rho_{i'}\sim\mathscr{D}}[\Tr[\rho_i\rho_{i'}]]&=\mathbb{E}_{\rho_i,\rho_{i'}\sim {\rm Haar}}[\Tr[\rho_i\rho_{i'}]]\nonumber\\
    &=\int_{\mathbb{U}(d)}d\mu(U)d\mu(V) \Tr[U\rho_0U\ad V\rho_{0}'V\ad]\nonumber\\
    &=\int_{\mathbb{U}(d)}d\mu(U) \Tr[U\rho_0U\ad\rho_{0}']\nonumber\\
    &=\frac{\Tr[\rho_0]\Tr[\rho_{0}']}{d}\nonumber\\
    &=\frac{1}{d}\,.
\end{align}
\normalsize
Here, in the first equality we have used that sampling Haar random pure states $\rho_i$ and $\rho_{i'}$ from the Haar measure is equivalent to taking two reference pure states $\rho_0$ and $\rho_{0}'$ and evolving them with Haar random unitaries. In the second equality we have used the left-invariance of the Haar measure, and in the third equality we have explicitly performed the integration (see the SI).

\subsection{Learning with the GP}

In this section we will review the basic formalism for learning with GPs, and then discuss conditions under which such learning will be efficient.

Let $\vec{C}$ be a Gaussian process. Then, by definition, given a collection of inputs $\{x_i\}_{i=1}^m$, $\vec{C}$ is determined by its $m$-dimensional mean vector $\vec{\mu}$, and its $m\times m$-dimensional covariance matrix $\vec{\Sigma}$. In what follows we will assume that the mean of $\vec{C}$ is zero, and that the entries of its covariance matrix are expressed as $\kappa(x_i,x_{i'})$.
That is,

\footnotesize
\begin{equation}
    P\left(\!\!\begin{pmatrix}
        C(x_1)\\
        \vdots\\
        C(x_m)
    \end{pmatrix}\!\!\right)=\NC\left(\!\!\vec{\mu}=\begin{pmatrix}
        0\\
        \vdots\\
        0
    \end{pmatrix},\vec{\Sigma}=\begin{pmatrix}
        \kappa(x_1,x_1) & \cdots & \kappa(x_1,x_m) \\
        \vdots & & \vdots \\
        \kappa(x_m,x_1) & \cdots & \kappa(x_m,x_m) 
    \end{pmatrix}\!\!\right)\,.\nonumber
\end{equation}
\normalsize
The previous allows us to know that, \textit{a priori}, the distribution of values for any $f(x_i)$ will take the form
\begin{equation}\label{eq:normal-1}
    P(C(x_i))=\NC(0,\sigma_i^2)\,,
\end{equation}
with $\sigma_i^2=\kappa(x_i,x_i)$.

Now, let us consider the task of using $m$ observations, which we will collect in a vector $\vec{y}$, to predict the value at $x_{m+1}$. First, if the observations are noiseless, then $\vec{y}=(y(x_1),\cdots, y(x_m))$ is equal to $\vec{C}=(C(x_1),\cdots, C(x_m))$. That is, $\vec{C}=\vec{y}$. Here, we can use the fact that $C$ forms a Gaussian process to find~\cite{rasmussen2006gaussian,mukherjee2020preparation}
\begin{align}
    P(C(x_{m+1})|\vec{C})&=P(C(x_{m+1})|C(x_1),C(x_2),\ldots,C(x_m))\nonumber\\
    &=\NC\left(\mu(C(x_{m+1})),\sigma^2(C(x_{m+1}))\right)\,,\label{eq:normal-kp1}
\end{align}
where $\mu(C(x_{m+1}))$ and $\sigma^2(C(x_{m+1}))$ respectively denote the mean and variance of the  associated Gaussian probability distribution, and which are given by
\begin{align} \label{eq:gp-prediction-mu}
    \mu(C(x_{m+1}))&=\vec{m}^T \cdot \vec{\Sigma}^{-1}\cdot \vec{C}\\
    \sigma^2(C(x_{m+1}))&=\sigma_{m+1}^2-\vec{m}^T\cdot \vec{\Sigma}^{-1}\cdot \vec{m} \,. \label{eq:gp-prediction-sigma}
\end{align}
The vector $\vec{m}$ has entries $\vec{m}_i=\kappa(x_{m+1},x_i)$.  We can compare Eqs.~\eqref{eq:normal-1} and~\eqref{eq:normal-kp1} to see that using Bayesian statistics to obtain the predictive distribution of $P(C(x_{m+1})|\vec{C})$ shifts the mean from zero to $\vec{m}^T \cdot \vec{\Sigma}^{-1}\cdot \vec{C}$ and the variance is decreased from $\sigma_{m+1}^2$ by a quantity $\vec{m}^T\cdot \vec{\Sigma}^{-1}\cdot \vec{m}$. The decrease in variance follows from the fact that we are incorporating knowledge about the observations, and thus decreasing the uncertainty.

From the above discussion, we can provide some intuition behind the differences between our three main theorems. From Eq.~\eqref{eq:gp-prediction-sigma}, it is clear that we can learn the most when the states in the dataset, and the new state $\rho_{m+1}$, are similar (Theorem~\ref{th:gp_main}). 
Intuitively, this makes sense as the more similar the training states are, the better we can predict the output through the QNN of a new state closely resembling the training set.
One can readily verify that if one wishes to make predictions on a new state $\rho_{m+1}$ for which $\Tr[\rho_{m+1}\rho_i]=\frac{1}{d}$ for all states $\rho_i$ in the training set, then  $\boldsymbol{m}=\boldsymbol{0}$, meaning that we cannot update the prior (Theorem~\ref{th:gaus-indept}). This again makes perfect sense, as an overlap of $\frac{1}{d}$ is precisely the expected overlap between a Haar random state and any other pure state. This result thus implies that we cannot use training data to make predictions on a Haar random $\rho_{m+1}$. Finally, the case of orthogonal states in Theorem~\ref{th:GP-negative} is fundamentally different from the uncorrelated one, due to the fact that two generic states are not expected to be orthogonal, and thus we can still extract information as per Eq.~\eqref{eq:gp-prediction-sigma}.


In a realistic scenario, we can expect that noise will occur during our observation procedure. For simplicity we model this noise as Gaussian noise, so that $y(x_i)=C(x_i)+\varepsilon_i$, where the noise terms $\varepsilon_i$ are assumed to be independently drawn from the same distribution $P(\varepsilon_i)=\NC(0,\sigma_N^2)$. Now, since we have assumed that the noise is drawn independently, we know that the likelihood of obtaining a set of observations $\vec{y}$ given the model values $\vec{C}$ is given by $P(\vec{y}|\vec{C})=\NC(\vec{C},\sigma_N^2\id)$. In this case, we can  find the probability distribution~\cite{rasmussen2006gaussian,mukherjee2020preparation}
\small
\begin{align}
     P(C(x_{m+1})|\vec{C})&=\int d\vec{C}P(x_{m+1}|\vec{C})P(\vec{C}|\vec{y})\nonumber\\
     &=\int d\vec{C}P(C(x_{m+1})|\vec{C})P(\vec{y}|\vec{C})P(\vec{C})/P(\vec{y})\nonumber\\
     &=\NC\left(\widetilde{\mu}(C(x_{m+1})),\widetilde{\sigma}^2(C(x_{m+1}))  \right)\,,
\end{align}
\normalsize
where now we have
\begin{align}
\widetilde{\mu}(C(x_{m+1}))&=\vec{m}^T \cdot( \vec{\Sigma}+\sigma_N^2\id)^{-1}\cdot \vec{C}\\
    \widetilde{\sigma}^2(C(x_{m+1})) &=\sigma_{m+1}^2-\vec{m}^T\cdot (\vec{\Sigma}+\sigma_N^2\id)^{-1}\cdot \vec{m}\,. \label{eq:prediction-noise}
\end{align}
In the first and the second equality we have used the explicit decomposition of the probability, along with Bayes and marginalization rules.
We can see that the probability is still governed by a Gaussian distribution but where the inverse of $\vec{\Sigma}$ has been replaced by the inverse of $\vec{\Sigma}+\sigma_N^2\id$. 

The previous results can be readily used to study whether learning with the GP will be efficient in the presence of finite sampling. First, let us assume that the QNN acts on all the qudits of the states in $\mathscr{D}$, and that we measure the same $O_j$ at the output of the circuit. As such, the noise terms $\varepsilon_i$ are taken to be drawn from the same distribution $P(\varepsilon_i)=\NC(0,\sigma_N^2)$ with $\sigma_N^2=\frac{1}{N}$, and $N$ the number of shots used to estimate each $y(\rho_i)$.
 In this case, we can prove that the GP cannot be used to efficiently predict the outputs of the QNN via Bayesian statistics, as stated in the following theorem, whose proof can be found in the SI.
\begin{theorem}\label{eq:pred-dist}
    Consider a GP obtained from a Haar random QNN. Given the set of observations $(y(\rho_1),\ldots,y(\rho_m))$ obtained from $N\in\OC(\poly(\log(d)))$ measurements, then  the predictive distribution of the GP is trivial:
    \small
    \begin{equation}
P(C_j(\rho_{m+1})|C_j(\rho_{1}),\ldots,C_j(\rho_{m}))=P(C_j(\rho_{m+1}))=\NC(0,\sigma^2)\,,\nonumber
    \end{equation}
    \normalsize
where $\sigma^2$ is given by Corollary~\ref{cor:gaussian}.
\end{theorem}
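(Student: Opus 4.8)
The plan is to compute the standard Gaussian-process predictive distribution for $C_j(\rho_{m+1})$ given the noisy observations $\vec{y}=(y(\rho_1),\ldots,y(\rho_m))$, and to show that the covariance between $C_j(\rho_{m+1})$ and the observed data is negligible compared to the data covariance matrix, so that the Bayesian update leaves the prior essentially unchanged. Recall that for a zero-mean GP the predictive mean and variance are
\begin{align}
\mu_{\text{pred}}&=\vec{\kappa}_*\trp\left(\vec{\Sigma}_{\text{data}}+\sigma_N^2\,\id\right)^{-1}\vec{y}\,,\nonumber\\
\sigma_{\text{pred}}^2&=\sigma^2-\vec{\kappa}_*\trp\left(\vec{\Sigma}_{\text{data}}+\sigma_N^2\,\id\right)^{-1}\vec{\kappa}_*\,,\nonumber
\end{align}
where $(\vec{\kappa}_*)_i={\rm Cov}[C_j(\rho_{m+1}),C_j(\rho_i)]$, $(\vec{\Sigma}_{\text{data}})_{i,i'}={\rm Cov}[C_j(\rho_i),C_j(\rho_{i'})]$, and $\sigma^2={\rm Var}[C_j(\rho_{m+1})]$ is the prior variance from Corollary~\ref{cor:gaussian}. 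So the whole claim reduces to showing $\vec{\kappa}_*\trp(\vec{\Sigma}_{\text{data}}+\sigma_N^2\id)^{-1}\vec{\kappa}_*\to 0$ and $\vec{\kappa}_*\trp(\vec{\Sigma}_{\text{data}}+\sigma_N^2\id)^{-1}\vec{y}\to 0$ in the large-$d$ limit.

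First I would invoke Lemma~\ref{lem:exp-cov} to get explicit scalings: every entry of $\vec{\kappa}_*$ is $O(1/d)$ (indeed at most $\Tr[\rho_{m+1}\rho_i]/d\leq 1/d$), the diagonal entries of $\vec{\Sigma}_{\text{data}}$ are $\Theta(1/d)$, and the off-diagonal entries are also $O(1/d)$. Crucially, the additive noise floor is $\sigma_N^2=1/N$ with $N\in\OC(\poly(\log d))$, so $\sigma_N^2$ dominates: $\sigma_N^2/\sigma^2\in\Omega(d/\poly(\log d))\to\infty$. The key bound is then $\vec{\Sigma}_{\text{data}}+\sigma_N^2\id\succeq \sigma_N^2\id$, hence $(\vec{\Sigma}_{\text{data}}+\sigma_N^2\id)^{-1}\preceq \sigma_N^{-2}\id=N\id$. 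Therefore
\begin{equation}
\vec{\kappa}_*\trp\left(\vec{\Sigma}_{\text{data}}+\sigma_N^2\id\right)^{-1}\vec{\kappa}_*\leq N\,\norm{\vec{\kappa}_*}^2\leq N\cdot\frac{m}{d^2}\in\OC\!\left(\frac{\poly(\log d)}{d^2}\right)\to 0\,,\nonumber
\end{equation}
which shows $\sigma_{\text{pred}}^2\to\sigma^2$. For the mean, I would bound $|\mu_{\text{pred}}|\leq N\norm{\vec{\kappa}_*}\,\norm{\vec{y}}$; since each $y(\rho_i)$ has variance $\sigma^2+\sigma_N^2=\OC(\poly(\log d))$ one gets (with high probability, or in $L^2$) $\norm{\vec{y}}\in\OC(\sqrt{m\,\poly(\log d)})$, so $|\mu_{\text{pred}}|\in\OC(\poly(\log d)/d)\to 0$. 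Combining, $P(C_j(\rho_{m+1})\mid\vec{y})\to\NC(0,\sigma^2)=P(C_j(\rho_{m+1}))$, which is exactly the claim. One should also note the GP here is the one established by Theorems~\ref{th:gp_main},~\ref{th:gaus-indept} or~\ref{th:GP-negative}, so the Gaussianity of the joint $(C_j(\rho_1),\ldots,C_j(\rho_{m+1}))$ is already in hand and the conditional of a jointly Gaussian vector is Gaussian with exactly the mean/variance above.

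The main obstacle is handling $\vec{\Sigma}_{\text{data}}$ carefully enough: if one wanted the exact predictive distribution rather than the limiting one, the matrix inverse $(\vec{\Sigma}_{\text{data}}+\sigma_N^2\id)^{-1}$ can be delicate because $\vec{\Sigma}_{\text{data}}$ is nearly rank-deficient in the positively-correlated regime (where $\Tr[\rho_i\rho_{i'}]$ can be $\Theta(1)$ so that the kernel matrix $\frac1d(\Tr[\rho_i\rho_{i'}])_{i,i'}$ has a dominant rank-one-ish part). But this is precisely where the noise regularization $\sigma_N^2\id\gg\vec{\Sigma}_{\text{data}}$ rescues us: one never needs to control the near-singular directions of $\vec{\Sigma}_{\text{data}}$ because the operator-norm bound $(\vec{\Sigma}_{\text{data}}+\sigma_N^2\id)^{-1}\preceq N\id$ already suffices. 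The only mild subtlety is making the statement about $\norm{\vec{y}}$ precise — whether one states the theorem as convergence in probability, in distribution, or pointwise for bounded observations — and I would simply note that $\vec{y}$ is an $\NC(\vec{0},\vec{\Sigma}_{\text{data}}+\sigma_N^2\id)$ vector, so $\Ebb[\norm{\vec{y}}^2]=\Tr[\vec{\Sigma}_{\text{data}}+\sigma_N^2\id]\in\OC(m\,\poly(\log d))$ and Markov's inequality closes the gap.
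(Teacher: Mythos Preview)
Your proposal is correct and follows essentially the same route as the paper: both arguments exploit that the shot-noise term $\sigma_N^2\id$ dominates $\vec{\Sigma}_{\text{data}}$ (all of whose entries are $\OC(1/d)$), so that $(\vec{\Sigma}_{\text{data}}+\sigma_N^2\id)^{-1}\approx N\,\id$, and hence the predictive correction terms $\vec{\kappa}_*\trp(\cdots)^{-1}\vec{\kappa}_*$ and $\vec{\kappa}_*\trp(\cdots)^{-1}\vec{y}$ vanish as $\OC(\poly(\log d)/d)$. The only technical difference is that the paper reaches the approximation of the inverse via an explicit perturbative expansion $(\sigma_N^2\id+\vec{\Sigma})^{-1}=\sigma_N^{-2}\id-\sigma_N^{-4}(\id+\sigma_N^{-2}\vec{\Sigma})\vec{\Sigma}$, whereas you use the cleaner PSD ordering bound $(\vec{\Sigma}_{\text{data}}+\sigma_N^2\id)^{-1}\preceq N\id$; and the paper bounds the mean correction deterministically via $|C_j(\rho_i)|\leq 1$, while you do it probabilistically through $\mathbb{E}\norm{\vec{y}}^2$. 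Both are valid and lead to the same conclusion.
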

Specifically, Theorem~\ref{eq:pred-dist} shows that by spending only a poly-logarithmic-in-$d$ (polynomial in $n$) number of measurements, one cannot use Bayesian statistical theory to learn any information about new outcomes given previous ones. The key insight behind Theorem~\ref{eq:pred-dist} is that the covariance-matrix entries are suppressed as $\OC\left(\frac{1}{d}\right)$  while the noise terms produce a statistical variance that is inversely proportional to the number of measurements. Hence, $\vec{\Sigma} + \sigma_N^2\id\approx\sigma_N^2\id$ in the large-$d$ limit.

Next, for simplicity, let us focus on the case when the system is composed of $n$ qubits, so that the Hilbert space dimension is $d=2^n$ (as in the main text). Moreover, we assume that the QNN and the measurement operator $O_j$ act on $m\leq n$ qubits and that expectation values are again measured with $N\in\OC(\poly(\log(d)))$ shots. When $m\in\OC(\log(n))$, Lemma~\ref{lem:exp-cov} tells us that the covariance-matrix entries are only suppressed as $\Omega\left(\frac{1}{\poly(n)}\right)$, provided that the overlaps on the reduced states on $m$ qubits are in $\Omega\left(\frac{1}{\poly(n)}\right)$. Since $\sigma_N^2=\frac{1}{N}$, it suffices to choose $N$ polynomially large in $n$ to attain $\vec{\Sigma}  + \sigma_N^2\id\approx \vec{\Sigma}$ in the large-$d$ limit.

\subsection{Details for the numerical simulations}

We provide here the details of the numerical simulations showcasing GP regression, see Fig.~\ref{fig:4}.  In order to create the dataset $\DC$, we consider a quantum dynamical process in which an initial state $\rho(0)$ is evolved under an $XY$ Hamiltonian with local random transverse fields to produce the state $\rho(t)$ at time $t$. Therefore, the states in $\DC$ are states at arbitrary times, and the learning task consists on making predictions in some dynamical process. More precisely, we define the Hamiltonians in one and two spatial dimensions as
\begin{equation}
    H_{1} = \sum_{l=1}^{n-1} X_l X_{l+1} + Y_l Y_{l+1} + \sum_{l=1}^n h_l Z_l\,,
\end{equation}
and
\small
\begin{equation}
    H_{2} = \sum_{l=1}^{\sqrt{n}-1}\sum_{l'=0}^{\sqrt{n}-1} X_{l+l'\sqrt{n}} X_{l+l'\sqrt{n}+1} + Y_{l+l'\sqrt{n}} Y_{l+l'\sqrt{n}+1} + \sum_{l=1}^n h_l Z_l\,,
\end{equation}
\normalsize
respectively, where the coefficients $h_l$ are uniformly drawn from $[-1,1]$ and $X_l,Y_l,Z_l$ indicate the usual $X,Y,Z$ Pauli matrices acting on qubit $l$. For the one-dimensional lattice we choose a system size of $n=200$ qubits, while for the two-dimensional  square lattice we have $n=5\times5=25$ qubits. The initial states are $\rho(0)=\ketbra{0}{0}^{\otimes n}$ and $\rho(0)=\ketbra{+}{+}^{\otimes n}$, respectively, with $\ket{+}=\frac{1}{\sqrt{2}}(\ket{0}+\ket{1})$. We then randomly pick a Pauli operator $O_j$ with support on at most $\lceil\log (n)\rceil$ qubits, namely, $O_j=\id^{\otimes 4}\otimes Y\otimes Y$ on 6 qubits for the one-dimensional lattice and $O_j=\id^{\otimes 3}\otimes X\otimes Z$ on 5 qubits for the two-dimensional lattice. The goal is to predict the time series $\Tr[\rho(t) O_j]$ using GP regression (in particular, Eqs.~\eqref{eq:gp-prediction-mu} and~\eqref{eq:gp-prediction-sigma} as explained in the previous subsection), given access to $m$ training points of the form $\{\rho(t_i),C_j(\rho(t_i))\}_{i=1}^m$.

\subsection{Concentration of functions of QNN outcomes}
In the main text we have evaluated the distribution of QNN outcomes and their linear combinations. However, in many cases one is also interested in evaluating a function of the elements of $\mathscr{C}$. For instance, in a standard QML setting the QNN outcomes are used to compute some loss function $\LC(\mathscr{C})$ which one wishes to optimize~\cite{biamonte2017quantum,cerezo2022challenges,cerezo2020variationalreview,liu2021representation,schuld2021machine}. While we do not aim here at exploring all possible relevant functions $\LC$, we will present two simple examples that showcase how our results can be used to study the distribution of $\LC(\mathscr{C})$, as well as its concentration. 

First, let us consider the case when $\LC(C_j(\rho_i))=C_j(\rho_i)^2$. It is well known that given a random variable with a Gaussian distribution $\NC(0,\sigma^2)$, then its square follows a Gamma distribution $\Gamma(\frac{1}{2},2\sigma^2)$. Hence, we know that $P(\LC(C_j(\rho_i)))=\Gamma(\frac{1}{2},2\sigma^2)$. Next, let us consider the case when $\LC(C_j(\rho_i))=(C_j(\rho_i)-y_i)^2$ for $y_i\in[-1,1]$. This case is relevant for supervised learning as the mean-squared error loss function is composed of a linear combination of such terms. Here,  $y_i$ corresponds to the label associated to the state $\rho_i$. We can exactly compute all the moments of $\LC(C_j(\rho_i))$ as
\begin{equation}
    \mathbb{E}_G[\LC(C_j(\rho_i))^k]=\sum_{r=0}^{2k} \binom{2k}{r}\mathbb{E}_G[C_j(\rho_i)^r](-y_i)^{2k-r}\,,
\end{equation}
for $G=\mathbb{U}(d),\mathbb{O}(d)$. We can then use Lemma~\ref{lem:moments} to obtain 
\footnotesize
\begin{align}
    \mathbb{E}_{\mathbb{U}(d)}[C_j(\rho_i)^r]&=\frac{r!}{d^{r/2}2^{r/2} (r/2)!}=\frac{\mathbb{E}_{\mathbb{O}(d)}[C_j(\rho_i)^r]}{2^{r/2}}\,,\nonumber
\end{align}
\normalsize
if $r$ is even, and $\mathbb{E}_{\mathbb{U}(d)}[C_j(\rho_i)^r]=\mathbb{E}_{\mathbb{O}(d)}[C_j(\rho_i)^r]=0$ if $r$ is odd. We obtain 
\begin{equation}
    \mathbb{E}_{\mathbb{U}(d)}[\LC(C_j(\rho_i))^k]=\frac{2^k}{(-d)^k} M\left(-k,\frac{1}{2},{-\frac{dy^2}{2}}\right)\,,
\end{equation}
with $M$ the Kummer's confluent hypergeometric function.

Furthermore, we can also study the concentration of $\LC(C_j(\rho_i))$ and show that $P\left(|\LC(C_j(\rho_i))-\mathbb{E}_{\mathbb{U}(d)}(\LC(C_j(\rho_i)))|\geq c\right)$, where the average $\mathbb{E}_{\mathbb{U}(d)}(\LC(C_j(\rho_i)))=y_i^2 + \frac{1}{d}$, is in $\OC\left( \frac{1}{|\sqrt{c} +y_i| e^{d |\sqrt{c}+y_i|^2}\sqrt{d}}\right)$.

\subsection{Infinitely-wide neural networks as Gaussian processes}

Finally, we will briefly review the seminal work of Ref.~\cite{neal1996priors}, which proved that artificial NNs with a single infinitely-wide hidden layer form GPs. Our main motivation for reviewing this result is that, as we will see below, the simple technique used in its derivation cannot be directly applied to the quantum case.

For simplicity, let us consider a network consisting of a single input neuron, $N_h$ hidden neurons, and a single output neuron (see Fig.~\ref{fig:Summary}).  The input of the network is $x\in\mathbb{R}$, and the output is given by
\begin{align}
f(x)&=b+\sum_{l=1}^{N_h}v_{l}h_l(x)\,,\label{eq:output}
\end{align}
where $h_l(x)=\phi(a_l+u_l x)$ models the action of each neuron in the hidden layer. Specifically, $u_l$ is the weight between the input neuron and the $l$-th hidden neuron, $a_l$ is the respective bias and $\phi$ is some (non-linear) activation function such as the hyperbolic tangent or the sigmoid function. Similarly, $v_{l}$ is the weight connecting the $l$-th hidden neuron to the output neuron, and $b$ is the output bias.   From Eq.~\eqref{eq:output} we can see that the output of the NN is a weighted sum of the hidden neurons' outputs plus some bias.   

Next, let us assume that the $v_{l}$ and $b$ are taken i.i.d. from a Gaussian distribution with zero mean and standard deviations $\sigma_v/\sqrt{N_h}$ and $\sigma_b$, respectively. Likewise, one can assume that the hidden neuron weights and biases are taken i.i.d. from some Gaussian distributions. Then, in the limit of $N_h\rightarrow\infty$, one can conclude via the central limit theorem that, since the NN output is a sum of infinitely many i.i.d. random variables, then it will converge to a Gaussian distribution with zero mean and variance $\sigma_b^2 + \sigma_v^2\mathbb{E}[h_l(x)^2]$. Similarly, it can be shown that in the case of multiple inputs $x_1,\dots,x_m$ one gets a multivariate Gaussian distribution for $f(x_1),\ldots,f(x_m)$, i.e., a GP~\cite{neal1996priors}. 

 Naively, one could try to mimic the technique in Ref.~\cite{neal1996priors} to prove our main results. In particular, we could start by noting that $C_j(\rho_i)$ can always be expressed as
\begin{equation}\label{eq:decomp}
C_j(\rho_i)=\sum_{k,k',r,r'=1}^d u_{kk'}\rho_{k'r}u^*_{r'r}o_{r'k}\,,
\end{equation}
where $u_{kk'}$, $u^*_{r'r}$, $\rho_{k'r}$ and $o_{r'l}$ are the matrix entries of $U$ and $U\ad$,  $\rho_i$ and $O_j$, respectively.
Although Eq.~\eqref{eq:decomp} is a summation over a large number of random variables, we cannot apply the central limit theorem (or its variants) here, since the matrix entries $u_{kk'}$ and  $u^*_{rr'}$ are not independent. 

In fact, the correlation between the entries in the same row, or column, of a Haar random unitary are of order $\frac{1}{d}$, while those in different rows, or columns, are of order $\frac{1}{d^2}$~\cite{petz2004asymptotics}. This small, albeit critical, difference makes it such that we cannot simply use the central limit theorem to prove that $\mathscr{C}$ converges to a GP. Instead, we need to rely on the techniques described in the main text.

\section*{Acknowledgements}
We acknowledge Francesco Caravelli, Fr\'ed\'eric Sauvage, Lorenzo Leone, Cinthia Huerta, Matthew Duschenes, Paolo Braccia and Antonio Anna Mele
  for useful conversations.
D.G-M. was supported by the Laboratory Directed Research and Development (LDRD) program of Los Alamos National Laboratory (LANL) under project number 20230049DR. M.L. acknowledges support by the Center for Nonlinear Studies at Los Alamos National Laboratory (LANL).  M.C. acknowledges support by the LDRD program of LANL under project number  20230527ECR. This work was also supported by LANL ASC Beyond Moore’s Law project.

\section*{Author contributions}
The project was conceived by DGM. Theoretical results were proven by MC and DGM, with ML checking them. Numerical simulations were performed by DGM and MC. All authors contributed to writing the manuscript. 

\section*{Data availability}

Data generated and analyzed during the current study are available from the corresponding author upon reasonable request.

\section*{Code availability}

Code generated during the current study is available from the corresponding author upon reasonable request.

\section*{Competing interests}

The authors declare no competing interests.

\bibliography{quantum.bib}

\makeatletter
\close@column@grid
\makeatother
\cleardoublepage
\newpage
\onecolumngrid
\renewcommand\appendixname{Supp. Info.}
\appendix

\renewcommand\figurename{Supplemental Figure}
\setcounter{figure}{0}
\setcounter{lemma}{0}
\setcounter{theorem}{0}
\setcounter{corollary}{0}

\newcounter{supfig}
\setcounter{supfig}{\value{figure}}

\newtheorem{supdefinition}{Supplemental Definition}
\newtheorem{supproposition}{Supplemental Proposition}
\newtheorem{suptheorem}{Supplemental Theorem}
\newtheorem{suplemma}{Supplemental Lemma}
\newtheorem{supcorollary}{Supplemental Corollary}

\section*{Supplemental Information for ``\textit{Quantum neural networks form Gaussian processes}''}

In this Supplemental Information (SI) we present detailed proofs of our main results. First, in Supp. Info.~\ref{si:prel} we introduce preliminary definitions that will be used throughout the rest of this SI. In Supp. Info.~\ref{si:weingarten} we review the basics of the Weingarten calculus that will allow us to compute expectation values when sampling Haar random unitaries over $\mathbb{U}(d)$ and $\mathbb{O}(d)$.  Next, in Supp. Info.~\ref{si:unitary} and~\ref{si:ortogonal} we respectively present results for twirling over the unitary and orthogonal groups.  Then, Supp. Info~\ref{sec:proof-lem1} contains the proof of  Lemma 1, while Supp. Info.~\ref{sec:proof-lem2} contains that of Lemma 2. In Supp. Info.~\ref{sec:proof-cor1} we derive Corollary 1. From here, we use these results to prove Theorems 1, 2, 3 and 4 in Supp. Info.~\ref{sec:proof-theo1},~\ref{sec:proof-theo2},~\ref{sec:proof-theo3}, and~\ref{sec:proof-theo4},  respectively. Supp. Info.~\ref{sec:proof-coro3} contains a proof for Corollary 2, and Supp. Info.~\ref{sec:proo-coro4} a proof for Corollary 3. Finally, we present the derivation of Theorem 5 in Supp. Info.~\ref{sec:theo5}.

\section{Preliminaries}\label{si:prel}

\subsection{Useful definitions}

 Let $\HC=\mathbb{C}^d$ be a complex $d$-dimensional vector space. We denote as  $\BC(\HC)$ the space of bounded linear operators in $\HC$, and by $\GL(d)$  the  general linear group of $d\times d$ complex non-singular matrices, that is, all invertible linear transformations acting on $\HC$. We now introduce the following definitions.

\textbf{Representation.} Let $G$ be a group. A \textit{representation} $R$ of $G$ on $\HC$  is a group homomorphism $R:G\to \GL(d)$. Given $R$, we recall that we can always build the \textit{$k$-fold tensor representation} of $G$, acting on $\HC^{\otimes k}$ as $R(g)^{\otimes k}$ for any $g\in G$. It is not hard to see that if $R$ is a valid representation, then its $k$-fold tensor product is also a representation.

\textbf{Character.} A useful way of characterizing representations of a group $G$ is through their associated \textit{characters}. Given a representation $R$, the character $\chi:G\rightarrow \mathbb{C} $ is a function that associates to each element of $g\in G$ a complex number
    \begin{equation}\label{eq:character}
        \chi(g)=\Tr[R(g)]\,.
    \end{equation}

\textbf{Commutant.} Given some representation $R$ of $G$, we define its \textit{$k$-th order commutant}, denoted as $\CC^{(k)}(G)$, to be the vector subspace of the space of linear operators on $\HC^{\otimes k}$ that commutes with $R(g)^{\otimes k}$ for all $g$ in $G$. That is,
\begin{equation}\label{eq:kth-commutant}
    \CC^{(k)}(G)=\{A\in\BC(\HC^{\otimes k})\quad|\quad[A,R(g)^{\otimes k}]=0\,,\quad \forall g\in G\}\,.
\end{equation}
Then, we denote as $\SC^{(k)}(G)$ a basis for $\CC^{(k)}(G)$, which consists of $D$ elements.

\textbf{Useful operators.} Let us now introduce two useful operators acting on $\HC^{\otimes 2}$. The first is the $\SWAP$ operator, whose action is to permute the two copies of $\HC$. In matrix notation, we have
\begin{equation}\label{eq:swap}
    \SWAP=\sum_{i_1,i_2=0}^{d-1}\ket{i_2,i_1}\bra{i_1,i_2}\,,
\end{equation}
and one can readily verify that for any $A$, $B$ in $\BC(\HC^{\otimes 2})$ one has
\begin{align}\label{eq:action_swap}
    (A\otimes B)\SWAP&=\SWAP (B\otimes A)\,.
\end{align}
The second operator, which we denote as $\Pi$, acts on $\HC^{\otimes 2}$ as
\begin{equation}\label{eq:Pi}
    \Pi=\sum_{i_1,i_2=0}^{d-1}\ket{i_1,i_1}\bra{i_2,i_2}\,.
\end{equation}
One can see that $\Pi$ is proportional to the projector onto the maximally-entangled Bell state between the two copies of $\HC$. That is, defining as $\ket{\Phi^+}$ the  Bell state between the two Hilbert spaces, 
\begin{equation}
    \ket{\Phi^+}=\frac{1}{\sqrt{d}}\sum_{i=0}^{d-1}\ket{i,i}\,,
\end{equation}
then one has $\Pi=d\dya{\Phi^+}$. However, we note that $\Pi$ is not a projector itself as $\Pi^2=d\Pi$. The identification of $\Pi$ with the Bell state shows that $\Pi$ satisfies the so-called \textit{ricochet} property,
\begin{equation}\label{eq:action_Pi}
    (A\otimes B)\Pi=(\id\otimes BA^T)\Pi=(AB^T\otimes \id)\Pi\,,
\end{equation}
where $A^T$ denotes the transpose of $A$.

\begin{figure}[t]
    \centering
    \includegraphics[width=1\columnwidth]{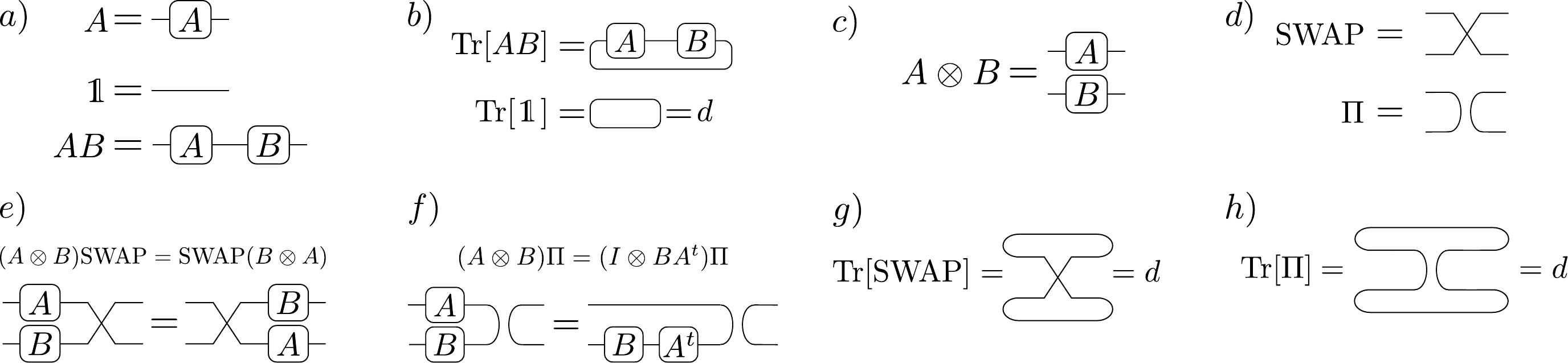}
    \caption{\textbf{Matrix and operations as tensor-network diagrams.} a) An operator $A$ acting on $\HC$ is a  $d\times d $  matrix which represents a linear map from $\HC\rightarrow \HC$. Such a mapping can be pictured as a node with two $d$-dimensional edges. The $d$-dimensional identity operator is simply a line. The product of the matrices is drawn by a node connected to another node. b) When taking the trace of a matrix, we connect its output edge with its input one. The trace of the identity, visualized as a closed loop, is equal to $d$. c) The tensor product between  matrices is represented as their tensors stacked one on top of the other. d) We show the tensor representation of the $\SWAP$ and $\Pi$ operators respectively defined in Eqs.~\eqref{eq:swap} and~\eqref{eq:Pi}. e) and f) Respectively show Eqs.~\eqref{eq:action_swap} and~\eqref{eq:action_Pi} . In g) and h) we use the tensor representation to compute the traces of the $\SWAP$ and $\Pi$ operators.   } 
    \refstepcounter{supfig}
    \label{fig:tens}
\end{figure}

 For the remainder of this SI, we will use standard tensor-network diagram notation to visually represent matrix operations (see Supp. Fig.~\ref{fig:tens}).

\section{Weingarten calculus}\label{si:weingarten}

Let  $G$ be a compact Lie group, and let $R$ be the fundamental (unitary) representation of $G$. That is,  $R(g)=g$. Given an operator $X\in\BC(\HC^{\otimes k})$, we consider the task of computing its  \textit{twirl}, $\TC^{(k)}_G[X]$,  with respect to the $k$-fold tensor representation of $G$. That is,
\begin{align}\label{eq:twirl}
    \TC^{(k)}_{G}[X]=\int_{G}d\mu(g) g^{\otimes k} X (g\ad)^{\otimes k}\,,
\end{align}
where $d\mu(g)$ is the volume element of the Haar measure. 
Now, we can use the following proposition.
\begin{supproposition}\label{prop:twirl-in-comm}
Let $\TC^{(k)}_{G}[X]$ be the twirl of an operator $X$ in $\BC(\HC^{\otimes k})$ with respect to a continuous unitary group $G$ acting on $\HC$.
Then, we have
  \begin{equation}
    \TC^{(k)}_{G}[X]\in\CC^{(k)}(G)\,.
\end{equation}  
\end{supproposition}

\begin{proof}
First, let us note that if an operator $A$ belongs to $\CC^{(k)}(G)$, then $h^{\otimes k}A(h\ad)^{\otimes k}=A$ for all $h\in G$. Then, let us compute 
\begin{align}
    h^{\otimes k}\TC^{(k)}_{G}[X](h\ad)^{\otimes k}&=\int_{G}d\mu(g) h^{\otimes k}g^{\otimes k} X (g\ad)^{\otimes k}(h\ad)^{\otimes k}\nonumber\\
    &=\int_{G}d\mu(g) (hg)^{\otimes k} X ((hg)\ad)^{\otimes k}\nonumber\\
    &=\int_{G}d\mu(g) (g)^{\otimes k} X (g\ad)^{\otimes k}\nonumber\\
    &=\TC^{(k)}_{G}[X]\,,
\end{align}
where in the third line we have used the left-invariance of the Haar measure. That is, we have used the fact that for any integrable function $f(g)$ and for any $h\in G$, we have 
    \begin{equation}
\int_{G}d\mu(g)f(hg)=\int_{G}d\mu(g)f(gh)=\int_{G}d\mu(g)f(g)\,.\label{eq:lef-r-ght-inv}
    \end{equation}

Thus, we have shown that $\TC^{(k)}_{G}[X]\in\CC^{(k)}(G)$.
\end{proof}

From Supplemental Proposition~\ref{prop:twirl-in-comm}, it follows that $\TC^{(k)}_{G}[X]$ can be expressed as 
\begin{equation}\label{eq:twirled_X_comm}
    \TC^{(k)}_{G}[X]=\sum_{\mu=1}^{D} c_\mu(X) P_\mu\,, \quad \text{with} \quad P_\mu\in \SC^{(k)}(G)\,.
\end{equation}
Hence, in order to solve Eq.~\eqref{eq:twirled_X_comm} one needs to determine the $D$ unknown coefficients $\{c_\mu(X)\}_{\mu=1}^D$. This can be achieved by finding $D$ equations to form a linear system problem. In particular, we note that the change $X\rightarrow P_\nu X $ for some $P_\nu\in \SC^{(k)}(G)$ leads to 
\begin{align}\label{eq:twirled_X_comm_2}
    \TC^{(k)}_{G}[P_\nu X  ]&=\int_{G}d\mu(g) g^{\otimes k} P_\nu X  (g\ad)^{\otimes k}\nonumber\\
    &=\int_{G}d\mu(g) P_\nu g^{\otimes k} X  (g\ad)^{\otimes k} \nonumber\\
    &=\sum_{\mu=1}^{D} c_\mu(X) P_\nu P_\mu \,,
\end{align}
where in the second line we have used the fact that $P_\nu$ belongs to the commutant $\CC^{(k)}(G)$.
Then, taking the trace on both sides of Eq.~\eqref{eq:twirled_X_comm_2} leads to 
\begin{align}\label{eq:LSP}
   \Tr[P_\nu X ] =\sum_{\mu=1}^{D}  \Tr[P_\nu P_\mu ] c_\mu(X)\,,
\end{align}
where we used the fact that $\Tr[\TC^{(k)}_{G}[X]]=\Tr[X]$. Repeating Eq.~\eqref{eq:LSP} for all $P_\nu$ in $ \SC^{(k)}(G)$ leads to $D$ equations. Thus, we can find the vector of unknown coefficients $\vec{c}(X)=(c_1(X),\ldots,c_D(X))$ by solving
\begin{equation}
    A\cdot \vec{c}(X)=\vec{b}(X)\,,
\end{equation} 
where $\vec{b}(X)=(\Tr[X P_1],\ldots, \Tr[X P_D])$. Here,  $A$ is a $D\times D$ symmetric matrix, known as the Gram matrix, whose entries are $(A)_{\nu\mu}=\Tr[P_\nu P_\mu]$. By inverting the  $A$ matrix, we can then find 
\begin{equation}\label{eq:inverse-vec-c}
    \vec{c}(X)=A^{-1}\cdot\vec{b}(X)\,.
\end{equation}
The matrix $A^{-1}$ is known as the Weingarten matrix. We refer the reader to Ref.~\cite{collins2006integration} for additional details on the Weingarten matrix.

From the previous, we can see that computing the twirl $\TC^{(k)}_{G}[X]$ in Eq.~\eqref{eq:twirl} requires calculating the matrix $A$ and inverting it (if such inverse exists). In what follows, we will consider the cases when $G$ is the unitary or  orthogonal group. For those cases we will present, in Supp. Info.~\ref{si:unitary} and~\ref{si:ortogonal} respectively, a simple explicit decomposition for the twirl into the $k$-th order commutant as in Eq.~\eqref{eq:twirled_X_comm}, which holds asymptotically in the limit of large $d$.

\section{The unitary group}\label{si:unitary}

In this section we present a series of results that will allow us to compute quantities of the form
\begin{align}
\mathbb{E}_{\mathbb{U}(d)}\left[\prod_{i=1}^k C_j(\rho_i)\right]&=\mathbb{E}_{\mathbb{U}(d)}\left[\prod_{i=1}^k \Tr[U\rho_iU\ad O_j]\right]\,.
\end{align}

\subsection{Twirling over the unitary group}

We begin by recalling that the standard representation of the unitary group of degree $d$, which we denote as $\mathbb{U}(d)$, is the group formed by all $d\times d$ unitary matrices acting on a $d$-dimensional Hilbert space $\HC$. That is
\begin{equation}
   \mathbb{U}(d)=\{U\in \GL(d)\quad|\quad U U\ad =U\ad U=\id \} \subset \GL(d)\,,
\end{equation}
where $\GL(d)$ is the general linear group, i.e., the group of all invertible matrices acting on $\HC$.
 Here, we have used the standard notation $R(U)=U$ for all the elements of the unitary group.

From the Schur-Weyl duality, we know that a basis for the $k$-th order commutant of $\mathbb{U}(d)$ is the  representation $P_d$ of the Symmetric group $S_k$ that permutes the $d$-dimensional subsystems in the $k$-fold tensor product
Hilbert space, $\HC^{\otimes k}$. That is, for a permutation $\sigma\in S_k$,
\begin{equation}\label{eq:rep-S_k}
P_d(\sigma)=\sum_{i_1,\dots,i_k=0}^{d-1} |i_{\sigma^{-1}(1)},\dots,i_{\sigma^{-1}(k)} \ra\la i_1,\dots,i_k|\,.
\end{equation}

Hence, 
\begin{equation}\label{eq:basis-comm-symmetric}
    \SC^{(k)}(\mathbb{U}(d))=\{ P_d(\sigma)\}_{\sigma\in S_k},
\end{equation}
and we note that $\SC^{(k)}(\mathbb{U}(d))$ contains $k!$ elements. 
In Supp. Fig.~\ref{fig:tens_Sk} we show the tensor representation  of the elements in $S_k$ for $k=1,2,3$, as well as an explicit illustrative example which showcases that the elements of $S_k$ commute with $U^{\otimes k}$.

\begin{figure}[t]
    \centering
    \includegraphics[width=1\columnwidth]{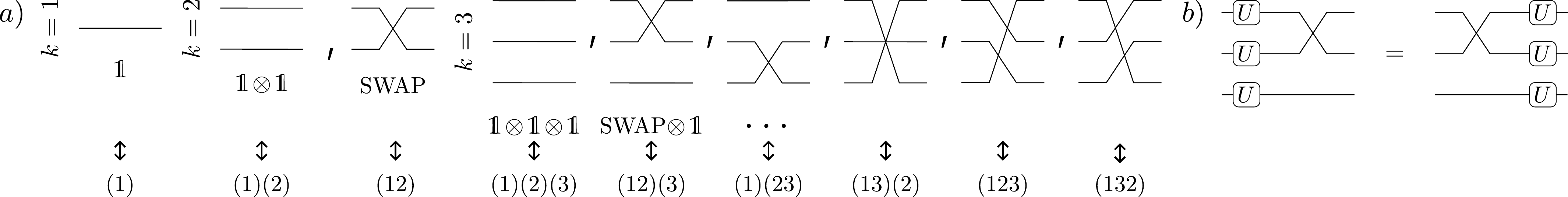}
    \caption{\textbf{Elements of $S_k$.} 
    a) We show the elements of $S_k$ for $k=1,2,3$. Here we can see that the ${\rm SWAP}$ operator in Eq.~\eqref{eq:swap} is a transposition that permutes two subsystems. Below each element of $S_k$ we also show its cycle decomposition. b) We explicitly show that an element of $S_3$ commutes with $U^{\otimes 3}$ for any $U\in\mathbb{U}(d)$.
    } 
    \refstepcounter{supfig}
    \label{fig:tens_Sk}
\end{figure}

To exemplify how one can use the previous result to compute twirls over the unitary group, let us consider the $k=1$ and $k=2$ cases. First, let $k=1$. Here, we can readily see in Supp. Fig.~\ref{fig:tens_Sk} that $S_1$ contains a single element, whose representation is given by the identity matrix $\id$. As such, the basis of the commutant contains one element,
\begin{equation}\label{eq:k_symmetries_unitary_1}
    \SC^{( 1)}(\mathbb{U}(d))=\{\id\}\,.
\end{equation}
The fact that the commutant is trivial also follows from the fact that the representation of $\mathbb{U}(d)$ with $k=1$ copies is irreducible. From here, we can build the  $1\times 1$ Gram matrix  
$A=\begin{pmatrix}
    d
    \end{pmatrix}$, 
so that the Weingarten matrix is $
    A^{-1}=\begin{pmatrix}
    \frac{1}{d}
    \end{pmatrix}
$, and $c_1(X)=\Tr[X]/d$. Hence,
\begin{align}\label{eq:twirl-k1}
    \TC^{(1)}_{\mathbb{U}(d)}[X]=\frac{\Tr[X]}{d}\id\,.
\end{align}

Next, we consider the case of $k=2$. Now the basis of the commutant contains two elements (see Supp. Fig.~\eqref{fig:tens_Sk}),
\begin{equation}\label{eq:k_symmetries_unitary_2}
    \SC^{( 2)}(\mathbb{U}(d))=\{\id\otimes \id,\,\SWAP\}\,.
\end{equation}
The ensuing Gram Matrix is
\begin{equation}\label{eq:Gram_Unitary}
    A=\begin{pmatrix}
    d^2 & d\\
    d & d^2
    \end{pmatrix}\,,
\end{equation}
and the Weingarten matrix is
\begin{equation}
    A^{-1}=\frac{1}{d^2-1}\begin{pmatrix}
    1 & \frac{-1}{d}\\
    \frac{-1}{d} & 1
    \end{pmatrix}\,.
\end{equation}
Now we find
\begin{align}
\begin{pmatrix}
    c_1(X)\\
    c_2(X)
    \end{pmatrix}&=\frac{1}{d^2-1}\begin{pmatrix}
    1 & \frac{-1}{d}\\
    \frac{-1}{d} & 1
    \end{pmatrix}\cdot \begin{pmatrix}
    \Tr[X]\\
    \Tr[X \SWAP]
    \end{pmatrix}= \frac{1}{d^2-1}\begin{pmatrix}
    \Tr[X]-\frac{\Tr[X \SWAP]}{d}\\
    \Tr[X \SWAP]-\frac{\Tr[X]}{d}
    \end{pmatrix}\,.
\end{align}
Hence, 
\begin{align}\label{eq:twirl-k2}
    \TC^{(2)}_{\mathbb{U}(d)}[X]=&\frac{1}{d^2-1}\left(\Tr[X]-\frac{\Tr[X \SWAP]}{d}\right)\id\otimes \id +\frac{1}{d^2-1}\left(\Tr[X \SWAP]-\frac{\Tr[X]}{d}\right) \SWAP\,.
\end{align}

For more general $k$ the process of building the Gram matrix and inverting it can become quite cumbersome as the matrix $A$ will be a $k!\times k!$ dimensional matrix. However, since we are interested in the large-$d$ limit, we can use the following result (presented in the main text as Lemma 3).

\begin{suptheorem}\label{theo-unitary}
Let $X$ be an operator in $\BC(\HC^{\otimes k})$, the twirl of $X$ over $\mathbb{U}(d)$, as defined in Eq.~\eqref{eq:twirl} is
\begin{align}
    \TC^{(k)}_{\mathbb{U}(d)}[X]=\frac{1}{d^k}\sum_{\sigma\in S_k}\Tr[XP_d(\sigma)]P_d(\sigma^{-1})+\frac{1}{d^k}\sum_{\sigma,\pi\in S_k}c_{\sigma,\pi}\Tr[XP_d(\sigma)]P_d(\pi)\,,
\end{align}
where the constants $c_{\sigma,\pi}$ are in $\OC(1/d)$.
\end{suptheorem}

In order to prove Supplemental Theorem~\ref{theo-unitary}, we find it convenient to recall the following definitions.

\begin{supdefinition}[Permutation Cycle]\label{def:permu_cycle} Let $\sigma$ be a permutation belonging to $S_k$. A permutation cycle $c$ is a set of indices $\{i_m, \sigma(i_m), \sigma(\sigma(i_m)), \dots \}$ that are closed under the action of $\sigma$.
\end{supdefinition}

Equipped with Supplemental Definition~\ref{def:permu_cycle}, we can now introduce the cycle decomposition of a permutation.

\begin{supdefinition}[Cycle Decomposition]\label{def:cycle-decom}
Given a permutation $\sigma \in S_k$, its cycle decomposition is an expression of $\sigma$ as a product of disjoint cycles
\begin{equation}\label{eq:cycle-decomp}
\sigma = c_1\cdots c_r\,.
\end{equation}
\end{supdefinition}
We will henceforth refer to  those indexes which are not permuted, i.e., which are contained in length-one  cycles, as fixed points. Moreover, we note that while it is usually standard to drop in the notation of Eq.~\eqref{eq:cycle-decomp} the cycles of length one, (i.e., the  cycles where an element is left unchanged), we assume that  Eq.~\eqref{eq:cycle-decomp} contains \textit{all} cycles, including those of length one (see Supp. Fig.~\ref{fig:tens_Sk}). We also remark that the cycle decomposition is unique, up to permutations of the cycles (since they are disjoint) and up to cyclic shifts within the cycles (since they are cycles). For example, we can express some permutation $\sigma\in S_5$ as $(145)(23)$ or as $(23) (451)$, that is, as the composition of a length-two cycle and a length-three cycle. Here we also recall that length-two cycles are also known as \textit{transpositions}.

While the  cycle notation is useful to identify each element in the symmetric group, we will be interested in counting how many cycles, and of what length, are contained in each $\sigma\in S_k$.  This can be addressed by defining the cycle type.
\begin{supdefinition}[Cycle type]\label{def:cycle-type}
    Given a permutation $\sigma \in S_k$, its cycle type $\nu(\sigma)$ is a vector of length $k$ whose entries indicate how many cycles of each length are present in the cycle decomposition of $\sigma$. That is, 
\begin{equation}
    \nu(\sg) = (\nu_1,\cdots,\nu_k)\,,
\end{equation}
where $\nu_j$ denotes the number of length-$j$ cycles in $\sigma$. 
\end{supdefinition}
The cycle type of every element in $S_k$ is unique. In the previous example, $\nu((145)(23))=(0,1,1,0,0)$. Similarly, expressing the identity element as $e = (1)(2)(3)(4)(5)$, it is easy to see that its cycle type is $\nu(e)=(5,0,0,0,0)$.

These definitions allow us to prove the following proposition.

\begin{supproposition}\label{prop;characters}
Let $S_k$ be the symmetric group, and let $P_d$ be the representation which permutes subsystems of the $k$-fold tensor product of a $d$-dimensional Hilbert space as in  Eq.~\eqref{eq:rep-S_k}.  
Then, the character of an element $\sigma\in S_k$   is
\begin{equation}
    \chi(\sg) = \Tr[ P_d(\sg) ]  = d^{\norm{\nu(\sg)}_1}=d^r\,,
\end{equation}
where $\nu(\sg)$ is the cycle type  of $\sigma$ as defined in Supplemental Definition~\ref{def:cycle-type}, and $r$ is the number of cycles in the cycle decomposition of $\sigma$ as in Supplemental Definition~\ref{def:cycle-decom}.
\end{supproposition}
\begin{proof}
Let us begin by re-writing Eq.~\eqref{eq:rep-S_k} in term of the cycles decomposition of $\sigma$ as in Supplemental Definition~\ref{def:cycle-decom}. That is, given $\sigma = c_1\cdots c_r$, we write

\begin{equation}\label{eq:cycles}
P_d(\sigma)=\bigotimes_{\alpha=1}^r\left(\sum_{i_1,\ldots,i_{|c_\alpha|}=0}^{d-1}  |i_{c_\alpha^{-1}(1)},\dots,i_{c_\alpha^{-1}(|c_\alpha|)} \ra\la i_1,\dots,i_{|c_\alpha|}|\right)\,,\nonumber
\end{equation}
where $|c_\alpha|$ denotes the length of the $c_\alpha$ cycle. Then, the character of $\sigma$ is 
\begin{equation}
\label{eq:almost-chi}
    \chi(\sg) = \Tr[ P_d(\sg) ]
    = \prod_{\alpha=1}^r\left(\sum_{i_1\ldots,i_{|c_\alpha|}=0}^{d-1} \Tr[ |i_{c_\alpha^{-1}(1)},\dots,i_{c_\alpha^{-1}(|c_\alpha|)} \ra\la i_1,\dots,i_{|c_\alpha|}|]\right)\,.
\end{equation}
Here we can use the fact that 
\begin{equation}
    \Tr[ |i_{c_\alpha^{-1}(1)},\dots,i_{c_\alpha^{-1}(|c_\alpha|)} \ra\la i_1,\dots,i_{|c_\alpha|}|]=\prod_{\beta=1}^{|c_\alpha|}\delta_{i_\beta,i_{c_\alpha^{-1}(\beta)}}\,,
\end{equation}
which means that for any cycle $c_\alpha$, independently of its length, one has
\begin{equation}
    \sum_{i_1\ldots,i_{|c_\alpha|}=0}^{d-1}\Tr[ |i_{c_\alpha^{-1}(1)},\dots,i_{c_\alpha^{-1}\left(|c_\alpha|\right)} \ra\la i_1,\dots,i_{|c_\alpha|}|]=\sum_{i_1\ldots,i_{|c_\alpha|}=0}^{d-1}\prod_{\beta=1}^{|c_\alpha|}\delta_{i_\beta,i_{c_\alpha^{-1}(\beta)}}=d\,.
\end{equation}
Replacing in Eq.~\eqref{eq:almost-chi} leads to
\begin{equation}
    \chi(\sg) =\prod_{\alpha=1}^rd=d^r= d^{\norm{\nu(\sg)}_1}\,.
\end{equation}
\end{proof}

\begin{figure}[t]
    \centering
    \includegraphics[width=1\columnwidth]{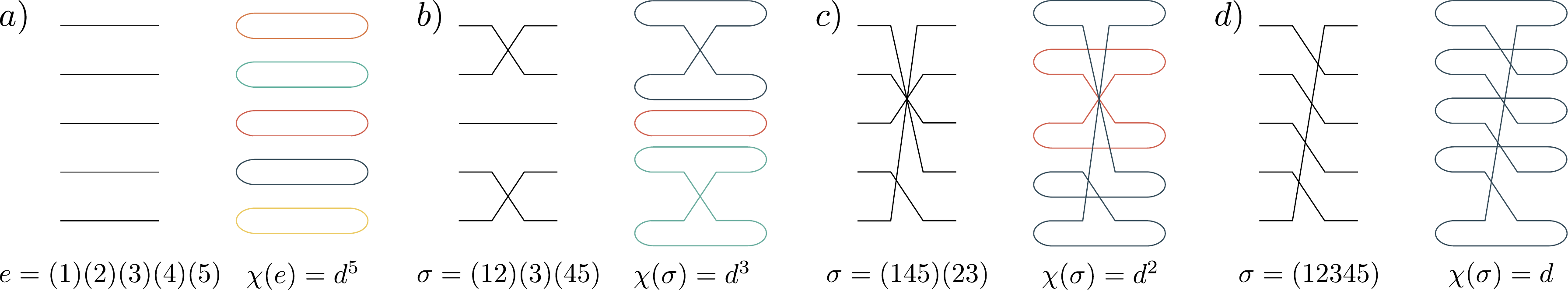}
    \caption{\textbf{Computing the character of elements of $S_5$.} 
    In all panels we present an elements of $S_5$ on the left, and then compute its trace on the right to obtain its character (see Eq.~\eqref{eq:character}).  Here we can verify that, as indicated by Proposition~\ref{prop;characters}, the character is given by $d$ raised to the number of cycles in $\sigma$.
    } 
    \refstepcounter{supfig}
    \label{fig:example-chi}
\end{figure}

In Supp. Fig.~\ref{fig:example-chi} we show an example where we compute the character for elements of $S_5$ and verify that they are indeed equal to $d^{\norm{\nu(\sg)}_1}$.

Supplemental Proposition~\ref{prop;characters} implies the following result.

\begin{supproposition}\label{prop-character-unitary}
The character of any $\sigma\in S_k$ is uniquely maximized by the identity element $e\in S_k$, in which case it  is equal to $\chi(e)=d^k$. 
\end{supproposition}
\begin{proof}
First, let us note that the identity element $e$ is composed of $k$ $1$-cycles. Thus, according to Supplemental Proposition~\ref{prop;characters}, $\chi(e)=d^k$. Next let us note that by definition, all remaining elements in $S_k\setminus \{e\}$ must contain at least some cycle that is not a $1$-cycle. This implies $\chi(\sigma)\leq d^{k-1}$ for any $\sigma\in S_k\setminus \{e\}$. In fact, it is not hard to see that the elements in $S_k$ whose character is exactly equal to $d^{k-1}$ are those composed of a single transposition. Consequently, the function $\chi:S_k \xrightarrow{} \mathbb{C}$ has a unique maximum at $\sg=e$.
\end{proof}

From here, we can show that the following result holds.

\begin{supproposition}\label{prop:products}
Let $P_d$ be the subsystem-permuting representation of $S_k$ as defined in Eq.~\eqref{eq:rep-S_k}. Given a pair of permutations $\sigma,\pi\in S_k$, then
\begin{equation}
 \Tr[P_d(\sg)P_d(\pi)] \begin{cases}
 =d^k \quad\text{if $\pi = \sg^{-1}$} \\ 
 \leq d^{k-1} \text{else}
 \end{cases}\,.
\end{equation}
\end{supproposition}
\begin{proof}
Let us begin by noting that since $S_k$ forms a group, then for any $\sigma$ and $\pi$ in $S_k$, $\xi:=\sigma\pi$ is also  in $S_k$. Moreover,  since $P_d$ is a representation, we have
\begin{equation}
     \Tr[P_d(\sg)P_d(\pi)]= \Tr[P_d(\sg \pi)]= \Tr[P_d(\xi)]=\chi(\xi)\,.
\end{equation}

As such, we now need to ask the question, how large can the character $\chi(\xi)$ be? Supplemental Proposition~\ref{prop-character-unitary} indicates that the character is maximal for the identity element. If $\xi=e$, this implies by the uniqueness of the inverse that $\pi=\sigma^{-1}$ (or equivalently  $\sigma =\pi^{-1}$). In this case we find $ \Tr[P_d(\sg)P_d(\pi)]=\chi(e)=d^k$. However, if $\pi \neq \sg^{-1}$, then $\xi \neq \id$ and via Supplemental Proposition~\ref{prop-character-unitary}, we have $\chi(\xi) \leq d^{k-1}$.
\end{proof}

Let us now go back to computing the twirl $\TC^{(k)}_{G}[X]$. First, we find it convenient to reorder the basis $\SC^{(k)}(\mathbb{U}(d))$ such that its first element are the representation of permutations which are their own inverse, i.e.,  $\sigma=\sigma^{-1}$ . Next,  we order the rest of the elements $\sigma\neq\sigma^{-1}$ by placing $P_d(\sigma)$ next to $P_d(\sigma^{-1})$. We recall  that the elements such that $\sigma=\sigma^{-1}$ are known as involutions and must  consist of a product of disjoint transpositions plus fixed points. It is well known that the number of involutions is given by $I_k=\sum_{\eta=0}^{\lfloor \frac{k}{2}\rfloor}\binom{k}{2\eta}(2\eta-1)!!$.

Then, the following result holds.
\begin{supproposition}
    The $A$ matrix, of dimension $k!\times k!$, can be expressed as
    \begin{equation}
        A= d^k(\widetilde{A}+\frac{1}{d}B)\,.
    \end{equation}
    Here we defined
    \begin{equation}
\widetilde{A}=\id_{I_k}\bigoplus_{j=1}^{\frac{k!-I_k}{2}}\begin{pmatrix}
            0 && 1 \\ 1 && 0
        \end{pmatrix}\,,
    \end{equation}
where here $\id_{I_k}$ denotes the $I_k\times I_k$ dimensional identity. Moreover, the matrix $B$ is such that its entries are  $\OC(1)$.
\end{supproposition}

More visually, the matrix $\widetilde{A}$ is of the form
\begin{equation}
\widetilde{A} = \begin{pmatrix}
\, \underbrace{\boxed{\begin{matrix} 1 & 0 & \cdots &  0 \\ 
 0& 1 & \cdots & 0 \\
\vdots & \vdots & \ddots & \vdots \\
0 & \cdots & 0 & 1 \end{matrix}}}_{I_k\times I_k} &  &   & {\makebox(0,0){\text{\huge0}}} &   \\ 
 &  \begin{matrix} 0 & 1  \\
1 & 0 \end{matrix} &     &   \\ 
 &  &  \ddots &  &   \\ 
  {\makebox(0,0){\text{\huge0}}}&    &  &  \begin{matrix} 0 & 1  \\
1 & 0 \end{matrix} \\ 
\end{pmatrix}\,.\nonumber
\end{equation}
And we remark the fact that $\widetilde{A}$ is its own inverse. That is, $\widetilde{A}^{-1}=\widetilde{A}$.

\begin{proof}
Let us recall that the entries of the matrix $A$ are of the form $A_{\nu\mu}=\Tr[P_\nu P_\mu]$ where $P_\nu,P_\mu\in \SC^{(k)}(\mathbb{U}(d))$. From Supplemental Proposition~\ref{prop:products} we  know that 
\begin{equation}
A = \begin{pmatrix}
\, \begin{matrix} d^k & a_{1,2} & \cdots &  a_{1,I_k}  & a_{1,I_{k1}} &\cdots   &  & &  a_{1,k!}\\ 
 a_{2,1}& d^k & \cdots & a_{2,I_k} &  &   &  & & \vdots\\
\vdots & \vdots & \ddots & \vdots  &  &   &  & \\
a_{I_k,1} & \cdots & a_{I_k,I_k-1}  & d^k &  &   &  &  \\
a_{I_{k+1},1} &  &   &  &  a_{I_k+1,I_k+1} & d^k \\
\vdots &  &   &  & d^k & a_{I_k+2,I_k+2}  \\ 
 &  &   &  & &  &  \ddots &  & \\
  &  &   &  & & & & a_{k!-1,k!-1} & d^k \\
  a_{k!,1}& \cdots  &   &  & & & & d^k & a_{k!,k!}
 \end{matrix}  
\end{pmatrix}\,,\nonumber
\end{equation}
where the matrix elements $a_{ij}\leq d^{k-1}$. This allows us to express the matrix $A$ as
\begin{equation}
    A= d^k(\widetilde{A}+\frac{1}{d}B)\,,
\end{equation}
where the entries in $B$ are at most equal to 1. 
    
\end{proof}

Here we present the following lemma proved in~\cite{caravelli2017complex}.

\begin{suplemma}\label{lem:inv}
Assume that the matrices $M$ and $\Omega+M$ are invertible. Then, 
\begin{equation}
    (\Omega+M)^{-1}=M^{-1}-(\id+M^{-1}\Omega)^{-1}M^{-1}\Omega M^{-1}\,.
\end{equation}
\end{suplemma}

Using Supplemental Lemma~\ref{lem:inv},  setting $M=\widetilde{A}$, $\Omega=\frac{1}{d}B$ and noting that $A\propto (\widetilde{A}+\frac{1}{d}B)$ always has inverse~\cite{collins2006integration,puchala2017symbolic}, we find
\begin{equation}
    A^{-1}=\frac{1}{d^k}\left(\widetilde{A}-\frac{1}{d}(\id+\frac{1}{d}\widetilde{A}B)^{-1}\widetilde{A}B \widetilde{A}\right)    =\frac{1}{d^k}\left(\widetilde{A}-C\right)\,,
\end{equation}
where we have defined
\begin{equation}\label{eq:C-matrix}
    C=\frac{1}{d}(\id+\frac{1}{d}\widetilde{A}B)^{-1}\widetilde{A}B \widetilde{A}\,.
\end{equation}
Now, let us write $\widetilde{A}B=QDQ^{-1}$, where $D$ is diagonal and contains the eigenvalues of $\widetilde{A}B$. Using the Perron-Frobenius theorem, it follows that $|D_{\nu\nu}|\leq k!$ for any $\nu$, and thus $(\id+\frac{1}{d}\widetilde{A}B)^{-1}= Q (\id+\frac{1}{d} D)^{-1} Q^{-1}\approx \id$ for large $d$. As a consequence, the matrix entries of $C$ are in $\OC(1/d)$. 
Combining the previous result with Eqs.~\eqref{eq:twirled_X_comm} and~\eqref{eq:inverse-vec-c} leads to 
\begin{align}
    \TC^{(k)}_{G}[X]=\frac{1}{d^k}\sum_{\sigma\in S_k}\Tr[XP_d(\sigma)]P_d(\sigma^{-1})+\frac{1}{d^k}\sum_{\sigma,\pi\in S_k}c_{\sigma,\pi}\Tr[XP_d(\sigma)]P_d(\pi)\,,
\end{align}
where the $c_{\sigma,\pi}$ are the matrix entries of $C$ as defined in Eq.~\eqref{eq:C-matrix}. This is then precisely the statement of Supplemental Theorem~\ref{theo-unitary}.

\subsection{Computing expectation values of twirled operators}

Let us  consider an expectation value of the form
\begin{equation}
    C(\rho_i)=\Tr[U\rho_i U\ad O]\,,
\end{equation}
where $\rho_i$ is a quantum state and $O$ be some traceless quantum operator such that $O^2=\id$. 
Next, let us consider the task of estimating expectation values of the form
\begin{align}
\mathbb{E}_{\mathbb{U}(d)}\left[\prod_{i=1}^k C(\rho_i)\right]&=\mathbb{E}_{\mathbb{U}(d)}\left[\prod_{i=1}^k \Tr[U\rho_{i}U\ad O]\right]\,.
\end{align}

Here, we will show that in the large-$d$ limit, the following theorem holds.
\begin{suptheorem}\label{theo-moments}
Let $\rho_i$ for $i=1,\ldots,k$ be a multiset of quantum states such that $\Tr[\rho_i\rho_{i'}]\in\Omega\left(\frac{1}{\poly(\log(d))}\right)$ for all $i,i'$,  and let  $O$ be some traceless Hermitian operator such that $O^2=\id$. Then let us define $T_k\subseteq S_k$ the set  of $k/2$ disjoint transpositions. That is, for any $\sigma\in T_k$, its cycle decomposition is $\sigma=c_1\cdots c_{k/2}$ where each $c_\alpha$ is a transposition for all $\alpha=1,\ldots,k/2$. Then, in the large-$d$ limit we have
\begin{align}
\mathbb{E}_{\mathbb{U}(d)}\left[\prod_{i=1}^k C(\rho_i)\right]&=\frac{1}{d^{k/2}}\sum_{\sigma\in T_k}\prod_{\alpha=1}^{k/2}\Tr[\rho_{c_\alpha(1)}\rho_{c_\alpha(2)}]\,.
\end{align}
\end{suptheorem}

To prove this theorem, let us first re-write
\begin{equation}
\mathbb{E}_{\mathbb{U}(d)}\left[\prod_{i=1}^k C(\rho_i)\right]=\mathbb{E}_{\mathbb{U}(d)}\left[\prod_{i=1}^k \Tr[U\rho_{i}U\ad O]\right]=\mathbb{E}_{\mathbb{U}(d)}\left[\Tr[U^{\otimes k}\Lambda \,(U\ad)^{\otimes k} O^{\otimes k}]\right]\,,
\end{equation}
where $\Lambda=\rho_{1}\otimes\cdots\otimes \rho_{k} $. Explicitly, 
\begin{align}
\mathbb{E}_{\mathbb{U}(d)}\left[\prod_{i=1}^k C(\rho_i))\right]&=\int_{\mathbb{U}(d)}d\mu(U) \Tr[U^{\otimes k}\Lambda\,(U\ad)^{\otimes k} O^{\otimes k}]]\nonumber\\
&=\Tr\left[\left(\int_{\mathbb{U}(d)}d\mu(U) U^{\otimes k}\Lambda\,(U\ad)^{\otimes k}\right) O^{\otimes k}\right]\,.
\end{align}
Here we can use Supplemental Theorem~\ref{theo-unitary} to find 
\begin{align}\label{eq:exp-prod-costs}
\mathbb{E}_{\mathbb{U}(d)}\left[\prod_{i=1}^k C(\rho_i)\right]&=\frac{1}{d^k}\sum_{\sigma\in S_k}\Tr[\Lambda P_d(\sigma)]\Tr[P_d(\sigma^{-1})O^{\otimes k} ] +\frac{1}{d^k}\sum_{\sigma,\pi\in S_k}c_{\sigma,\pi}\Tr[\Lambda P_d(\sigma)]\Tr[P_d(\pi)O^{\otimes k}]\,.
\end{align}

Now, let us prove the following proposition.
\begin{supproposition}\label{prop:powers-O}
    Let $O$ be a traceless, Hermitian and unitary operator, in which case $O^2=\id$. Then we have
    $\Tr[P_d(\sigma)O^{\otimes k} ]=0$ for any  $\sigma\in S_k$ if $k$ is odd, and $\Tr[P_d(\sigma)O^{\otimes k} ]=d^{r}$ if  $k$ is even and $\sigma$ is a product of $r$ disjoint cycles of even length. The maximum of $\Tr[P_d(\sigma)O^{\otimes k} ]$ is therefore achieved when $\sigma$ is a product of $k/2$ disjoint transpositions, leading to $\Tr[P_d(\sigma)O^{\otimes k} ]=d^{k/2}$.
\end{supproposition}

\begin{proof}
We first consider the case of $k$ being odd. Let us express $\sigma$ in its cycle decomposition $
\sigma = c_1\cdots c_r$ as in Supplemental  Definition~\ref{def:cycle-decom}. 
We have that 
\begin{equation}\label{eq:cycles-odd}
\Tr[P_d(\sigma)O^{\otimes k} ]=\prod_{\alpha=1}^r    \Tr[P_d(c_{\alpha})O^{\otimes |c_\alpha|} ]\,.
\end{equation}
Because $k$ is odd, we know that there must exist at least one cycle acting on an odd number of subsystems in the right hand side of Eq.~\eqref{eq:cycles-odd}. Let us assume that this occurs for the cycle $c_{\alpha'}$. Then, we will have 
\begin{equation}
    \Tr[P_d(c_{\alpha})O^{\otimes |c_\alpha|} ]=\Tr[O^{|c_\alpha|}]=\Tr[O]=0\,.
\end{equation}
Here we have used the fact that $O^2=\id$, and hence, since $|c_\alpha|$ is odd, we have $O^{|c_\alpha|}=O$.

Next, let us consider the case of $k$ being even. We know from Eq.~\eqref{eq:cycles-odd} that if $\sigma$ contains any cycle acting on an odd number of subsystems, then $\Tr[P_d(\sigma)O^{\otimes k} ]$ will be equal to zero. This means that only the permutations $\sigma$ composed entirely of cycles acting on even number of subsystems will have non-vanishing trace. If this is the case, we will have
\begin{equation}\label{eq:cycles-even}
\Tr[P_d(\sigma)O^{\otimes k} ]=\prod_{\alpha=1}^r    \Tr[P_d(c_{\alpha})O^{\otimes |c_\alpha|} ]=\prod_{\alpha=1}^rd=d^r\,.
\end{equation}
This follows from the fact that if $|c_\alpha|$ is even, then $O^{|c_\alpha|}=\id$.
Moreover, Eq.~\eqref{eq:cycles-even} will be maximized for the case when $r$ is largest, which corresponds to the case when $\sigma$ is a product of $k/2$ disjoint transpositions.
For this special case one finds 
\begin{equation}
\Tr[P_d(\sigma)O^{\otimes k} ]=d^{k/2}\,.
\end{equation}

\end{proof}

Next, let us prove the following result.
\begin{supproposition}\label{prop:powers-rho}
    Let $\Lambda =\rho_{1}\otimes\cdots\otimes \rho_{k} $ be a tensor product of $k$ quantum states. Then $\left|\Tr[\Lambda P_d(\sigma)]+\Tr[\Lambda P_d(\sigma^{-1})]\right|\leq 2$ for all $\sigma\in S_k$.
\end{supproposition}

\begin{proof}
    Let us again decompose $\sigma$ in its cycle decomposition $
\sigma = c_1\cdots c_r$ as in Supplemental Definition~\ref{def:cycle-decom}. First, let us assume that $\rho_1,\dots,\rho_k$ are pure. Then, we will have
\begin{equation} \label{eq:app_states_contribution}
\Tr[\Lambda P_d(\sigma)]=\prod_{\alpha=1}^r\Tr[\rho_{{c_\alpha(1)}}\cdots \rho_{{c_\alpha(|c_\alpha|)}}]= \prod_{\alpha=1}^r \bra{\psi_{c_\alpha(1)}}\ket{\psi_{c_\alpha(2)}} \cdots \bra{\psi_{c_\alpha(|c_\alpha|-1)}}\ket{\psi_{c_\alpha(|c_\alpha|)}} \bra{\psi_{c_\alpha(|c_\alpha|)}}\ket{\psi_{c_\alpha(1)}}  \,.
\end{equation}
Using that $|\bra{\psi_i}\ket{\psi_j}|\leq 1$ $\forall i,j$ and the fact that the modulus of the product of complex numbers is the product of the moduli, we find that $|\Tr[\Lambda P_d(\sigma)]|\leq 1$ for all $\sigma\in S_k$. Next, let us assume that $\rho_1,\dots,\rho_k$ are general arbitrary quantum states, i.e., they are not necessarily pure. Using that any quantum state can be written as a convex combination of orthonormal pure states, i.e., $\rho_i=\sum_{k_i} \lambda_{k_i} \ketbra{\psi_{k_i}}$ with $\lambda_{k_i}$ being real non-negative numbers such that $\sum_{k_i} \lambda_{k_i}=1$, we obtain
\begin{align} 
\Tr[\Lambda P_d(\sigma)]&=\prod_{\alpha=1}^r\Tr[\rho_{{c_\alpha(1)}}\cdots \rho_{{c_\alpha(|c_\alpha|)}}] \nonumber\\&=  \prod_{\alpha=1}^r \sum_{k_{c_\alpha(1)},\dots,k_{c_\alpha(|c_\alpha|)} } \lambda_{k_{c_\alpha(1)}} \cdots \lambda_{k_{c_\alpha(|c_\alpha|)}}\bra{\psi_{k_{c_\alpha(1)}}}\ket{\psi_{k_{c_\alpha(2)}}} \cdots \bra{\psi_{k_{c_\alpha(|c_\alpha|-1)}}}\ket{\psi_{k_{c_\alpha(|c_\alpha|)}}} \bra{\psi_{k_{c_\alpha(|c_\alpha|)}}}\ket{\psi_{k_{c_\alpha(1)}}}\,,
\end{align}
which leads to 
\begin{align} 
    |\Tr[\Lambda P_d(\sigma)]| &= \prod_{\alpha=1}^r \left|\sum_{k_{c_\alpha(1)},\dots,k_{c_\alpha(|c_\alpha|)} } \lambda_{k_{c_\alpha(1)}} \cdots \lambda_{k_{c_\alpha(|c_\alpha|)}}\bra{\psi_{k_{c_\alpha(1)}}}\ket{\psi_{k_{c_\alpha(2)}}} \cdots \bra{\psi_{k_{c_\alpha(|c_\alpha|-1)}}}\ket{\psi_{k_{c_\alpha(|c_\alpha|)}}} \bra{\psi_{k_{c_\alpha(|c_\alpha|)}}}\ket{\psi_{k_{c_\alpha(1)}}} \right|\nonumber \\ &\leq \nonumber \prod_{\alpha=1}^r \sum_{k_{c_\alpha(1)},\dots,k_{c_\alpha(|c_\alpha|)} } \lambda_{k_{c_\alpha(1)}} \cdots \lambda_{k_{c_\alpha(|c_\alpha|)}} \left|\bra{\psi_{k_{c_\alpha(1)}}}\ket{\psi_{k_{c_\alpha(2)}}} \cdots \bra{\psi_{k_{c_\alpha(|c_\alpha|-1)}}}\ket{\psi_{k_{c_\alpha(|c_\alpha|)}}} \bra{\psi_{k_{c_\alpha(|c_\alpha|)}}}\ket{\psi_{k_{c_\alpha(1)}}}\right|\\ & \leq \prod_{\alpha=1}^r \sum_{k_{c_\alpha(1)},\dots,k_{c_\alpha(|c_\alpha|)} } \lambda_{k_{c_\alpha(1)}} \cdots \lambda_{k_{c_\alpha(|c_\alpha|)}} = 1 \,,
\end{align}
where again we used that the modulus of the product of complex numbers is the product of the moduli in the first equality, the triangle inequality in the first inequality, and the fact that $|\bra{\psi_i}\ket{\psi_j}|\leq 1$ $\forall i,j$ in the second inequality.
Hence, using again the triangle inequality, we arrive at
\begin{equation}
\left|\Tr[\Lambda P_d(\sigma)]+\Tr[\Lambda P_d(\sigma^{-1})]\right|\leq \left|\Tr[\Lambda P_d(\sigma)]\right|+\left|\Tr[\Lambda P_d(\sigma^{-1})]\right|\leq 2\,,    
\end{equation}
for any $\sigma\in S_k$.
 \end{proof}

With Supplemental Propositions~\ref{prop:powers-O} and~\ref{prop:powers-rho} we can now state the following result.

\begin{supproposition}\label{prop:off-diag}
     Let $O$ be a traceless Hermitian operator such that $O^2=\id$. Let $\Lambda=\rho_{1}\otimes\cdots\otimes \rho_{k} $ be a tensor product of $k$ quantum states. Then, for all $\pi$ and $\sigma$ in $S_k$,
     \begin{equation}
       \frac{1}{d^k}  \left|(c_{\sigma,\pi}\Tr[\Lambda P_d(\sigma)]+c_{\sigma^{-1},\pi}\Tr[\Lambda P_d(\sigma^{-1})])\Tr[P_d(\pi)O^{\otimes k}]\right|\in\OC\left(\frac{1}{d^{\frac{k+2}{2}}}\right)\,.
     \end{equation}
\end{supproposition}

\begin{proof}
We begin by assuming, without loss of generality, that $|c_{\sigma^{-1},\pi}|\leq |c_{\sigma,\pi}|$. Thus, we have
\begin{equation}
   \frac{1}{d^k} \left|(c_{\sigma,\pi}\Tr[\Lambda P_d(\sigma)]+c_{\sigma^{-1},\pi}\Tr[\Lambda P_d(\sigma^{-1})])\Tr[P_d(\pi)O^{\otimes k}]\right|\leq \frac{|c_{\sigma,\pi}|}{d^k}\left|(\Tr[\Lambda P_d(\sigma)]+\Tr[\Lambda P_d(\sigma^{-1})])\Tr[P_d(\pi)O^{\otimes k}]\right|\nonumber \,.
\end{equation}
Then, from Supplemental Propositions~\ref{prop:powers-O} and~\ref{prop:powers-rho} we find
\begin{equation}
   \frac{|c_{\sigma,\pi}|}{d^k}\left|(\Tr[\Lambda P_d(\sigma)]+\Tr[\Lambda P_d(\sigma^{-1})])\Tr[P_d(\pi)O^{\otimes k}]\right|\leq \frac{|c_{\sigma,\pi}|}{d^k} 2 d^{k/2}=\frac{2|c_{\sigma,\pi}|}{d^{k/2}}\,.
\end{equation}
Since by definition, $|c_{\sigma,\pi}|\in\OC(1/d)$, we have 
\begin{equation}
    \left|\frac{1}{d^k}(c_{\sigma,\pi}\Tr[\Lambda P_d(\sigma)]+c_{\sigma^{-1},\pi}\Tr[\Lambda P_d(\sigma^{-1})])\Tr[P_d(\pi)O^{\otimes k}]\right|\in\OC\left(\frac{1}{d^{\frac{k+2}{2}}}\right)\,.
\end{equation}
    
\end{proof}

Finally, consider the following proposition.

\begin{supproposition}\label{prop:diag}
     Let $O$ be a traceless Hermitian operator such that $O^2=\id$. Let $\Lambda=\rho_{1}\otimes\cdots\otimes \rho_{k} $ be a tensor product of $k$ quantum states such that $\Tr[\rho_{i}\rho_{i'}]\in\Omega\left(\frac{1}{\poly(\log(d))}\right)$ for all $i,i'$. Then, 
     \begin{equation}
         \frac{1}{d^k}\Tr[\Lambda P_d(\sigma)]\Tr[P_d(\sigma)O^{\otimes k} ]\in\widetilde{\Omega}\left(\frac{1}{d^{k/2}}\right)
     \end{equation}
     if $\sigma$ is a product of $k/2$ disjoint transpositions, and
     \begin{equation}
         \frac{1}{d^k}\left|\Tr[|\Lambda P_d(\sigma)]\Tr[P_d(\sigma^{-1})O^{\otimes k} ]+\Tr[\Lambda P_d(\sigma^{-1})]\Tr[P_d(\sigma)O^{\otimes k} ]\right|\in\OC\left(\frac{1}{d^{\frac{k+2}{2}}}\right)
     \end{equation}
     for any other $\sigma$.
\end{supproposition}

\begin{proof}
We start by considering the case when $\sigma$ is a product of $k/2$ disjoint transpositions. We know from Supplemental Propositions~\ref{prop:powers-O} that $\Tr[P_d(\sigma)O^{\otimes k} ]=d^{k/2}$. Then, we have that 
\begin{equation}  \label{eq-ap:trace-perm-state}  \Tr[\Lambda P_d(\sigma)]=\prod_{\alpha=1}^{k/2}\Tr[\rho_{{c_\alpha(1)}}\rho_{{c_\alpha(2)}}]\in\Omega\left(\frac{1}{\poly(\log(d))}\right)\,,
\end{equation}
where we have used the fact that $\Tr[\rho_{i}\rho_{i'}]\in\Omega\left(\frac{1}{\poly(\log(d))}\right)$ for all $\rho_{i},\rho_{i'}$.
Thus, we know that 
\begin{equation}
     \frac{1}{d^k}\Tr[\Lambda P_d(\sigma)]\Tr[P_d(\sigma)O^{\otimes k} ]\in\widetilde{\Omega}\left(\frac{1}{d^{k/2}}\right)\,,
 \end{equation}
where the $\widetilde{\Omega}$ notation omits $(\poly(\log(d)))^{-1}$ factors.

 Next, let us consider the case of $\sigma$ not being a product of $k/2$ disjoint transpositions. Here, we have from Supplemental Proposition~\ref{prop:powers-O}
 \begin{equation}
 \Tr[P_d(\sigma)O^{\otimes k} ]=\Tr[P_d(\sigma^{-1})O^{\otimes k} ]=d^{r}\,,    
 \end{equation}
 with $r\leq \frac{k}{2}-1$.
 Then, the following chain of inequalities holds
  \begin{align}
     \frac{1}{d^k}\left|\Tr[\Lambda P_d(\sigma)]\Tr[P_d(\sigma^{-1})O^{\otimes k} ]+\Tr[\Lambda P_d(\sigma^{-1})]\Tr[P_d(\sigma)O^{\otimes k} ]\right|&\leq\frac{1}{d^{\frac{k+2}{2}}}\left|\Tr[\Lambda P_d(\sigma)]+\Tr[\Lambda P_d(\sigma^{-1})]\right|\nonumber\\&\leq\frac{2}{d^{\frac{k+2}{2}}}\in\OC\left(\frac{1}{d^{\frac{k+2}{2}}}\right)\,.
 \end{align}
 Where in the last line we have used Supplemental Proposition~\ref{prop:powers-rho}.
\end{proof}

To finish the proof of Supplemental Theorem~\ref{theo-moments} we simply combine Supplemental Propositions~\ref{prop:off-diag} and~\ref{prop:diag} and note that in the large-$d$ limit we get

\begin{align}\label{eq:exp-val-un}
\mathbb{E}_{\mathbb{U}(d)}\left[\prod_{\gamma=1}^k C_{\gamma}\right]&=\frac{1}{d^{k/2}}\sum_{\sigma\in T_k}\prod_{\alpha=1}^{k/2}\Tr[\rho_{{c_\alpha(1)}}\rho_{{c_\alpha(2)}}]\,,
\end{align}
where we have defined as $T_k\subseteq S_k$ the set of permutations which are exactly given by a product of $k/2$ disjoint transpositions.

Here we can additionally prove the following corollary from Supplemental Theorem~\ref{theo-moments}.
\begin{supcorollary}\label{cor:moments-gaussian-single}
Let $C(\rho_i)=\Tr[U\rho_i U\ad O]$, with   $O$ be some traceless Hermitian operator such that $O^2=\id$. Then, 
\begin{align}
\mathbb{E}_{\mathbb{U}(d)}\left[ C(\rho_i)^k\right]&= \frac{k!}{d^{k/2}2^{k/2} (k/2)!}\,.
\end{align}
\end{supcorollary}

\begin{proof}
    The proof of Supplemental Corollary~\ref{cor:moments-gaussian-single} simply follows from Supplemental Theorem~\ref{theo-moments} by noting that there are 
\begin{equation}
\frac{1}{(k/2)!}\binom{k}{2,2,\dots,2} = \frac{k!}{2^{k/2} (k/2)!} \,,
\end{equation}
elements in $T_k$.
\end{proof}

\subsection{States with overlap equal to $1/d$}

We will now consider quantum states such that $\Tr[\rho_i \rho_{i'}]=\frac{1}{d}$ for all $\rho_i\neq \rho_{i'}$ and $\Tr[\rho_i^2]\in\Omega\left(\frac{1}{\poly(\log(1/d))}\right)$ for all $\rho_i$. We prove the following Supplemental Theorem.

\begin{suptheorem} \label{sup-th:amoments_uncorrelated} Let $\rho_i$ for $i=1,\ldots,k$ be a multiset of quantum states such that $\Tr[\rho_i \rho_{i'}]=\frac{1}{d}$ for all $\rho_i\neq \rho_{i'}$ and $\Tr[\rho_i^2]\in\Omega\left(\frac{1}{\poly(\log(1/d))}\right)$ for all $\rho_i$,  and let  $O$ be some traceless Hermitian operator such that $O^2=\id$. Then, in the large-$d$ limit we have
\beq \label{eq:app_moments_uncorrelated} \mathbb{E}_{\mathbb{U}(d)}\left[\prod_{i=1}^k C(\rho_i)\right] = \frac{1}{d^{k/2}} \sum_{\sigma\in T^c_k(\Lambda)}  \prod_{(i,i')\in\sigma} \Tr[\rho_i \rho_{i'}] \,,\eeq
where we have defined $T^c_k(\Lambda)$ as the set of permutations belonging to $T_k$ that only connect identical states in $\Lambda=\rho_{1}\otimes\cdots\otimes \rho_{k} $.
\end{suptheorem}

\begin{proof}

We begin by recalling Eq.~\eqref{eq:exp-prod-costs},
\begin{align}
\mathbb{E}_{\mathbb{U}(d)}\left[\prod_{i=1}^k C(\rho_i)\right]&=\frac{1}{d^k}\sum_{\sigma\in S_k}\Tr[\Lambda P_d(\sigma)]\Tr[P_d(\sigma^{-1})O^{\otimes k} ] +\frac{1}{d^k}\sum_{\sigma,\pi\in S_k}c_{\sigma,\pi}\Tr[\Lambda P_d(\sigma)]\Tr[P_d(\pi)O^{\otimes k}]\,.
\end{align}
We know from Supplemental Proposition~\ref{prop:powers-O} that $\Tr[P_d(\sigma^{-1})O^{\otimes k}]$ is maximized by permutations $\sigma$ that are a product of $k/2$ disjoint transpositions. In particular, in that case we have $\Tr[P_d(\sigma^{-1})O^{\otimes k}]=d^{k/2}$. All other terms are suppressed by a factor $\OC(\frac{1}{d})$, and so are the $c_{\sigma,\pi}$. It is easy to see that for $\sigma\in T_k,$ $\Tr[\Lambda P_d(\sigma)]$ is maximized by the permutations that connect only identical states, because $\Tr[\rho_i\rho_{i'}]=\frac{1}{d}$ for all $\rho_i\neq \rho_{i'}$ while $\Tr[\rho_{i}^2]\in\Omega\left(\frac{1}{\poly(\log(1/d))}\right)$ for all $\rho_i$. 
Hence, in order to simultaneously maximize $\Tr[\Lambda P_d(\sigma)]$ and $\Tr[P_d(\sigma^{-1})O^{\otimes k} ]$,  $\sigma$ must be a product of $k/2$ disjoint transpositions such that each transposition connects two identical states in $\Lambda$, i.e. $\sigma$ must be in $T_k^c(\Lambda)$. More precisely, using an analogous derivation than that leading to Eq.~\eqref{eq-ap:trace-perm-state} for $\sigma\in T_k^c(\Lambda)$, we find
\begin{equation} \label{eq:app-lead-contr-uncorrelated}
     \frac{1}{d^k}\Tr[\Lambda P_d(\sigma)]\Tr[P_d(\sigma)O^{\otimes k} ]\in\widetilde{\Omega}\left(\frac{1}{d^{k/2}}\right)\,,
\end{equation}
when $\sigma\in T_k^c(\Lambda)$, and
\begin{equation} \label{eq:app-neglect-contr-uncorrelated}
     \frac{1}{d^k}\Tr[\Lambda P_d(\sigma)]\Tr[P_d(\sigma^{-1})O^{\otimes k} ]\in \OC\left(\frac{1}{d^{\frac{k+2}{2}}}\right)\,,
\end{equation}
otherwise. The latter equation follows from Supplemental Propositions~\ref{prop:powers-O} and~\ref{prop:powers-rho} when $\sigma\notin T_k$, and additionally from the condition $\Tr[\rho_i \rho_{i'}]=\frac{1}{d}$ for all $\rho_i\neq \rho_{i'}$ when $\sigma\in T_k$ but $\sigma\notin T_k(\Lambda)$. 
Furthermore, since Supplemental Proposition~\ref{prop:off-diag} also holds in this case, we recover Eq.~\eqref{eq:app_moments_uncorrelated} in the large-$d$ limit.

Notice that when $\Lambda$ is such that $T_k(\Lambda)=\emptyset$, then $\mathbb{E}_{\mathbb{U}(d)}\left[\prod_{i=1}^k C(\rho_i)\right]\in \OC\left(\frac{1}{d^{\frac{k+2}{2}}}\right)$, which is approximated by $0$ in the large-$d$ limit in Eq.~\eqref{eq:app_moments_uncorrelated}. The justification for this is the following: A set of random variables follows a joint multivariate Gaussian distribution iff any linear combination of them follows a univariate Gaussian. That is, $\YC=\sum_j  a_j C(\rho_j)$, where the $a_j$ are constants,  follows a Gaussian distribution $\NC(0,\sigma^2)$ with $\sigma^2=\sum_{j,j'} a_j a_{j'}{\rm Cov}[C(\rho_j), C(\rho_{j'})]$. Hence, the $k$-th moment of such univariate Gaussian will be a linear combination of all the $k$-th moments of the original random variables, as 
\begin{align}
    &\mathbb{E}_{\mathbb{U}(d)}\left[\YC^k\right] = \sum_{k_1+\dots+k_m=k} \binom{k}{k_1,\dots, k_m}  \mathbb{E}_{\mathbb{U}}\left[a_1^{k_1}C(\rho_1)^{k_1}\cdots a_m^{k_m} C(\rho_m)^{k_m}\right]\,,
\end{align}
Hence, in the large-$d$ limit we can neglect those contributions that are $\OC\left(\frac{1}{d^{\frac{k+2}{2}}}\right)$ against those that are $\widetilde{\Omega}\left(\frac{1}{d^{k/2}}\right)$.
\end{proof}

Finally, we explain here why when the overlaps between the states in $\mathscr{D}$ are such that $\Tr[\rho_i \rho_{i'}]\in o\left(\frac{1}{\poly(\log(1/d))}\right)$ for all $\rho_i\neq \rho_{i'}$ and $\Tr[\rho_{i}^2]\in\Omega\left(\frac{1}{\poly(\log(1/d))}\right)$ for all $\rho_i$, Supplemental Theorem~\ref{sup-th:amoments_uncorrelated} still holds. The reason is that in this case, the covariance matrix $\vec{\Sigma}_{i,i'}^{\mathbb{U}}$ can be approximated in the large-$d$ limit by that of the uncorrelated case, as $\frac{\Tr[\rho_i\rho_{i'}]}{\Tr[\rho_i^2]}\in o(1)$ for $\rho_i\neq \rho_{i'}$  (we recall from Lemma 1 that $\vec{\Sigma}_{i, i'}^{\mathbb{U}}=\frac{d}{d^2-1}\left(\Tr[\rho_{i}\rho_{i'}]-\frac{1}{d}\right)$). In turn, for the computation of higher moments it is easy to see that Eq.~\eqref{eq:app-lead-contr-uncorrelated} holds (this is direct, as we have the same condition on the purities of the states as before, i.e., $\Tr[\rho_{i}^2]\in\Omega\left(\frac{1}{\poly(\log(1/d))}\right)$ for all $\rho_i$), and so does Eq.~\eqref{eq:app-neglect-contr-uncorrelated}  when $\sigma\notin T_k$ (this follows from Supplemental Propositions~\ref{prop:powers-O} and~\ref{prop:powers-rho}). When $\sigma\in T_k$ but $\sigma\notin T_k(\Lambda)$, then we find that 
\begin{equation}
    \frac{\Tr[\Lambda P_d(\sigma)]}{\Tr[\Lambda P_d(\pi)]}\in o(1),
\end{equation}
for all $\pi\in T_k(\Lambda)$, which again is a direct consequence of the fact that  $\frac{\Tr[\rho_i\rho_{i'}]}{\Tr[\rho_i^2]}\in o(1)$ for all $\rho_i\neq \rho_{i'}$. Supplemental Proposition~\ref{prop:off-diag} also holds, as it does not make any assumption on $\Tr[\Lambda P_d(\sigma)]$. Thus, we conclude that Supplemental Theorem~\ref{sup-th:amoments_uncorrelated} is valid in this regime as well. It is important to remark here that in this case not all the terms that are neglected in Eq.~\eqref{eq:app_moments_uncorrelated} are suppressed as $\OC(\frac{1}{d})$ with respect to those that are retained. Instead, the separation is now $\OC\left({\rm max}_{i,i'}\left( \frac{\Tr[\rho_i\rho_{i'}]^2}{\Tr[\rho_i^2]\Tr[\rho_{i'}^2]}\right)\right)$. We recover the separation $\OC(\frac{1}{d})$ when the overlaps are such that $\Tr[\rho_i\rho_{i'}]\in \OC(\frac{1}{d})$ for all $\rho_i\neq \rho_{i'}$.

\subsection{Orthogonal states}

Here we will prove the following theorem for the special case when the states in $\Lambda$ are all mutually orthogonal.

\begin{suptheorem}\label{theo-moments-orth}
Let $\rho_i$ for $i=1,\ldots,k$ be a set of  mutually orthogonal quantum states such that $\Tr[\rho_i^2]\in \Omega\left(\frac{1}{\poly(\log d)}\right)$ for all $i$,  and let  $O$ be some traceless Hermitian operator such that $O^2=\id$. Then, in the large-$d$ limit we have
\begin{align}\label{eq:scaling-unit}
\mathbb{E}_{\mathbb{U}(d)}\left[\prod_{i=1}^k C(\rho_i)\right]&=\frac{k!}{2^{k/2} (k/2)!}\frac{(-1)^{k/2}}{d^k}\,.
\end{align}
\end{suptheorem}

Going back to Eq.~\eqref{eq:exp-prod-costs}, which we recall for convenience, we have 
\begin{align}
\mathbb{E}_{\mathbb{U}(d)}\left[\prod_{i=1}^k C(\rho_i)\right]&=\frac{1}{d^k}\sum_{\sigma\in S_k}\Tr[\Lambda P_d(\sigma)]\Tr[P_d(\sigma^{-1})O^{\otimes k} ] +\frac{1}{d^k}\sum_{\sigma,\pi\in S_k}c_{\sigma,\pi}\Tr[\Lambda P_d(\sigma)]\Tr[P_d(\pi)O^{\otimes k}]\,,
\end{align}
and we can  see that all the terms in the first summation are zero. This follows from the fact that $\Tr[\Lambda P_d(\sigma)]=\prod_{\alpha=1}^r\Tr[\rho_{{c_\alpha(1)}}\cdots \rho_{{c_\alpha(|c_\alpha|)}}]=0$ for all $\sigma\neq e$ (where we recall that  $P_d(e)=\id^{\otimes k}$) as all the states are orthogonal. Moreover, for the case of $\sigma= e$ one has $\Tr[P_d(e^{-1})O^{\otimes k} ]=\Tr[P_d(e)O^{\otimes k} ]=\Tr[O]^k=0$. Thus, one must here study the terms coming from the second summation.  

Following a similar argument as the one previously given, we see that the only terms that survive in the second summation are those of the form $\Tr[\Lambda P_d(e)]\Tr[P_d(\pi)O^{\otimes k}]=\Tr[P_d(\pi)O^{\otimes k}]$. 
Next, let us prove the following result.
\begin{supproposition}\label{sup:prop-ort}
    The term $\frac{1}{d^k}c_{e,\pi}\Tr[P_d(\pi)O^{\otimes k}]$ is maximized when $\pi$ is a product of $k/2$ disjoint transpositions.
\end{supproposition}

\begin{proof}

Let us start by using known results for the asymptotics of the Weingarten functions~\cite{collins2003moments}. We know that in the large-$d$ limit  $\frac{c_{e,\pi}}{d^k}=\frac{\mu(\pi)}{d^{k+|\pi|_t}}$, where $|\pi|_t$ is the smallest number of transpositions that $\pi$ is a product of, and $\mu(\pi)=\prod_{\alpha=1}^r (-1)^{|c_\alpha|_t} C_{|c_{\alpha}|_t}$, with $C_{|c_{\alpha}|_t}=\frac{1}{|c_{\alpha}|_t+1} \binom{2|c_{\alpha}|_t}{|c_{\alpha}|_t}$ the Catalan numbers. Here, we used that  $\pi$ has $r$ cycles which we denote as $c_1,\ldots,c_r$ following Supplemental Definition~\ref{def:cycle-decom}. Note that $\mu(\pi)$ is independent of $d$, and also that $C_0=C_1=1$. We can easily compute $|\pi|_t$ by noting that since each cycle can be decomposed as $|c_\alpha|-1$ transpositions we have $|\pi|_t=\sum_{\alpha=1}^r (|c_\alpha|-1)$. Note that if $\pi$ is a product of $k/2$ disjoint transpositions then $|\pi|_t=\sum_{\alpha=1}^{k/2}=\frac{k}{2}$, while if $\pi$ is a single $k$-cycle, then $|\pi|_t=\sum_{\alpha=1}^{1}(k-1)=k-1$. More generally, we have that $|\pi|_t=\sum_{\alpha=1}^r (|c_\alpha|-1)=k-r$.  Combining this result with Supplemental Proposition~\ref{prop:powers-O}, we have
\begin{equation}
    \frac{1}{d^k}c_{e,\pi}\Tr[P_d(\pi)O^{\otimes k}]=\mu(\pi)\frac{d^{2r}}{d^{2k}}
    \,.
\end{equation}
Then, assuming that all cycles are of even length, the max of $\frac{1}{d^k}c_{e,\pi}\Tr[P_d(\pi)O^{\otimes k}]$ is achieved when $\pi$ is a product of $k/2$ disjoint transpositions, and it is equal to $(-1)^{k/2}/d^k$.
\end{proof}

Using Supplemental Proposition~\ref{sup:prop-ort}, along with the fact that there are $\frac{k!}{2^{k/2} (k/2)!}$ products of $k/2$ disjoint transpositions leads to  the proof of Supplemental Theorem~\ref{theo-moments-orth},
\begin{align}\label{eq:exp-val-un-ort}
\mathbb{E}_{\mathbb{U}(d)}\left[\prod_{i=1}^k C(\rho_i)\right]&=\frac{k!}{2^{k/2} (k/2)!}\frac{(-1)^{k/2}}{d^{k}}\,.
\end{align}

Next, let us consider the case when  $\Lambda$ contains $k_1$ copies of the same state $\rho_1$, $k_2$ of the same state $\rho_2$, and so on. In total, we assume that $\Lambda$ contains $q$ different states and that $\sum_{\beta=1}^q k_{\beta}=k$. Moreover, we denote as $K_1$ as the set of indexes $k_\beta$ equal to one, and $K_2$ as the set of indexes $k_\beta$ larger or equal than 2.   That is 
\begin{equation}
    k_\beta\in\begin{cases}
        K_1\,, \quad \text{if $k_\beta=1$}\,,\\
        K_2\,, \quad \text{if $k_\beta\geq 2$}\,.
    \end{cases}
\end{equation}
We henceforth  assume that there is at least one $k_\beta$ which is larger than 2, so that $K_2\neq\emptyset$.
Next, let us denote  as $R=\sum_{k_\beta \in K_2}\lfloor\frac{k_\beta}{2}\rfloor=\sum_{k_\beta }\lfloor\frac{k_\beta}{2}\rfloor$. Note that $1\leq R \leq\frac{k}{2}$ where the upper bound is reached when all $k_\beta$ are even, and the lower bound when $K_2$ contains a single index $k_\beta$ such that $k_\beta=2$. That is, when just a single state in $\Lambda$ is repeated a single time. Finally, let us define the subsets $T_{2\lfloor k_1/2\rfloor},\ldots,T_{2\lfloor k_q/2\rfloor}\subseteq S_k$ of transpositions where $T_{2 \lfloor k_\beta/2\rfloor}$ pairs copies of the same state. Here we can prove that

\begin{suptheorem}\label{theo-moments-orth-mixed}
Let $\rho_i$ for $i=1,\ldots,k$ be a multiset of  pure  and mutually orthogonal states such that $\Tr[\rho_i^2]\in \Omega\left(\frac{1}{\poly(\log d)}\right)$ for all $i$. Then, let $\Lambda$ contain $k_1$ copies  of $\rho_1$, $k_2$ copies of $\rho_2$, and so on. In total, we assume that $\Lambda$ contains $q$ different states and that $\sum_{\beta=1}^q k_{\beta}=k$. Moreover, we assume that  $O$ is  some traceless Hermitian operator such that $O^2=\id$. Then, in the large-$d$ limit we have
\begin{align}
\mathbb{E}_{\mathbb{U}(d)}\left[\prod_{i=1}^k C(\rho_i)\right]&=(-1)^{\frac{k}{2} -R}\,\frac{d^R}{d^{k}} \left( \frac{(k - 2R)!}{2^{\frac{k}{2}-R} \left(\frac{k}{2}-R\right)!} \prod_{k_\beta\%2=1 }k_\beta \frac{(2\lfloor k_\beta/2\rfloor)!}{2^{\lfloor k_\beta/2\rfloor} (\lfloor k_\beta/2\rfloor)!} \prod_{k_\beta\%2=0 } \frac{ k_\beta!}{2^{ k_\beta/2} (k_\beta/2)!} \prod_{\beta=1}^q \Tr[\rho_\beta^2]^{\lfloor\frac{k_\beta}{2}\rfloor}\right)\,,
\end{align}
\normalsize
where $R=\sum_{k_\beta }\lfloor\frac{k_\beta}{2}\rfloor$.
\end{suptheorem}

Note that if there exists a $k_\beta=k$, i.e., all the states are the same,  then we recover the result in  Supplemental Corollary~\ref{cor:moments-gaussian-single}.

\begin{proof}

Let us consider Eq.~\eqref{eq:exp-prod-costs}, which we (again)  recall here
\begin{align}
\mathbb{E}_{\mathbb{U}(d)}\left[\prod_{i=1}^k C(\rho_i)\right]&=\frac{1}{d^k}\sum_{\sigma\in S_k}\Tr[\Lambda P_d(\sigma)]\Tr[P_d(\sigma^{-1})O^{\otimes k} ] +\frac{1}{d^k}\sum_{\sigma,\pi\in S_k}c_{\sigma,\pi}\Tr[\Lambda P_d(\sigma)]\Tr[P_d(\pi)O^{\otimes k}]\,.
\end{align}
We already know that $\Tr[P_d(\pi)O^{\otimes k}]$ is maximal when it is composed of $k/2$ disjoint transpositions. If $\left(\bigcup_{\beta=1}^q T_{2\lfloor k_\beta/2\rfloor}\right)\cap T_k\neq \emptyset$ then there will be terms in the first summation which are non-zero. In this case, we will have that for large $d$,
\begin{align}
\mathbb{E}_{\mathbb{U}(d)}\left[\prod_{i=1}^k C(\rho_i)\right]&=\frac{1}{d^k}\sum_{\sigma\in \left(\bigcup_{\beta=1}^q T_{2\lfloor k_\beta/2\rfloor}\right)\cap T_k}\Tr[\Lambda P_d(\sigma)]\Tr[P_d(\sigma^{-1})O^{\otimes k} ]\nonumber\\&=\frac{1}{d^{k/2}}\sum_{\sigma\in \left(\bigcup_{\beta=1}^q T_{2\lfloor k_\beta/2\rfloor}\right)\cap T_k} \prod_{\beta=1}^q\Tr[\rho_\beta^2]^{\frac{k_\beta}{2}}=\frac{1}{d^{k/2}}\prod_{\beta=1}^q\frac{k_\beta!}{2^{k_\beta/2} (k_\beta/2)!}\Tr[\rho_\beta^2]^{\frac{k_\beta}{2}}\,,
\end{align}
where we have used the fact that $\Tr[\Lambda P_d(\sigma)]$ can be expressed as a product of terms of the form $\Tr[\rho_\beta \rho_\beta]\in \Omega\left(\frac{1}{\poly(\log d)}\right)$, and where we have replaced $\Tr[P_d(\sigma^{-1})O^{\otimes k} ]=d^{k/2}$. Moreover,   $\sum_{\sigma\in \left(\bigcup_{\beta=1}^q T_{2\lfloor k_\beta/2\rfloor}\right)\cap T_k}$ is the number of ways in which we can pair all the states in $\Lambda$ such that only identical states are paired among themselves. Clearly, this requires that all $k_\beta$ are even, and therefore we can simply express $\sum_{\sigma\in \left(\bigcup_{\beta=1}^q T_{2\lfloor k_\beta/2\rfloor}\right)\cap T_k}=\prod_{\beta=1}^q\frac{k_\beta!}{2^{k_\beta/2} (k_\beta/2)!}$.

However, if $\left(\bigcup_{\beta=1}^q T_{2\lfloor k_\beta/2\rfloor}\right)\cap T_k= \emptyset$, or alternatively, if there is some $k_\beta$ which is odd, then all the terms in the first summation will be zero, and we need to consider the second summation. Now, the terms in the second summation that will be non-zero are the ones where $\pi$ is composed of cycles of even length, and where $\sigma$ is composed of cycles which only connect identical states. Again, we can use  known results for the asymptotics of the Weingarten functions~\cite{collins2003moments} to have that  in the large-$d$ limit  $\frac{c_{\sigma,\pi}}{d^k}=\frac{\mu(\sigma\pi)}{d^{k+|\sigma \pi|_t}}$, where now  $|\sigma\pi|_t$ is the smallest number of transpositions that the product of $\sigma$ and $\pi$ is a product of. Therefore, $\frac{1}{d^k}c_{\sigma,\pi}\Tr[P_d(\pi)O^{\otimes k}]=\frac{d^{k/2}}{d^{k+|\sigma \pi|_t}}$ will be the largest when $\sigma$ and $\pi$ are composed of the largest possible number of transpositions on the same set of indexes (as in this case $\sigma \pi$ will contain the most $1$-cycles), and $\sigma$ acts as the identity on the rest. In particular, in that case we have that
\begin{equation}
    \frac{1}{d^k}c_{\sigma,\pi}\Tr[P_d(\pi)O^{\otimes k}]=(-1)^{\frac{k}{2}-R}\,\frac{d^{k/2+R}}{d^{3k/2}}\,.
\end{equation}
Hence, we find
\begin{align}
\mathbb{E}_{\mathbb{U}(d)}\left[\prod_{i=1}^k C(\rho_i)\right]&=(-1)^{\frac{k}{2} -R}\,\frac{d^R}{d^{k}} \left( \frac{(k - 2R)!}{2^{\frac{k}{2}-R} \left(\frac{k}{2}-R\right)!} \prod_{k_\beta\%2=1 }k_\beta \frac{(2\lfloor k_\beta/2\rfloor)!}{2^{\lfloor k_\beta/2\rfloor} (\lfloor k_\beta/2\rfloor)!} \prod_{k_\beta\%2=0 } \frac{ k_\beta!}{2^{ k_\beta/2} (k_\beta/2)!} \prod_{\beta=1}^q \Tr[\rho_\beta^2]^{\lfloor\frac{k_\beta}{2}\rfloor}\right)\,,
\end{align}
where $\prod_{k_\beta\%2=1 }k_\beta \frac{(2\lfloor k_\beta/2\rfloor)!}{2^{\lfloor k_\beta/2\rfloor} (\lfloor k_\beta/2\rfloor)!}\prod_{k_\beta\%2=0 } \frac{ k_\beta!}{2^{ k_\beta/2} (k_\beta/2)!}$ is the number of ways in which one can pair the same states with themselves. Here, the first product arises from the cases when $k_\beta\geq 2$ and odd, and the second product when $k_\beta\geq 2$ and even. Additionally, the factor $\frac{(k - 2R)!}{2^{\frac{k}{2}-R} \left(\frac{k}{2}-R\right)!}$ accounts for the number of ways in which one can pair the remaining (all different) states.

\end{proof}

\begin{figure}[t] 
    \centering
\includegraphics[width=1\columnwidth]{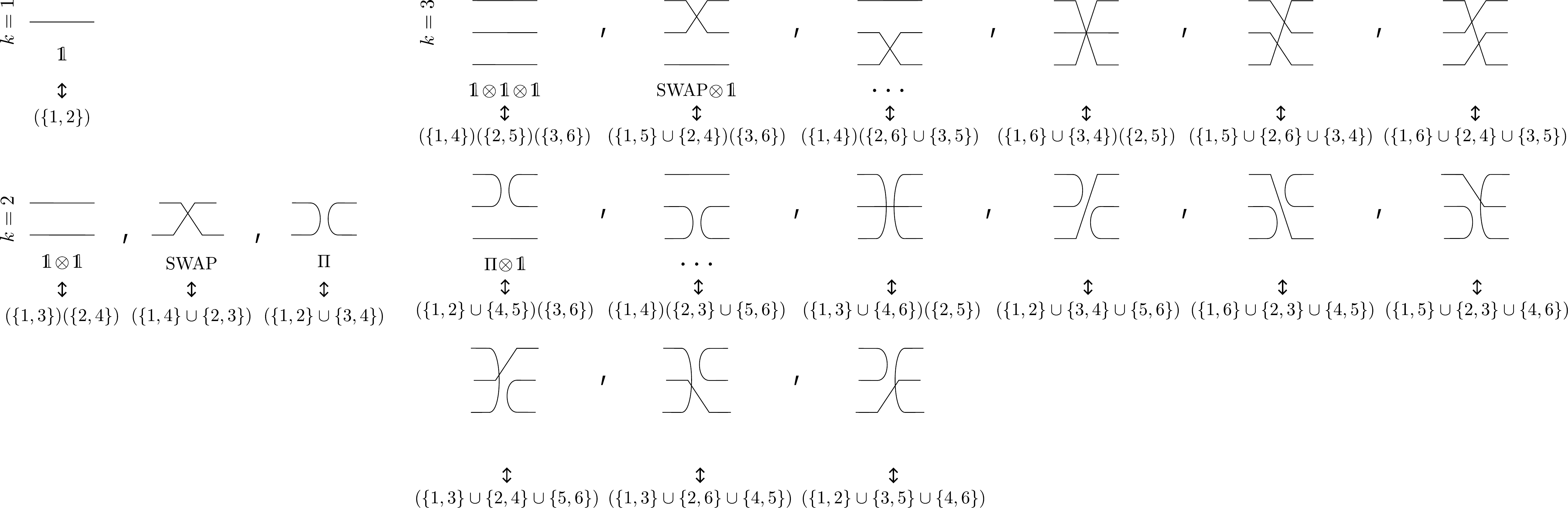}
    \caption{\textbf{Elements of $\mathfrak{B}_k$.} 
     We show the elements of $\mathfrak{B}_k$ for $k=1,2,3$. Here we can see that the ${\rm SWAP}$ and $\Pi$ in Eq.~\eqref{eq:swap} and Eq.~\eqref{eq:Pi} are operators in $B_2$. Below each element of $\mathfrak{B}_k$ we also show its cycle decomposition.
    } 
    \refstepcounter{supfig}
    \label{fig:elements_Bk}
\end{figure}

\section{The Orthogonal group}\label{si:ortogonal}

In this section we will present a series of results that will allow us to compute quantities of the form
\begin{align}
\mathbb{E}_{\mathbb{O}(d)}\left[\prod_{i=1}^k C_j(\rho_i)\right]&=\mathbb{E}_{\mathbb{O}(d)}\left[\prod_{i=1}^k \Tr[U\rho_iU\ad O_j]\right]\,.
\end{align}
\subsection{Twirling over the orthogonal group}

The standard representation of the orthogonal group of degree $d$, which we denote as $\mathbb{O}(d)$, is the group consisting of all $d\times d$ orthogonal matrices  with real entries. That is,
\begin{equation}
   \mathbb{O}(d)=\{U\in \GL(d)\quad|\quad UU\ad =U\ad U=UU^T =U^T U =\id \} \subset \GL(d)\,,
\end{equation}
where $U^T$ denotes the transpose of $U$, and the entries of $U$ are real, i.e., $U_{ij}\in\mathbb{R}$.
 We note that we have employed again the standard notation $R(U)=U$ for all the elements of the orthogonal group.
For this group, a basis for the $k$-th order commutant is given by a representation $F_d$ of the Brauer algebra $\mathfrak{B}_k(d)$ acting on the $k$-fold tensor product
Hilbert space, $\HC^{\otimes k}$. That is, 
\begin{equation}\label{eq:basis-comm-brauer}
    \SC^{(k)}(\mathbb{O}(d))=\{ F_d(\sigma) \}_{\sigma\in \mathfrak{B}_k(d)}.
\end{equation}

Here we recall that  the Brauer algebra is composed of  all possible pairings  on a set of $2k$ items. That is, given a set of $2k$ items, the elements of $\mathfrak{B}_k(d)$ correspond to  all possible ways of splitting them in pairs. Hence, the basis of the commutant, $\SC^{(k)}(\mathbb{O}(d))$, contains $\frac{(2k)!}{2^k k!}$ elements.  For the sake of illustration, the tensor representation of the Brauer algebra for $k=1,2,3$ is depicted in  Supp. Fig.~\ref{fig:elements_Bk}, and in Supp. Fig.~\ref{fig:explicit-comm-b3} we explicitly show that an element of $\mathfrak{B}_3$ commutes with $U^{\otimes 3}$ for any $U$ in $\mathbb{O}(d)$.

An element $\sigma\in\brauer$ can be completely specified by $k$ disjoint pairs, as $\sigma=\{\lambda_1, \sigma(\lambda_1)\}\cup\dots\cup\{\lambda_k, \sigma(\lambda_k)\} $. Moreover, we find it convenient to also define for any $\sigma$ its \textit{transpose} as $\sigma^T=\{\lambda_{1}+k, \sigma(\lambda_{1})+k\}\cup\dots\cup\{\lambda_{k}+k) , \sigma(\lambda_{k})+k\}$ where the sum is taken ${\rm mod} (k)$. Note that $\forall \sigma \in   \mathfrak{B}_k(d)$, then $\sigma^T\in \mathfrak{B}_k(d)$.  Let us now consider an explicit example and write down an element of $\mathfrak{B}_7(d)$. For instance, consider $\sigma=(\{1,2\} \cup \{8,9\}) (\{3,5\} \cup \{4,10\} \cup \{11,12\}) (\{6,14\} \cup \{7,13\})$, where the parenthesis correspond to cycles (as defined below in Supplemental Definition~\ref{def:permu_cycle-br}). We present this element in Supp. Fig.~\ref{fig:sigma}(a). In bra-ket notation, this is equivalent to
\beq  F_d(\sigma) = \sum_{i_1,i_2=0}^{d-1}\ket{i_1,i_1}\bra{i_2,i_2} \otimes \sum_{i_3,i_4,i_5=0}^{d-1}\ket{i_3, i_4, i_3}\bra{i_4, i_5, i_5} \otimes  \sum_{i_6,i_7=0}^{d-1}\ket{i_6,i_7}\bra{i_7,i_6} \,. \eeq
where $F_d(\sg)$ is a representation of the Brauer algebra element $\sigma$.  Here, we find $\sigma^T=(\{1,2\} \cup \{8,9\}) (\{3,11\} \cup \{4,5\} \cup \{10,12\}) (\{6,14\} \cup \{7,13\})$ (see Supp. Fig.~\ref{fig:sigma}(b)) and 
\begin{equation}
F_d(\sigma^T)=F_d(\sigma)^T\,.    
\end{equation}
More generally, given an element $\sigma\in\brauer$, we can express it as
\beq \label{eq:rep-b_k} F_d(\sigma) = \sum_{i_1,\dots,i_{2k}=0}^{d-1}\ket{i_{k+1},i_{k+2},\dots,i_{2k}} \bra{i_1,i_2,\dots,i_k} \prod_{\gamma=1}^{k} \delta_{i_{\lambda_\gamma}, i_{\sigma(\lambda_\gamma)}} \, .\eeq

\begin{figure}[t]
    \centering
\includegraphics[width=.6\columnwidth]{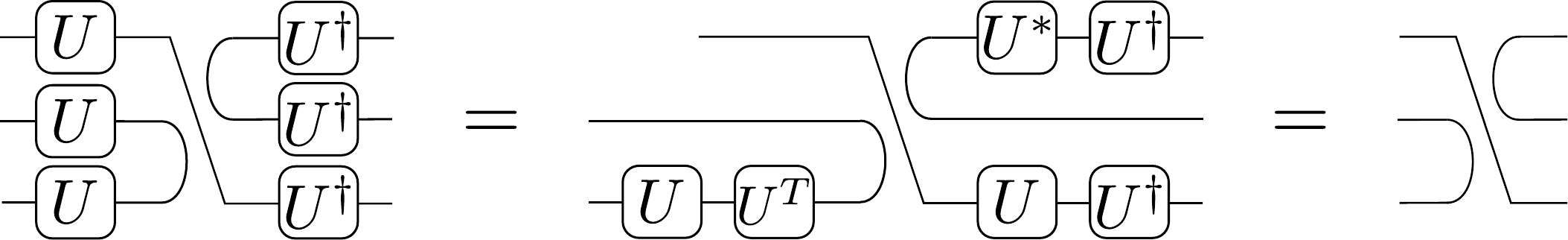}
    \caption{\textbf{Commutation relation.} 
    We explicitly show that an element of $B_3$ commutes with $U^{\otimes 3}$ for any $U\in\mathbb{O}(d)$. Here we use the ricochet property of Eq.~\eqref{eq:action_Pi} plus the fact that $UU\ad=UU^T=U^*U\ad=\id$.
    } 
    \refstepcounter{supfig}
    \label{fig:explicit-comm-b3}
\end{figure}

It is important to stress  that, in contrast to the $k$-th order commutant of the unitary group, $\SC^{(k)}(\mathbb{O}(d))$ is not a group itself but a $\mathbb{Z}(d)$-algebra. This implies that when we multiply two elements in $\brauer$, we do not necessarily obtain an element of $\brauer$ but rather an element of $\brauer$ times an integer power of $d$. Diagrammatically, this means that when we connect (multiply) two diagrams, closed loops can appear. Then, the power to which the factor $d$ is raised is equal to the number of closed loops. We illustrate this fact in Supp. Fig.~\ref{fig:sigma}(c). Furthermore, not every element in $\brauer$ has an inverse. We remark that the symmetric group $S_k$ is a subalgebra of the Brauer algebra. We will denote as $B_k = \brauer \textbackslash S_k$ the elements in $\brauer$ that do not belong to $S_k$, and recall that the elements of $B_k$ do not have an inverse.

First, let us consider the case of $k=1$. As shown in Supp. Fig.~\ref{fig:elements_Bk},  $\mathfrak{B}_1(d)$ contains a single element whose representation is given by $\{\id\}$. As such, we recover the same result as for the unitary group, where the Gram matrix is $
    A=\begin{pmatrix}
    d
    \end{pmatrix}$, and thus
\begin{align}\label{eq:twirl-o-k1}
    \TC^{(1)}_{\mathbb{O}(d)}[X]=\frac{\Tr[X]}{d}\id\,.
\end{align}

Next, we consider the case of $k=2$. Now $\mathfrak{B}_2(d)$ contains three elements (see Supp. Fig.~\ref{fig:elements_Bk}) whose representation is given by 
\begin{equation}\label{eq:k_symmetries_orthogonal_2}
    \SC^{( 2)}(\mathbb{O}(d))=\{\id\otimes \id,\SWAP,\Pi\}\,,
\end{equation}
where $\Pi$ was defined in Eq.~\eqref{eq:Pi}. 
The ensuing Gram Matrix is
\begin{equation}
    A=\begin{pmatrix}
    d^2 & d & d \\
    d & d^2 & d \\
    d & d &  d^2
    \end{pmatrix}\,,
\end{equation}
and the Weingarten matrix
\begin{equation}
    A^{-1}=\frac{1}{d(d+2)(d-1)}\begin{pmatrix}
    d+1 & -1 & -1\\
    -1 & d+1 & -1 \\
    -1 & -1 & d+1 
    \end{pmatrix}\,.
\end{equation}

\begin{figure}[t] 
    \centering
\includegraphics[width=1\columnwidth]{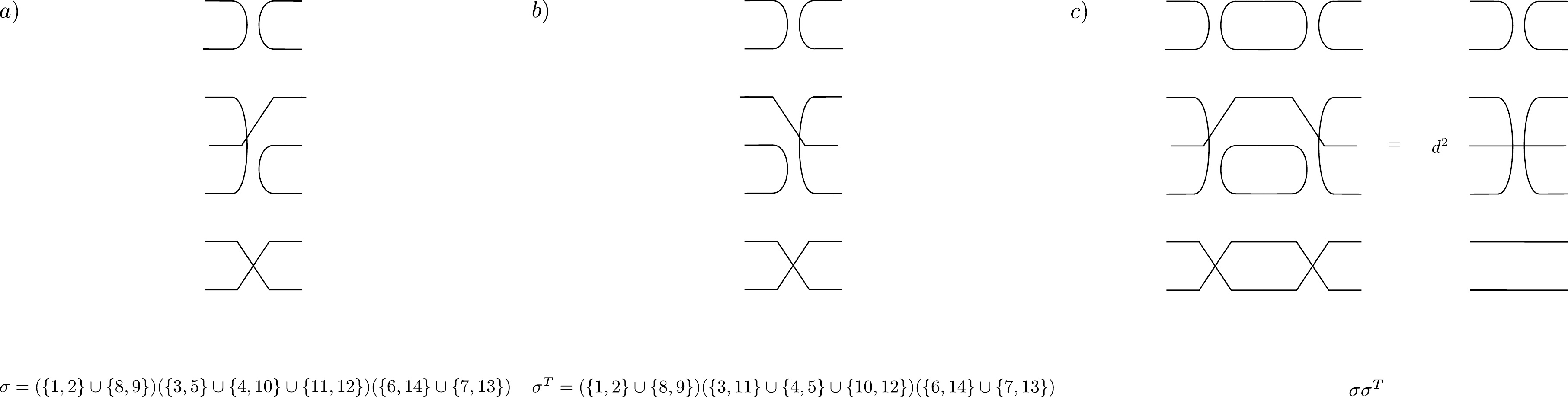}
    \caption{\textbf{Element and its transpose.} In panels a) and b) we respectively present two elements of $\mathfrak{B}_7$, $\sigma$, and its transpose $\sigma^T$. Then, in panel c) we present their composition $\sigma\sigma^T$. Here we can verify that $\mathfrak{B}_k$ is a 
    $\mathbb{Z}(d)$-algebra, as the multiplications of two elements of $\brauer$ is an element of $\brauer$ times an integer power of $d$.
    } 
    \label{fig:sigma}
\end{figure}

Thus, we find
\begin{align}
\begin{pmatrix}
    c_1(X)\\
    c_2(X)\\
    c_3(X)
    \end{pmatrix}&=\frac{1}{d(d+2)(d-1)}\begin{pmatrix}
    d+1 & -1 & -1\\
    -1 & d+1 & -1 \\
    -1 & -1 & d+1 
    \end{pmatrix}\cdot \begin{pmatrix}
    \Tr[X]\\
    \Tr[X\SWAP]\\
    \Tr[X \Pi]
    \end{pmatrix}\,.
\end{align}
Hence, 
\begin{align}\label{eq:twirl-o-k2}
    \TC^{(2)}_{\mathbb{O}(d)}[X]=&\frac{1}{d(d+2)(d-1)}\left((d+1)\Tr[X]  -\Tr[X\SWAP]  -\Tr[X \Pi]\right)\id\otimes \id\nonumber\\
    &+\frac{1}{d(d+2)(d-1)}\left(-\Tr[X] + (d+1)\Tr[X\SWAP]  -\Tr[X \Pi]\right) \SWAP\nonumber\\
    &+\frac{1}{d(d+2)(d-1)}\left(-\Tr[X]  -\Tr[X\SWAP] + (d+1) \Tr[X \Pi]\right)\Pi\,.
\end{align}

Similarly to the unitary case, building the Gram matrix for large  $k$ can be quite cumbersome.  However, since we are interested in the large-$d$ limit, we can use the following result. 
\begin{suptheorem}\label{theo-orthogonal}
Let $X$ be an operator in $\BC(\HC^{\otimes k})$, then for large Hilbert space dimension $d$, the twirl of $X$ over $\mathbb{O}(d)$, as defined in Eq.~\eqref{eq:twirl} is
\begin{align} \label{eq:app_twirl_orto}
    \TC^{(k)}_{\mathbb{O}(d)}[X]=\frac{1}{d^k}\sum_{\sigma\in \mathfrak{B}_k}\Tr[XF_d(\sigma)]F_d(\sigma^{T})+\frac{1}{d^k}\sum_{\sigma,\pi\in \mathfrak{B}_k}c_{\sigma,\pi}\Tr[XF_d(\sigma)]F_d(\pi)\,,
\end{align}
where the constants $c_{\sigma,\pi}$ are in $\OC(1/d)$.
\end{suptheorem}

In order to prove Supplemental Theorem~\ref{theo-orthogonal}, we recall the following definitions.

\begin{supdefinition}[Cycle]\label{def:permu_cycle-br} Let $\sigma$ be an element belonging to $\brauer$. A  cycle $c$ is a set of indices $\{i_m, \sigma(i_m), \overline{\sigma(i_m)}, \sigma(\overline{\sigma(i_m)}),\overline{\sigma(\overline{\sigma(i_m)})},  \sigma(\overline{\sigma(\overline{\sigma(i_m)})}), \dots \}$ that are closed under the action of $\sigma$. Here, we use the notation $\overline{i_m}$ to denote the opposite of $i_m$, i.e.,  $\overline{i_m} = i_m + k$ if $i_m \leq k$ and $\overline{i_m} = i_m - k$ if $i_m > k$. Moreover, we will refer to the number of indices in the cycle divided by two as the length of the cycle.
\end{supdefinition}

\begin{supdefinition}[Cycle Decomposition] \label{def:cycle-decomp-br}
Given an element $\sigma$ belonging to $\brauer$, its cycle decomposition is an expression of $\sigma$ as a product of disjoint cycles
\begin{equation}\label{eq:cycle-decomp-br}
\sigma = c_1\cdots c_r\,.
\end{equation}
\end{supdefinition}
We will  refer to  those indexes contained in length-one cycles (such that $\sigma(i_m)=\overline{i_m}$) as fixed points. Moreover, as we did for the unitary group, we assume that  Eq.~\eqref{eq:cycle-decomp-br} contains \textit{all} cycles, including those of length one (see Supp. Fig.~\ref{fig:tens_Sk}). We remark that Supplemental Definitions~\ref{def:permu_cycle-br} and~\ref{def:cycle-decomp-br} generalize and include as particular cases  Supplemental Definitions~\ref{def:permu_cycle} and~\ref{def:cycle-decom} respectively. As in the case of the elements of $S_k$, the cycle decomposition is unique, up to permutations of the cycles (since they are disjoint) and up to cyclic shifts within the cycles (since they are cycles). 

We will be interested in counting how many cycles, and of what length, are contained in each $\sigma\in \brauer$.  To that end, we introduce the definition of the cycle type.
\begin{supdefinition}[Cycle type]\label{def:cycle-type-br}
    Given a $\sigma \in \brauer$, its cycle type $\nu(\sigma)$ is a vector of length $k$ whose entries indicate how many cycles of each length are present in the cycle decomposition of $\sigma$. That is, 
\begin{equation}
    \nu(\sg) = (\nu_1,\cdots,\nu_k)\,,
\end{equation}
where $\nu_j$ denotes the number of length-$j$ cycles in $\sigma$. 
\end{supdefinition}

We now introduce the following propositions,

\begin{supproposition}\label{prop:characters-br}
The character of an element $\sigma\in \brauer$   is\begin{equation}
    \chi(\sg) = \Tr[ \sg ]  = d^{\norm{\nu(\sg)}_1}=d^r\,,
\end{equation}
where $\nu(\sg)$ is the cycle type  of $\sigma$ as defined in Supplemental Definition~\ref{def:cycle-type-br}, and $r$ is the number of cycles in the cycle decomposition of $\sigma$ as in Supplemental Definition~\ref{def:cycle-decomp-br}.
\end{supproposition}
\begin{proof}
Let us begin by re-writing Eq.~\eqref{eq:rep-b_k} in term of the cycles decomposition of $\sigma$ as in Supplemental Definition~\ref{def:cycle-decomp-br}. That is, given $\sigma = c_1\cdots c_r$, we write
\begin{equation}\label{eq:cycles-br}
F_d(\sigma)=\bigotimes_{\alpha=1}^r\left(\sum_{i_{\lambda^\alpha_1},\ldots,i_{\lambda^\alpha_{2|c_\alpha|}}=0}^{d-1}  |i_{\lambda^\alpha_{|c_\alpha|+1}},\dots,i_{\lambda^\alpha_{2|c_\alpha|}} \ra\la i_{\lambda^\alpha_1},\dots,i_{\lambda^\alpha_{|c_\alpha|}} |\prod_{\gamma=1}^{|c_\alpha|} \delta_{i_{\alpha_\gamma}, i_{\sigma(\alpha_\gamma)}} \right)\,,
\end{equation}
where $|c_\alpha|$ denotes the length of the $c_\alpha$ cycle and we used the notation $c_\alpha=\bigcup_{\gamma=1}^{|c_\alpha|}\{\alpha_\gamma,\sigma(\alpha_\gamma)\}$. Then, the character of $\sigma$ is 
\begin{equation}
\label{eq:almost-chi-br}
    \chi(\sg) = \Tr[ F_d(\sg) ]
    = \prod_{\alpha=1}^r\left(\sum_{i_{\lambda^\alpha_1},\ldots,i_{\lambda^\alpha_{2|c_\alpha|}}=0}^{d-1}  \Tr[|i_{\lambda^\alpha_{|c_\alpha|+1}},\dots,i_{\lambda^\alpha_{2|c_\alpha|}} \ra\la i_{\lambda^\alpha_1},\dots,i_{\lambda^\alpha_{|c_\alpha|}} |\prod_{\gamma=1}^{|c_\alpha|} \delta_{i_{\alpha_\gamma}, i_{\sigma(\alpha_\gamma)}}] \right)\,.
\end{equation}
We now compute
\beq 
\Tr[|i_{\lambda^\alpha_{|c_\alpha|+1}},\dots,i_{\lambda^\alpha_{2|c_\alpha|}} \ra\la i_{\lambda^\alpha_1},\dots,i_{\lambda^\alpha_{|c_\alpha|}} |\prod_{\gamma=1}^{|c_\alpha|} \delta_{i_{\alpha_\gamma}, i_{\sigma(\alpha_\gamma)}}] = \prod_{\gamma=1}^{|c_\alpha|} \delta_{i_{\alpha_\gamma}, i_{\sigma(\alpha_\gamma)}}  \prod_{\beta=1}^{|c_\alpha|}\delta_{i_{\lambda_\beta},i_{\lambda_{|c_\alpha|+\beta}}} \,,
\eeq
which implies that for any cycle $c_\alpha$, independently of its length, we have
\begin{equation}
    \sum_{i_{\lambda^\alpha_1},\ldots,i_{\lambda^\alpha_{2|c_\alpha|}}=0}^{d-1} \Tr[|i_{\lambda^\alpha_{|c_\alpha|+1}},\dots,i_{\lambda^\alpha_{2|c_\alpha|}} \ra\la i_{\lambda^\alpha_1},\dots,i_{\lambda^\alpha_{|c_\alpha|}} |\prod_{\gamma=1}^{|c_\alpha|} \delta_{i_{\alpha_\gamma}, i_{\sigma(\alpha_\gamma)}}]  =\sum_{i_{\lambda^\alpha_1},\ldots,i_{\lambda^\alpha_{2|c_\alpha|}}=0}^{d-1} \prod_{\gamma=1}^{|c_\alpha|} \delta_{i_{\alpha_\gamma}, i_{\sigma(\alpha_\gamma)}}  \prod_{\beta=1}^{|c_\alpha|}\delta_{i_{\lambda_\beta},i_{\lambda_{|c_\alpha|+\beta}}}=d\,,
\end{equation}
where we used the fact that a cycle either has no fixed points or is itself a fixed point.
Replacing in Eq.~\eqref{eq:almost-chi-br}, we obtain
\begin{equation}
    \chi(\sg) =\prod_{\alpha=1}^rd=d^r= d^{\norm{\nu(\sg)}_1}\,.
\end{equation}
\end{proof}

\begin{supproposition} \label{prop:loops-br}
    Let $\sg$ be an element of $\brauer$. The number of cycles $||\nu(\sg)||_1$ is at most $k-l$, where $l$ is the number of indices pairs $\left(i_{\lambda_\gamma},i_{\sg(\lambda_\gamma)}\right)$ such that $\lambda_\gamma, \sigma(\lambda_\gamma)\leq k$. Moreover, this maximum is uniquely achieved when for every pair $\lambda_\gamma, \sigma(\lambda_\gamma)\leq k$ there exists a pair $\lambda_{\gamma'}, \sigma(\lambda_{\gamma'}) > k$ such that $\lambda_\gamma= \overline{\lambda_{\gamma'}}$ and $\sigma(\lambda_\gamma) = \overline{\sg(\lambda_{\gamma'})}$, or $\sigma(\lambda_\gamma)=\overline{\lambda_{\gamma'}}$ and $\gamma=\overline{\sigma(\lambda_{\gamma'})}$, and the rest of indices are fixed points.
\end{supproposition}
\begin{proof}
    Using Supplemental Definition~\ref{def:permu_cycle-br}, we first have that every fixed point is a cycle. Then, if for a pair $\lambda_\gamma, \sigma(\lambda_\gamma)\leq k$ there exists a pair $\lambda_\gamma', \sigma(\lambda_\gamma') > k$ such that $\lambda_\gamma= \overline{\lambda_{\gamma'}}$ and $\sigma(\lambda_\gamma) = \overline{\sg(\lambda_{\gamma'})}$, or $\sigma(\lambda_\gamma)=\overline{\lambda_{\gamma'}}$ and $\lambda_\gamma=\overline{\sigma(\lambda_{\gamma'})}$, then those two pairs form a cycle, as the sequences

    \beq \label{eq:cycle-bell-1} \lambda_\gamma,\; \sigma(\lambda_\gamma) = \overline{\sigma(\lambda_{\gamma'})}, \;\sigma(\lambda_{\gamma'}),\;\sg(\sigma(\lambda_{\gamma'}))= \lambda_{\gamma'},\;\lambda_\gamma\,,\eeq
    or 
    \beq \label{eq:cycle-bell-2}  \lambda_\gamma,\; \sigma(\lambda_\gamma) = \overline{\lambda_{\gamma'}},\; \lambda_{\gamma'},\; \sigma(\lambda_{\gamma'})=\overline{\lambda_\gamma}, \; \lambda_\gamma \,, \eeq
    are closed under $\sg$. Therefore, there are $l$ such cycles plus $k-2l$ fixed points, which add up to $k-l$ cycles.

    Then, we note that given pair of indices such that $\lambda_\gamma, \sigma(\lambda_\gamma)\leq k$, a cycle containing them must consist of at least four indices. This is so because $\overline{\sigma(\lambda_\gamma)}\neq \lambda_\gamma, \sigma(\lambda_\gamma)$. By direct inspection, it is clear that the only possible sequences that lead to cycles of four indices are~\eqref{eq:cycle-bell-1} or~\eqref{eq:cycle-bell-2}. Therefore,  
    if the conditions stated above are not satisfied, $||\nu(\sigma)||_1<k-l$.
    
\end{proof}

\begin{supproposition} \label{prop:trace-prod-br}
    Let $\sigma$ be an element of $B_k$ and $\pi$ an element of $\brauer$. Then, it holds that
\begin{equation}
 \Tr[\sigma \pi] \begin{cases}
 =d^k \qquad\text{if $ \pi= \sg^T$} \\ 
 \leq d^{k-1} \quad \text{else}
 \end{cases}\,,
\end{equation}
\end{supproposition}
\begin{proof}
Let us start with $\pi=\sigma^T$. We have
\beq \begin{split} \sigma \pi &=  \sum_{\substack{i_1,\dots,i_{2k}=0\\ i'_1,\dots,i'_{2k}=0}}^{d-1}   \ket{i_{k+1},i_{k+2},\dots,i_{2k}}  \bra{i'_1,i'_2,\dots,i'_k}   \bra{i_1,i_2,\dots,i_k}  \ket{i'_{k+1},i'_{k+2},\dots,i'_{2k}} \prod_{\gamma=1}^{k} \delta_{i_{\lambda_\gamma}, i_{\sigma\left(\lambda_\gamma\right)}} \prod_{\beta=1}^{k} \delta_{i'_{\lambda_{\beta}} , i'_{\pi(\lambda_{\beta})}} \\ &= \sum_{\substack{i_1,\dots,i_{2k}=0\\ i'_1,\dots,i'_{k}=0}}^{d-1}   \ket{i_{k+1},i_{k+2},\dots,i_{2k}}  \bra{i'_1,i'_2,\dots,i'_k}   \bra{i_1,i_2,\dots,i_k}  \ket{i_{1},i_{2},\dots,i_{k}} \prod_{\gamma=1}^{k} \delta_{i_{\lambda_\gamma}, i_{\sigma\left(\lambda_\gamma\right)}} \prod_{\beta=1}^{k} \delta_{i'_{\lambda_{\beta}} , i'_{\pi(\lambda_{\beta})}} \\ &= \sum_{\substack{i_{k+1},\dots,i_{2k}=0\\ i'_1,\dots,i'_{k}=0}}^{d-1}   \ket{i_{k+1},i_{k+2},\dots,i_{2k}}  \bra{i'_1,i'_2,\dots,i'_k} \sum_{\substack{i_{\lambda_{\gamma_1}},\dots,i_{\lambda_{\gamma_{2l}}}=0 \\ \lambda_\gamma,\sg(\lambda_\gamma) \leq k}}^{d-1} \bra{i_{\lambda_{\gamma_1}},\dots,i_{\lambda_{\gamma_{2l}}}}  \ket{i_{\lambda_{\gamma_1}},\dots,i_{\lambda_{\gamma_{2l}}}} \prod_{\substack{ \gamma \\ \lambda_\gamma,\sg(\lambda_\gamma) \leq k }} \delta_{i_{\lambda_\gamma}, i_{\sg\left(\lambda_\gamma\right)}} \\ & \qquad \otimes\sum_{\substack{i_{\lambda_{\gamma_{2l+1}}},\dots,i_{\lambda_{\gamma_{k-2l}}}=0 \\ \sg(\lambda_\gamma) > k}}^{d-1}  \bra{i_{\lambda_{2l+1}},\dots i_{\lambda_{k-2l}}}  \ket{i_{\lambda_{2l+1}},\dots i_{\lambda_{k-2l}}}  \prod_{\substack{ \gamma \\ \sigma(\lambda_\gamma) > k }} \delta_{i_{\lambda_\gamma}, i_{\sg\left(\lambda_\gamma\right)}} \prod_{\substack{\beta \\ \lambda_\beta \leq k}} \delta_{i'_{\lambda_{\beta}} , i'_{\pi(\lambda_{\beta})}}  \\ & = d^l \sum_{\substack{i_{k+1},\dots,i_{2k}=0\\ i'_1,\dots,i'_{k}=0}}^{d-1} \ket{i_{k+1},i_{k+2},\dots,i_{2k}}  \bra{i'_1,i'_2,\dots,i'_k}  \prod_{\substack{ \gamma \\ \lambda_\gamma, \sigma(\lambda_\gamma) > k }} \delta_{i_{\lambda_\gamma}, i_{\sg\left(\lambda_\gamma\right)}} \prod_{\substack{\beta \\ \lambda_\beta, \pi(\lambda_\beta) \leq k}} \delta_{i'_{\lambda_{\beta}} , i'_{\pi(\lambda_{\beta})}} \prod_{\substack{\omega\\ \lambda_\omega \leq k \\ \pi(\omega)> k }} \delta_{i_{\lambda_{\omega+k}},i'_{\lambda_\omega}} \\ & = d^l \tau \,,\end{split} \eeq
where $\tau\in B_k$ and
$l$ is the number of indices pairs such that $\lambda_\gamma,\sigma(\lambda_\gamma)\leq k$. That is, $l$ is the number of closed loops that appear when multiplying $\pi$ and $\sigma$.
Therefore, using Supplemental Propositions~\ref{prop:characters-br} and~\ref{prop:loops-br},
\beq \Tr[\sigma\pi] =  d^l d^{||\nu(\sigma)||_1} = d^l d^{k-l} = d^k \,. \eeq

Now, let us assume that $\pi\neq \sigma^T$. Let us further define $C_k(\sg,\pi) $ as the set of indices $\{\gamma, \gamma'\}$  such that $i_\gamma, \sigma(\gamma) \leq k$, $i'_{\gamma'}, \pi(\gamma') > k$, and such that for every $\gamma\in C_k(\sigma,\pi)$ there exists $\gamma'\in C_k(\sigma,\pi)$  leading to $\gamma=\gamma'$ or $\gamma=\pi(\gamma')$ and $\sigma(\gamma)= \gamma'$ or $\sigma(\gamma)= \pi(\gamma')$. Accordingly, we introduce
\beq \tau=\sum_{i_\gamma, i'_{\gamma'}=0}^{d-1}   \bigotimes_{\gamma,\gamma'\in C_k(\sg,\pi)} \ket{i_\gamma}  \bra{i'_{\gamma'}} \prod_{\gamma\in C_k(\sg,\pi)} \delta_{i_\gamma, i_{\sigma\left(\gamma\right)}} \prod_{\gamma'\in C_k(\sg,\pi)} \delta_{i'_{\gamma'} , i'_{\pi(\gamma')}} \,, \eeq
and write its cycle decomposition as $\tau=c_1\cdots c_r$ (since $\tau\in\mathfrak{B}_s$ where $s=\sum_\alpha |c_\alpha|$). We then find
\beq \begin{split}
\sigma\pi &=  \sum_{\substack{i_1,\dots,i_{2k}=0\\ i'_1,\dots,i'_{2k}=0}}^{d-1}   \ket{i_{k+1},i_{k+2},\dots,i_{2k}}  \bra{i'_1,i'_2,\dots,i'_k}   \bra{i_1,i_2,\dots,i_k}  \ket{i'_{k+1},i'_{k+2},\dots,i'_{2k}} \prod_{\gamma=1}^{k} \delta_{i_{\lambda_\gamma}, i_{\sigma\left(\lambda_\gamma\right)}} \prod_{\beta=1}^{k} \delta_{i'_{\lambda_{\beta}} , i'_{\pi(\lambda_{\beta})}} \\ &= \sum_{\substack{i_1,\dots,i_{2k}=0\\ i'_1,\dots,i'_{k}=0}}^{d-1}   \ket{i_{k+1},i_{k+2},\dots,i_{2k}}  \bra{i'_1,i'_2,\dots,i'_k}   \bra{i_1,i_2,\dots,i_k}  \ket{i_{1},i_{2},\dots,i_{k}} \prod_{\gamma=1}^{k} \delta_{i_{\lambda_\gamma}, i_{\sigma\left(\lambda_\gamma\right)}} \prod_{\beta=1}^{k} \delta_{i'_{\lambda_{\beta}} , i'_{\pi(\lambda_{\beta})}} \\ &= \sum_{\substack{i_{k+1},\dots,i_{2k}=0\\ i'_1,\dots,i'_{k}=0}}^{d-1}   \ket{i_{k+1},i_{k+2},\dots,i_{2k}}  \bra{i'_1,i'_2,\dots,i'_k} \bigotimes_{\alpha=1}^r \sum_{i_{\lambda^\alpha_1},\dots,i_{\lambda^\alpha_{|c_\alpha|}}=0}^{d-1} \bra{i_{\lambda^\alpha_1},\dots,i_{\lambda^\alpha_{|c_\alpha|}}} \ket{i_{\lambda^\alpha_1},\dots,i_{\lambda^\alpha_{|c_\alpha|}}} \prod_{ \gamma=1}^{|c_{\alpha}|} \delta_{i_{\lambda^\alpha_\gamma}, i_{\sg\left(\lambda^\alpha_\gamma\right)}} \\ & \qquad \otimes\sum_{\gamma,\gamma'=0}^{d-1} \bigotimes_{\gamma,\gamma'\not\in C_k} \bra{i_\gamma}  \ket{i'_{\gamma'}}  \prod_{\gamma\not\in C_k(\sg,\pi)} \delta_{i_\gamma, i_{\sigma\left(\gamma\right)}} \prod_{\gamma'\not\in C_k(\sg,\pi)} \delta_{i'_{\gamma'} , i'_{\pi(\gamma')}} \prod_{\substack{ \gamma \\ \lambda_\gamma, \sigma(\lambda_\gamma) > k }} \delta_{i_{\lambda_\gamma}, i_{\sg\left(\lambda_\gamma\right)}} \prod_{\substack{\beta \\ \lambda_\beta, \pi(\lambda_\beta) \leq k}} \delta_{i'_{\lambda_{\beta}} , i'_{\pi(\lambda_{\beta})}}  \\ &= d^r \xi \,,\end{split} \eeq
where $\xi\in\brauer$. Thus, using Supplemental Proposition~\ref{prop:characters-br} we find $\Tr[\sg\pi] = d^r d^{||\nu(\xi)||_1}$. Finally, using Supplemental Proposition~\ref{prop:loops-br} it follows that when $\pi\neq\sg^T$ either $r<l$ or $||\nu(\xi)||_1< k-l$, so that
\beq \Tr[\sigma\pi]\leq d^{k-1}\,.\eeq

\end{proof}

Let us now go back to computing the twirl $\TC^{(k)}_{\mathbb{O}}[X]$. First, analogously to what we did for the case of the unitary group, we reorder the basis $\SC^{(k)}(\mathbb{O}(d))$ in such a way that the first element is $F_d(e)$, followed by the elements $F_d(\sigma)$ that fulfill $F_d(\sigma)=F_d(\sigma)^T$, and finally we order the rest of the elements $\sigma\neq\sigma^T$ by placing $F_d(\sigma)$ next to $F_d(\sigma)^{T}$. Here, we remark that if $\sg\in S_k$, $F_d(\sg)^T=F_d(\sg^T)=F_d(\sg^{-1})$. We also recall again that the elements such that $\sigma=\sigma^{-1}$ are known as involutions and must  consist of a product of disjoint transpositions plus fixed points. The number of involutions is given by $I_k$ (defined above). More generally, we have that for an element $\sg\in\brauer$ to satisfy that $F_d(\sigma)=F_d(\sigma)^T$ it must consist of a product of length-two cycles and fixed points. The number of such elements is $\mathfrak{I}_k=\sum_{\eta=0}^{\lfloor \frac{k}{2}\rfloor}2^\eta\binom{k}{2\eta}(2\eta-1)!!$.

Then, the following result holds.
\begin{supproposition}
    The $A$ matrix, of dimension $\frac{(2k)!}{2^k k!}\times \frac{(2k)!}{2^k k!}$, can be expressed as
    \begin{equation}
        A= d^k(\widetilde{A}+\frac{1}{d}B)\,.
    \end{equation}
    Here we defined
    \begin{equation}
\widetilde{A}=\id_{\mathfrak{I}_k}\bigoplus_{j=1}^{\left(\frac{(2k)!}{2^{k+1} k!}-\frac{\mathfrak{I}_k}{2}\right)}\begin{pmatrix}
            0 && 1 \\ 1 && 0
        \end{pmatrix}\,,
    \end{equation}
where $\id_{\mathfrak{I}_k}$ denotes the $\mathfrak{I}_k\times \mathfrak{I}_k$ dimensional identity. Moreover, the matrix $B$ is such that its entries are  $\OC(1)$.
\end{supproposition}

More visually, the matrix $\widetilde{A}$ is of the form
\begin{equation}
\widetilde{A} = \begin{pmatrix}
\, \underbrace{\boxed{\begin{matrix} 1 & 0 & \cdots &  0 \\ 
 0& 1 & \cdots & 0 \\
\vdots & \vdots & \ddots & \vdots \\
0 & \cdots & 0 & 1 \end{matrix}}}_{\mathfrak{I}_k\times \mathfrak{I}_k} &  &   & {\makebox(0,0){\text{\huge0}}} &   \\ 
 &  \begin{matrix} 0 & 1  \\
1 & 0 \end{matrix} &     &   \\ 
 &  &  \ddots &  &   \\ 
  {\makebox(0,0){\text{\huge0}}}&    &  &  \begin{matrix} 0 & 1  \\
1 & 0 \end{matrix} \\ 
\end{pmatrix}\,.\nonumber
\end{equation}
And we remark the fact that $\widetilde{A}$  is its own inverse. That is, $\widetilde{A}^{-1}=\widetilde{A}$.

\begin{proof}
Let us recall that the entries of the matrix $A$ are of the form $A_{\nu\mu}=\Tr[P_\nu P_\mu]$ where $P_\nu,P_\mu\in \SC^{(k)}(\mathbb{O}(d))$. From Supplemental Proposition~\ref{prop:trace-prod-br} it follows that 
\begin{equation}
A = \begin{pmatrix}
\, \begin{matrix} d^k & a_{1,2} & \cdots &  a_{1,\mathfrak{I}_k} &  a_{1,\mathfrak{I}_{k+1}} & \cdots \\ 
 a_{2,1}& d^k & \cdots & a_{2,\mathfrak{I}_k} \\
\vdots & \vdots & \ddots & \vdots \\
a_{\mathfrak{I}_k,1} & \cdots & a_{\mathfrak{I}_k,\mathfrak{I}_k-1}  & d^k \\
a_{\mathfrak{I}_{k+1},1}  &  &   &  & a_{\mathfrak{I}_k+1,\mathfrak{I}_k+1} & d^k \\
\vdots &  &   &  &  d^k & a_{\mathfrak{I}_k+2,\mathfrak{I}_k+2} \\
&  &   &  &  & & \ddots \\ 
&  &   &  &  & &  & a_{\frac{(2k)!}{2^k k!}-1,\frac{(2k)!}{2^k k!}-1} & d^k \\
&  &   &  &  & &  & d^k & a_{\frac{(2k)!}{2^k k!},\frac{(2k)!}{2^k k!}}

\end{matrix}
 \\ 
\end{pmatrix}\,,\nonumber
\end{equation}
where the matrix elements $a_{ij}\leq d^{k-1}$. This allows us to express the matrix $A$ as
\begin{equation}
    A= d^k(\widetilde{A}+\frac{1}{d}B)\,,
\end{equation}
where the entries in $B$ are at most equal to 1. 
    
\end{proof}

Using Supplemental Lemma~\ref{lem:inv},  setting $M=\widetilde{A}$, $\Omega=\frac{1}{d}B$ and noting that $A\propto (\widetilde{A}+\frac{1}{d}B)$ always has inverse~\cite{collins2006integration,puchala2017symbolic} we find
\begin{equation}
    A^{-1}=\frac{1}{d^k}\left(\widetilde{A}-\frac{1}{d}(\id+\frac{1}{d}\widetilde{A}B)^{-1}\widetilde{A}B \widetilde{A}\right)    =\frac{1}{d^k}\left(\widetilde{A}-C\right)\,,
\end{equation}
where we have defined
\begin{equation}\label{eq:C-matrix-br}
    C=\frac{1}{d}(\id+\frac{1}{d}\widetilde{A}B)^{-1}\widetilde{A}B \widetilde{A}\,.
\end{equation}
It is easy to verify that the matrix entries of $C$ are in $\OC(1/d)$, following an analogous argument as that in Supplemental Lemma~\ref{lem:inv}. 
Combining the previous result with Eqs.~\eqref{eq:twirled_X_comm} and~\eqref{eq:inverse-vec-c} leads to 
\begin{align}
    \TC^{(k)}_{\mathbb{O}}[X]=\frac{1}{d^k}\sum_{\sigma\in \brauer}\Tr[XF_d(\sigma)]F_d(\sigma^{T})+\frac{1}{d^k}\sum_{\sigma,\pi\in \brauer}c_{\sigma,\pi}\Tr[XF_d(\sigma)]F_d(\pi)\,,
\end{align}
where the $c_{\sigma,\pi}$ are the matrix entries of $C$ as defined in Eq.~\eqref{eq:C-matrix-br}. This is then precisely the statement of Supplemental Theorem~\ref{theo-orthogonal}.

\subsection{Computing expectation values of twirled operators}

Let us now consider an expectation value of the form
\begin{equation}
    C(\rho_i)=\Tr[U\rho_i U\ad O]\,,
\end{equation}
where $\rho_i$ is a  quantum state and $O$ is a traceless quantum operator such that $O^2=\id$. 
Next, let us consider the task of estimating expectation values of the form
\begin{align}
\mathbb{E}_{\mathbb{O}(d)}\left[\prod_{i=1}^k C(\rho_i)\right]&=\mathbb{E}_{\mathbb{O}(d)}\left[\prod_{i=1}^k \Tr[U\rho_{i}U\ad O]\right]\,.
\end{align}

Here, we will show that in the large-$d$ limit, the following theorem holds.
\begin{suptheorem}\label{theo-moments-orto}
Let $\rho_i$ for $i=1,\ldots,k$ be a multiset of real-valued quantum states such that $\Tr[\rho_i\rho_{i'}]\in\Omega\left(\frac{1}{\poly(\log(d))}\right)$ for all $i,i'$,  and let  $O$ be some real-valued traceless Hermitian operator such that $O^2=\id$. Then let us define $\mathfrak{T}_k\subseteq \brauer$ the set  of all possible $k/2$ disjoint cycles of length two. That is, for any $\sigma\in \mathfrak{T}_k$, its cycle decomposition is $\sigma=c_1\cdots c_{k/2}$ where $c_\alpha$ is a length-two cycle for all $\alpha=1,\ldots,k/2$. Then, in the large-$d$ limit we have
\begin{align}
\mathbb{E}_{\mathbb{O}(d)}\left[\prod_{i=1}^k C(\rho_i)\right]&=\frac{1}{d^{k/2}}\sum_{\sigma\in \mathfrak{T}_k}\prod_{\alpha=1}^{k/2}\Tr[\rho_{c_\alpha(1)}\rho_{c_\alpha(2)}]\,.
\end{align}
\end{suptheorem}

To prove this theorem, let us first re-write
\begin{equation}
\mathbb{E}_{\mathbb{O}(d)}\left[\prod_{i=1}^k C(\rho_i)\right]=\mathbb{E}_{\mathbb{O}(d)}\left[\prod_{i=1}^k \Tr[U\rho_{i}U\ad O]\right]=\mathbb{E}_{\mathbb{O}(d)}\left[\Tr[U^{\otimes k}\Lambda \,(U\ad)^{\otimes k} O^{\otimes k}]\right]\,,
\end{equation}
where $\Lambda=\rho_{1}\otimes\cdots\otimes \rho_{k} $. Explicitly, 
\begin{align}
\mathbb{E}_{\mathbb{O}(d)}\left[\prod_{i=1}^k C(\rho_i))\right]&=\int_{\mathbb{O}(d)}d\mu(U) \Tr[U^{\otimes k}\Lambda\,(U\ad)^{\otimes k} O^{\otimes k}]]\nonumber\\
&=\Tr\left[\left(\int_{\mathbb{O}(d)}d\mu(U) U^{\otimes k}\Lambda\,(U\ad)^{\otimes k}\right) O^{\otimes k}\right]\,.
\end{align}
Using Supplemental Theorem~\ref{theo-orthogonal} we readily find 
\begin{align}\label{eq:moment-twirl-br}
\mathbb{E}_{\mathbb{O}(d)}\left[\prod_{i=1}^k C(\rho_i)\right]&=\frac{1}{d^k}\sum_{\sigma\in \brauer}\Tr[\Lambda F_d(\sigma)]\Tr[F_d(\sigma^T)O^{\otimes k} ] +\frac{1}{d^k}\sum_{\sigma,\pi\in \brauer}c_{\sigma,\pi}\Tr[\Lambda F_d(\sigma)]\Tr[F_d(\pi)O^{\otimes k}]\,.
\end{align}

Now, let us prove the following proposition.
\begin{supproposition}\label{prop:powers-O-br}
    Let $O$ be a traceless real-valued Hermitian operator such that $O^2=\id$. Then we have
    $\Tr[F_d(\sigma)O^{\otimes k} ]=0$ for any  $\sigma\in \brauer$ if $k$ is odd, and $\Tr[F_d(\sigma)O^{\otimes k} ]=d^{r}$ if  $k$ is even and $\sigma$ is a product of $r$ disjoint cycles of even length. The maximum of $\Tr[F_d(\sigma)O^{\otimes k} ]$ is therefore achieved when $\sigma$ is a product of $k/2$ disjoint cycles of length two, leading to $\Tr[F_d(\sigma)O^{\otimes k} ]=d^{k/2}$.
\end{supproposition}

\begin{proof}
We first consider the case of $k$ being odd. Let us express $\sigma$ in its cycle decomposition $
\sigma = c_1\cdots c_r$ as in Definition~\ref{def:cycle-decomp-br}. 
We have that 
\begin{equation}\label{eq:cycles-odd-br}
\Tr[F_d(\sigma)O^{\otimes k} ]=\prod_{\alpha=1}^r    \Tr[F_d(c_{\alpha})O^{\otimes |c_\alpha| }]\,.
\end{equation}
Because $k$ is odd, we know that there must exist at least one cycle acting on an odd number of subsystems in the right hand side of Eq.~\eqref{eq:cycles-odd-br}. Let us assume that this occurs for the cycle $c_{\alpha'}$. Then, we will have 
\begin{equation}
    \Tr[F_d(c_{\alpha'})O^{\otimes |c_{\alpha'}|} ]=\Tr[O^{|c_{\alpha'}|}]=\Tr[O]=0\,.
\end{equation}
Here we have used the fact that $O^2=\id$, $O^T=O$, and hence, since $|c_{\alpha'}|$ is odd, we have $O^{|c_{\alpha'}|}=O$.

Next, let us consider the case of $k$ being even. We know from Eq.~\eqref{eq:cycles-odd-br} that if $\sigma$ contains any cycle acting on an odd number of subsystems, then $\Tr[F_d(\sigma)O^{\otimes k} ]$ will be equal to zero. This means that only the elements $\sigma$ composed entirely of cycles acting on an even number of subsystems will have non-vanishing trace. If this is the case, we will have
\begin{equation}\label{eq:cycles-even-br}
\Tr[F_d(\sigma)O^{\otimes k} ]=\prod_{\alpha=1}^r    \Tr[F_d(c_{\alpha})O^{\otimes |c_\alpha|} ]=\prod_{\alpha=1}^rd=d^r\,.
\end{equation}
This follows from the fact that if $|c_\alpha|$ is even, then $O^{|c_\alpha|}=\id$.
Moreover, Eq.~\eqref{eq:cycles-even-br} will be maximized for the case when $r$ is largest, which corresponds to the case when $\sigma$ is a product of $k/2$ disjoint length-two cycles.
For this special case one finds 
\begin{equation}
\Tr[F_d(\sigma)O^{\otimes k} ]=d^{k/2}\,.
\end{equation}

\end{proof}

Next, let us prove the following proposition.
\begin{supproposition}\label{prop:powers-rho-br}
    Let $\Lambda =\rho_{1}\otimes\cdots\otimes \rho_{k} $ be a tensor product of $k$ real-valued quantum states. Then $\left|\Tr[\Lambda F_d(\sigma)]+\Tr[\Lambda F_d(\sigma^T)]\right|\leq 2$ for all $\sigma\in \brauer$.
\end{supproposition}

\begin{proof}

 Let us  decompose $\sigma$ in its cycle decomposition $
\sigma = c_1\cdots c_r$ as in Supplemental Definition~\ref{def:cycle-decomp-br}. First, let us assume that $\rho_1,\dots,\rho_k$ are pure. Then, we will have
\begin{equation} \label{eq:app_states_contribution-br}
\Tr[\Lambda P_d(\sigma)]=\prod_{\alpha=1}^r\Tr[\rho_{{c_\alpha(1)}}\cdots \rho_{{c_\alpha(|c_\alpha|)}}]= \prod_{\alpha=1}^r \bra{\psi_{c_\alpha(1)}}\ket{\psi_{c_\alpha(2)}} \cdots \bra{\psi_{c_\alpha(|c_\alpha|-1)}}\ket{\psi_{c_\alpha(|c_\alpha|)}} \bra{\psi_{c_\alpha(|c_\alpha|)}}\ket{\psi_{c_\alpha(1)}}  \,.
\end{equation}
where in the first equality we have used the fact that since all the states are real-valued, then $\rho^T_i=\rho_i$. 
Using that $|\bra{\psi_i}\ket{\psi_j}|\leq 1$ $\forall i,j$, we find that $|\Tr[\Lambda P_d(\sigma)]|\leq 1$ for all $\sigma\in \brauer$. Next, let us assume that $\rho_1,\dots,\rho_k$ are general arbitrary (real-valued) quantum states, i.e., they are not necessarily pure. Using that any quantum state can be written as a convex combination of orthonormal pure states, i.e., $\rho_i=\sum_{k_i} \lambda_{k_i} \ketbra{\psi_{k_i}}$ with $\lambda_{k_i}$ being real non-negative numbers such that $\sum_{k_i} \lambda_{k_i}=1$, we obtain
\begin{align} 
\Tr[\Lambda P_d(\sigma)]&=\prod_{\alpha=1}^r\Tr[\rho_{{c_\alpha(1)}}\cdots \rho_{{c_\alpha(|c_\alpha|)}}] \nonumber\\&=  \prod_{\alpha=1}^r \sum_{k_{c_\alpha(1)},\dots,k_{c_\alpha(|c_\alpha|)} } \lambda_{k_{c_\alpha(1)}} \cdots \lambda_{k_{c_\alpha(|c_\alpha|)}}\bra{\psi_{k_{c_\alpha(1)}}}\ket{\psi_{k_{c_\alpha(2)}}} \cdots \bra{\psi_{k_{c_\alpha(|c_\alpha|-1)}}}\ket{\psi_{k_{c_\alpha(|c_\alpha|)}}} \bra{\psi_{k_{c_\alpha(|c_\alpha|)}}}\ket{\psi_{k_{c_\alpha(1)}}}\,,
\end{align}
which leads to 
\begin{align} 
    |\Tr[\Lambda P_d(\sigma)]| &= \prod_{\alpha=1}^r \left|\sum_{k_{c_\alpha(1)},\dots,k_{c_\alpha(|c_\alpha|)} } \lambda_{k_{c_\alpha(1)}} \cdots \lambda_{k_{c_\alpha(|c_\alpha|)}}\bra{\psi_{k_{c_\alpha(1)}}}\ket{\psi_{k_{c_\alpha(2)}}} \cdots \bra{\psi_{k_{c_\alpha(|c_\alpha|-1)}}}\ket{\psi_{k_{c_\alpha(|c_\alpha|)}}} \bra{\psi_{k_{c_\alpha(|c_\alpha|)}}}\ket{\psi_{k_{c_\alpha(1)}}} \right|\nonumber \\ &\leq \nonumber \prod_{\alpha=1}^r \sum_{k_{c_\alpha(1)},\dots,k_{c_\alpha(|c_\alpha|)} } \lambda_{k_{c_\alpha(1)}} \cdots \lambda_{k_{c_\alpha(|c_\alpha|)}} \left|\bra{\psi_{k_{c_\alpha(1)}}}\ket{\psi_{k_{c_\alpha(2)}}} \cdots \bra{\psi_{k_{c_\alpha(|c_\alpha|-1)}}}\ket{\psi_{k_{c_\alpha(|c_\alpha|)}}} \bra{\psi_{k_{c_\alpha(|c_\alpha|)}}}\ket{\psi_{k_{c_\alpha(1)}}}\right|\\ & \leq \prod_{\alpha=1}^r \sum_{k_{c_\alpha(1)},\dots,k_{c_\alpha(|c_\alpha|)} } \lambda_{k_{c_\alpha(1)}} \cdots \lambda_{k_{c_\alpha(|c_\alpha|)}} = 1 \,,
\end{align}
where  we used the triangle inequality in the first inequality, and the fact that $|\bra{\psi_i}\ket{\psi_j}|\leq 1$ $\forall i,j$ in the second inequality.
Hence, using again the triangle inequality, we arrive at
\begin{equation}
\left|\Tr[\Lambda P_d(\sigma)]+\Tr[\Lambda P_d(\sigma^{-1})]\right|\leq \left|\Tr[\Lambda P_d(\sigma)]\right|+\left|\Tr[\Lambda P_d(\sigma^{-1})]\right|\leq 2\,,    
\end{equation}
for any $\sigma\in \brauer$.

 \end{proof}

With Supplemental Propositions~\ref{prop:powers-O-br} and~\ref{prop:powers-rho-br} we can now state the following result.

\begin{supproposition}\label{prop:off-diag-br}
     Let $O$ be a traceless real-valued Hermitian operator such that $O^2=\id$. Let $\Lambda=\rho_{1}\otimes\cdots\otimes \rho_{k} $ be a tensor product of $k$ real-valued quantum states. Then, for all $\pi$ and $\sigma$ in $\brauer$
     \begin{equation}
       \frac{1}{d^k}  \left|(c_{\sigma,\pi}\Tr[\Lambda F_d(\sigma)]+c_{\sigma^T,\pi}\Tr[\Lambda P_d(\sigma^T)])\Tr[F_d(\pi)O^{\otimes k}]\right|\in\OC\left(\frac{1}{d^{\frac{k+2}{2}}}\right)\,.
     \end{equation}
\end{supproposition}

\begin{proof}
We begin by assuming, without loss of generality, that $|c_{\sigma^T,\pi}|\leq |c_{\sigma,\pi}|$. Thus, we have
\begin{equation}
   \frac{1}{d^k} \left|(c_{\sigma,\pi}\Tr[\Lambda F_d(\sigma)]+c_{\sigma^T,\pi}\Tr[\Lambda F_d(\sigma^T)])\Tr[F_d(\pi)O^{\otimes k}]\right|\leq \frac{|c_{\sigma,\pi}|}{d^k}\left|(\Tr[\Lambda F_d(\sigma)]+\Tr[\Lambda F_d(\sigma^T)])\Tr[F_d(\pi)O^{\otimes k}]\right|\nonumber \,.
\end{equation}
Then, from Supplemental Propositions~\ref{prop:powers-O-br} and~\ref{prop:powers-rho-br} we find
\begin{equation}
   \frac{|c_{\sigma,\pi}|}{d^k}\left|(\Tr[\Lambda F_d(\sigma)]+\Tr[\Lambda F_d(\sigma^T)])\Tr[F_d(\pi)O^{\otimes k}]\right|\leq \frac{|c_{\sigma,\pi}|}{d^k} 2 d^{k/2}=\frac{2|c_{\sigma,\pi}|}{d^{k/2}}\,.
\end{equation}
Finally, since from Supplemental Theorem~\ref{theo-orthogonal} $|c_{\sigma,\pi}|\in\OC(1/d)$, we have 
\begin{equation}
    \left|\frac{1}{d^k}(c_{\sigma,\pi}\Tr[\Lambda F_d(\sigma)]+c_{\sigma^T,\pi}\Tr[\Lambda F_d(\sigma^T)])\Tr[F_d(\pi)O^{\otimes k}]\right|\in\OC\left(\frac{1}{d^{\frac{k+2}{2}}}\right)\,.
\end{equation}
    
\end{proof}

Finally, consider the following proposition.

\begin{supproposition}\label{prop:diag-br}
     Let $O$ be a real-valued traceless Hermitian operator such that $O^2=\id$. Let $\Lambda=\rho_{1}\otimes\cdots\otimes \rho_{k} $ be a tensor product of $k$ ral-valued quantum states such that $\Tr[\rho_{i}\rho_{i'}]\in\Omega\left(\frac{1}{\poly(\log(d))}\right)$ for all $i,i'$. Then, 
     \begin{equation}
         \frac{1}{d^k}\Tr[\Lambda F_d(\sigma)]\Tr[F_d(\sigma)O^{\otimes k} ]\in\widetilde{\Omega}\left(\frac{1}{d^{k/2}}\right)
     \end{equation}
     if $\sigma$ is a product of $k/2$ disjoint length-two cycles, and
     \begin{equation}
         \frac{1}{d^k}\left|\Tr[\Lambda F_d(\sigma)]\Tr[F_d(\sigma^T)O^{\otimes k} ]+\Tr[\Lambda F_d(\sigma^T)]\Tr[F_d(\sigma)O^{\otimes k} ]\right|\in\OC\left(\frac{1}{d^{\frac{k+2}{2}}}\right)
     \end{equation}
     for any other $\sigma$.
\end{supproposition}

\begin{proof}
We start by considering the case when $\sigma$ is a product of $k/2$ disjoint length-two cycles. This implies that $\sg=\sg^T$. From Supplemental Proposition~\ref{prop:powers-O-br}, we know that $\Tr[F_d(\sigma)O^{\otimes k} ]=d^{k/2}$. Then, we find that 
\begin{equation}  \label{eq-ap:trace-perm-state-br}  \Tr[\Lambda F_d(\sigma)]=\prod_{\alpha=1}^{k/2}\Tr[\rho_{{c_\alpha(1)}}\rho_{{c_\alpha(2)}}]\in\Omega\left(\frac{1}{\poly(\log(d))}\right)\,,
\end{equation}
where we have used the fact that $\Tr[\rho_{i}\rho_{i'}]\in\Omega\left(\frac{1}{\poly(\log(d))}\right)$ for all $i,i'$.
Thus, it follows that 
\begin{equation}
     \frac{1}{d^k}\Tr[\Lambda F_d(\sigma)]\Tr[F_d(\sigma^T)O^{\otimes k} ]\in\widetilde{\Omega}\left(\frac{1}{d^{k/2}}\right)\,,
 \end{equation}
where the $\widetilde{\Omega}$ notation omits $(\poly(\log(d)))^{-1}$ factors.

 Next, let us consider the case of $\sigma$ not being a product of $k/2$ disjoint length-two cycles but containing only cycles of even length (if $\sg$ contains a cycle of odd length, it follows from Supplemental Proposition~\ref{prop:powers-O-br} that $\Tr[F_d(\sigma^T)O^{\otimes k} ]=0$). Here, we have from Supplemental Proposition~\ref{prop:powers-O-br} that
 \begin{equation}
 \Tr[F_d(\sigma)O^{\otimes k} ]=\Tr[F_d(\sigma^T)O^{\otimes k} ]=d^{r}    
 \end{equation}
 with $r\leq \frac{k}{2}-1$.
 Then, the following chain of inequalities hold
  \begin{align}
     \frac{1}{d^k}\left|\Tr[\Lambda F_d(\sigma)]\Tr[F_d(\sigma^T)O^{\otimes k} ]+\Tr[\Lambda F_d(\sigma^T)]\Tr[F_d(\sigma)O^{\otimes k} ]\right|& \leq\frac{1}{d^{\frac{k+2}{2}}}\left|\Tr[\Lambda F_d(\sigma)]+\Tr[\Lambda F_d(\sigma^T)]\right|\nonumber\\&\leq\frac{2}{d^{\frac{k+2}{2}}}\in\OC\left(\frac{1}{d^{\frac{k+2}{2}}}\right)\,.
 \end{align}
 Here, we have used Supplemental Proposition~\ref{prop:powers-rho-br} for the last inequality. 
\end{proof}

To finish the proof of Supplemental Theorem~\ref{theo-moments-orto} we simply combine Supplemental Propositions~\ref{prop:off-diag-br} and~\ref{prop:diag-br} and note that in the large-$d$ limit we get

\begin{align}\label{eq:exp-val-orto}
\mathbb{E}_{\mathbb{O}(d)}\left[\prod_{i=1}^k C_{i}\right]&=\frac{1}{d^{k/2}}\sum_{\sigma\in \mathfrak{T}_k}\prod_{\alpha=1}^{k/2}\Tr[\rho_{{c_\alpha(1)}}\rho_{{c_\alpha(2)}}]\,,
\end{align}
where we have defined as $\mathfrak{T}_k\subseteq \brauer$ the set of elements in the Brauer algebra which are exactly given by a product of $k/2$ disjoint length-two cycles. Here we note that every such two-cycle can either be of the form $(\{i,j+k\}\cup\{j,i+k\})$ for $i,j\leq k$ or  $(\{i,j\}\cup\{i+k,j+k\})$ for $i,j\leq k$. Moreover since for either of those two cases the term $\Tr[\rho_{{c_\alpha(1)}}\rho_{{c_\alpha(2)}}]$ will be equal, we will have redundancies in the summation of Eq.~\eqref{eq:exp-val-orto}. We can remove this redundancy by only summing over disjoint transpositions, and adding a coefficient $2^{k/2}$ that accounts for the fact that in each of the $k/2$ length-two cycles, we can choose either $(\{i,j+k\}\cup\{j,i+k\})$ for $i,j\leq k$ or  $(\{i,j\}\cup\{i+k,j+k\})$ for $i,j\leq k$. Then we obtain 
\begin{align}\label{eq:exp-val-orto-ortogonal}
\mathbb{E}_{\mathbb{O}(d)}\left[\prod_{i=1}^k C_{i}\right]&=\frac{2^{k/2}}{d^{k/2}}\sum_{\sigma\in T_k}\prod_{\alpha=1}^{k/2}\Tr[\rho_{{c_\alpha(1)}}\rho_{{c_\alpha(2)}}]\,.
\end{align}

\subsection{States with overlap equal to $1/d$}

We will now consider pure states such that $\Tr[\rho_i \rho_{i'}]=\frac{1}{d}$ for all $\rho_i\neq \rho_{i'}$. We prove the following theorem.

\begin{suptheorem} \label{sup-th:amoments_uncorrelated-br} Let $\rho_i$ for $i=1,\ldots,k$ be a multiset of real-valued quantum states such that $\Tr[\rho_i,\rho_{i'}]=\frac{1}{d}$ for all $\rho_i\neq \rho_{i'}$ and $\Tr[\rho_i^2]\in\Omega\left(\frac{1}{\poly(\log d)}\right)$ for all $\rho_i$,  and let  $O$ be some real-valued traceless Hermitian operator such that $O^2=\id$. Then, in the large-$d$ limit we have
\beq \label{eq:app_moments_uncorrelated_orto} \mathbb{E}_{\mathbb{O}(d)}\left[\prod_{i=1}^k C(\rho_i)\right] = \frac{1}{d^k} \sum_{\sigma\in \mathfrak{T}^c_k(\Lambda)}  d^{k/2} \,,\eeq
where we have defined $\mathfrak{T}^c_k(\Lambda)$ as the set of elements of the Brauer algebra belonging to $\mathfrak{T}_k$ that only connect identical states in $\Lambda=\rho_{1}\otimes\cdots\otimes \rho_{k} $.
\end{suptheorem}

\begin{proof}

Let us first recall Eq.~\eqref{eq:app_twirl_orto},
\begin{align} 
    \TC^{(k)}_{G}[X]=\frac{1}{d^k}\sum_{\sigma\in \mathfrak{B}_k}\Tr[XF_d(\sigma)]F_d(\sigma^{T})+\frac{1}{d^k}\sum_{\sigma,\pi\in \mathfrak{B}_k}c_{\sigma,\pi}\Tr[XF_d(\sigma)]F_d(\pi)\,,
\end{align}
We know from Supplemental Proposition~\ref{prop:powers-O-br} that $\Tr[F_d(\sigma^{T})O^{\otimes k}]$ is maximized by elements $\sigma$ of the Brauer algebra $\mathfrak{B}_k$ that are a product of $k/2$ disjoint cycles of length two. In particular, in that case we have $\Tr[F_d(\sigma^{T})O^{\otimes k}]=d^{k/2}$. It is easy to see that $\Tr[\Lambda F_d(\sigma)]$ is maximized by the elements that connect only identical states, because the states satisfy $\Tr[\rho_i^2]\in\Omega\left(\frac{1}{\poly(\log d)}\right)$ and $\Tr[\rho_i \rho_{i'}]=\frac{1}{d}$ for all $\rho_i\neq \rho_{i'}$. Else, there will be at least two different states in at least one cycle, resulting in a $\OC(\frac{1}{d})$ factor.
 Indeed, using an analogous derivation than that leading to Eq.~\eqref{eq-ap:trace-perm-state-br} for $\sigma\in \mathfrak{T}_k^c(\Lambda)$, we find
\begin{equation} \label{eq:app-lead-contr-uncorrelated-br}
     \frac{1}{d^k}\Tr[\Lambda P_d(\sigma)]\Tr[P_d(\sigma)O^{\otimes k} ]\in\widetilde{\Omega}\left(\frac{1}{d^{k/2}}\right)\,,
\end{equation}
when $\sigma\in \mathfrak{T}_k^c(\Lambda)$, and
\begin{equation} \label{eq:app-neglect-contr-uncorrelated-br}
     \frac{1}{d^k}\Tr[\Lambda P_d(\sigma)]\Tr[P_d(\sigma^{-1})O^{\otimes k} ]\in \OC\left(\frac{1}{d^{\frac{k+2}{2}}}\right)\,,
\end{equation}
otherwise. The latter equation follows from Supplemental Propositions~\ref{prop:powers-O-br} and~\ref{prop:powers-rho-br} when $\sigma\notin \mathfrak{T}_k$, and additionally from the condition $\Tr[\rho_i \rho_{i'}]=\frac{1}{d}$ for all $\rho_i\neq \rho_{i'}$ when $\sigma\in \mathfrak{T}_k$ but $\sigma\notin \mathfrak{T}_k(\Lambda)$. 
Furthermore, since Supplemental Proposition~\ref{prop:off-diag-br} also holds in this case, we recover Eq.~\eqref{eq:app_moments_uncorrelated_orto} in the large-$d$ limit.

Notice that when $\Lambda$ is such that $T_k(\Lambda)=\emptyset$, then $\mathbb{E}_{\mathbb{U}(d)}\left[\prod_{i=1}^k C(\rho_i)\right]\in \OC\left(\frac{1}{d^{\frac{k+2}{2}}}\right)$, which is approximated by $0$ in the large-$d$ limit in Eq.~\eqref{eq:app_moments_uncorrelated_orto}. The justification for this is the same as in the unitary case (see Supp. Info.~\ref{si:unitary}.3).
\end{proof}

Finally, we remark that when the overlaps between the states in $\mathscr{D}$ are such that $\Tr[\rho_i \rho_{i'}]\in o\left(\frac{1}{\poly(\log(1/d))}\right)$ for all $\rho_i\neq \rho_{i'}$ and $\Tr[\rho_{i}^2]\in\Omega\left(\frac{1}{\poly(\log(1/d))}\right)$ for all $\rho_i$, Supplemental Theorem~\ref{sup-th:amoments_uncorrelated-br} still holds, following the same argument as for the unitary group presented in Supp. Info.~\ref{si:unitary}.3.

\subsection{Orthogonal states}

When considering orthogonal states we can derive the following results.

\begin{suptheorem}\label{theo-moments-orth-O}
Let $\rho_i$ for $i=1,\ldots,k$ be a set of  pure and mutually orthogonal real-valued quantum states,  and let  $O$ be some traceless real-valued Hermitian operator such that $O^2=\id$. Then, in the large-$d$ limit we have
\begin{align}\label{eq:scalingexp-orth}
\mathbb{E}_{\mathbb{O}(d)}\left[\prod_{i=1}^k C(\rho_i)\right]&= (-1)^{k/2}\frac{k!}{2^{k/2} (k/2)!}\frac{2^{k/2}}{d^k}\,.
\end{align}
\end{suptheorem}
The proof of this theorem follows similarly to that of Supplemental Theorem~\ref{theo-moments-orth} but noting that here the terms that contribute are not only the $k/2$ disjoint transpositions, but also any element of $\brauer$ where any transposition $(\{i,j+k\}\cup\{j,i+k\})$ for $i,j\leq k$ is replaced by  a term $(\{i,j\}\cup\{i+k,j+k\})$ for $i,j\leq k$.  Since we have $k/2$ such choices, then we get an extra factor of $2^{k/2}$. Said otherwise, we will get all the contributions of the element of $\brauer$ which are composed only of cycles of length two. 

With a similar argument we can find that the following result holds. 
\begin{suptheorem}\label{theo-moments-orth-mixed-orto}
Let $\rho_i$ for $i=1,\ldots,k$ be a multiset of  pure  and mutually orthogonal real-valued states. Then, let $\Lambda$ contain $k_1$ copies  of $\rho_1$, $k_2$ copies of $\rho_2$, and so on. In total, we assume that $\Lambda$ contains $q$ different states and that $\sum_{\beta=1}^q k_{\beta}=k$. Moreover, we assume that  $O$ is  some traceless real-valued Hermitian operator such that $O^2=\id$. Then, in the large-$d$ limit we have
\begin{align}
\mathbb{E}_{\mathbb{O}(d)}\left[\prod_{i=1}^k C(\rho_i)\right]&=(-1)^{\frac{k}{2} -R}\,\frac{2^{k/2} d^R}{d^{k}} \left( \frac{(k - 2R)!}{2^{\frac{k}{2}-R} \left(\frac{k}{2}-R\right)!} \prod_{k_\beta\%2=1 }k_\beta \frac{(2\lfloor k_\beta/2\rfloor)!}{2^{\lfloor k_\beta/2\rfloor} (\lfloor k_\beta/2\rfloor)!} \prod_{k_\beta\%2=0 } \frac{ k_\beta!}{2^{ k_\beta/2} (k_\beta/2)!} \prod_{\beta=1}^q \Tr[\rho_\beta^2]^{\lfloor\frac{k_\beta}{2}\rfloor}\right)\,.
\end{align}
\end{suptheorem}

\section{Proof of Lemma 1}\label{sec:proof-lem1}

In this section we present a proof of Lemma 1, which we recall for convenience.

\begin{lemma}\label{si-lem:exp-cov-SM}
    Let $C_j(\rho_i)$ be the expectation value of a Haar random QNN as in Eq. (1). Then for any $\rho_i\in \mathscr{D}$, $O_j\in \mathscr{O}$,
    \begin{equation}
        \mathbb{E}_{\mathbb{U}(d)}[C_j(\rho_i)]=\mathbb{E}_{\mathbb{O}(d)}[C_j(\rho_i)]=0\,.
    \end{equation}
    Moreover, for any pair of states $\rho_i,\rho_{i'}\in \mathscr{D}$ and operators $O_{j},O_{j'}\in \mathscr{O}$ we have
    \begin{equation}
       {\rm Cov}_{\mathbb{U}(d)}[C_j(\rho_i)C_{j'}(\rho_{i'})]={\rm Cov}_{\mathbb{O}(d)}[C_j(\rho_i)C_{j'}(\rho_{i'})]=0\nonumber\,,
    \end{equation}
    if $j\neq j'$ and 
    \small
    \begin{align}
       \vec{\Sigma}_{i, i'}^{\mathbb{U}}&=\frac{d}{d^2-1}\left(\Tr[\rho_{i}\rho_{i'}]-\frac{1}{d}\right)\,,\\
       \vec{\Sigma}_{i, i'}^{\mathbb{O}}&=\frac{2(d+1)}{(d+2)(d-1)}\left(\Tr[\rho_{i}\rho_{i'}]\left(1-\frac{1}{d+1}\right)-\frac{1}{d+1}\right)\,,
    \end{align}
    \normalsize
    if $j=j'$. Here, we have defined $\vec{\Sigma}_{i, i'}^{G}={\rm Cov}_{G}[C_j(\rho_i)C_{j}(\rho_{i'})]$, where $G=\mathbb{U}(d),\mathbb{O}(d)$.
\end{lemma}

\begin{proof}
We begin by considering expectation values over $\mathbb{U}(d)$. Using Eq.~\eqref{eq:twirl-k1} we can compute
\begin{equation}
    \mathbb{E}_{\mathbb{U}(d)}[C_j(\rho_i)]=\frac{\Tr[\rho_i]\Tr[O_j]}{d}=0\,,
\end{equation}
where we have used the fact that $O_j$ is traceless. 

Next, we can use Eq.~\eqref{eq:twirl-k2} to find 
\begin{align}
    {\rm Cov}_{\mathbb{U}(d)}[C_j(\rho_i)C_{j'}(\rho_{i'})]=&\mathbb{E}_{\mathbb{U}(d)}[C_j(\rho_i)C_{j'}(\rho_{i'})]\nonumber\\
    =&\frac{1}{d^2-1}\left(\Tr[(\rho_i\otimes \rho_{i'})]-\frac{\Tr[(\rho_i\otimes \rho_{i'}) \SWAP]}{d}\right)\Tr[(O_j\otimes O_{j'})(\id\otimes \id)]\nonumber\\
    &+\frac{1}{d^2-1}\left(\Tr[(\rho_i\otimes \rho_{i'}) \SWAP]-\frac{\Tr[(\rho_i\otimes \rho_{i'})]}{d}\right) \Tr[(O_j\otimes O_{j'}) \SWAP]\nonumber\\
    =&\frac{1}{d^2-1}\left(\Tr[\rho_i \rho_{i'}]-\frac{1}{d}\right) \Tr[O_j O_{j'}]\nonumber\\
    =&\frac{d}{d^2-1}\left(\Tr[\rho_i \rho_{i'}]-\frac{1}{d}\right) \delta_{j,j'}\nonumber\,.
\end{align}
Here we have used the fact that, by definition, $\Tr[O_j O_{j'}]=d \delta_{j,j'}$. Hence, we find that if $j\neq j'$, then ${\rm Cov}_{\mathbb{U}(d)}[C_j(\rho_i)C_{j'}(\rho_{i'})]=0$, whereas if $j=j'$ one obtains
\begin{equation}
    {\rm Cov}_{\mathbb{U}(d)}[C_j(\rho_i)C_j(\rho_{i'})]=\frac{d}{d^2-1}\left(\Tr[\rho_i \rho_{i'}]-\frac{1}{d}\right)\,.
\end{equation}

Next, let us take expectation values  over $\mathbb{O}(d)$. Using Eq.~\eqref{eq:twirl-o-k1} we again find
\begin{equation}
    \mathbb{E}_{\mathbb{O}(d)}[C_j(\rho_i)]=\frac{\Tr[\rho_i]\Tr[O_j]}{d}=0\,.
\end{equation}
Next, from Eq.~\eqref{eq:twirl-o-k2} we obtain 
\small
\begin{align}
    {\rm Cov}_{\mathbb{O}(d)}[C_j(\rho_i)C_{j'}(\rho_{i'})]=&\mathbb{E}_{\mathbb{O}(d)}[C_j(\rho_i)C_{j'}(\rho_{i'})]\nonumber\\
    =&\frac{1}{d(d+2)(d-1)}\left((d+1)\Tr[(\rho_i\otimes \rho_{i'})]  -\Tr[(\rho_i\otimes \rho_{i'})\SWAP]  -\Tr[(\rho_i\otimes \rho_{i'}) \Pi]\right)\Tr[(O_j\otimes O_{j'}) (\id\otimes \id)]\nonumber\\
    &+\frac{1}{d(d+2)(d-1)}\left(-\Tr[(\rho_i\otimes \rho_{i'})] + (d+1)\Tr[(\rho_i\otimes \rho_{i'})\SWAP]  -\Tr[(\rho_i\otimes \rho_{i'}) \Pi]\right) \Tr[(O_j\otimes O_{j'}) \SWAP]\nonumber\\
    &+\frac{1}{d(d+2)(d-1)}\left(-\Tr[(\rho_i\otimes \rho_{i'})]  -\Tr[(\rho_i\otimes \rho_{i'})\SWAP] + (d+1) \Tr[(\rho_i\otimes \rho_{i'}) \Pi]\right)\Tr[(O_j\otimes O_{j'}) \Pi]\nonumber\,.
\end{align}
\normalsize
Using the ricochet property of Eq.~\eqref{eq:action_Pi} plus the fact that the operators $\rho_i$, $\rho_{i'}$, $O_j$ and $O_{j'}$ are real-valued leads to
\begin{align}
    {\rm Cov}_{\mathbb{O}(d)}[C_j(\rho_i)C_{j'}(\rho_{i'})]&=\frac{2}{d(d+2)(d-1)}\left(-1 + (d+1)\Tr[\rho_i\rho_{i'}]  -\Tr[\rho_i\rho_{i'}]\right) \delta_{j,j'}\nonumber\\
    &=\frac{2(d+1)}{(d+2)(d-1)}\left(\Tr[\rho_{i}\rho_{i'}]\left(1-\frac{1}{d+1}\right)-\frac{1}{d+1}\right)\delta_{j,j'}\,.
\end{align} 
If $j\neq j'$, then ${\rm Cov}_{\mathbb{O}(d)}[C_j(\rho_i)C_{j'}(\rho_{i'})]=0$, whereas if $j= j'$ one obtains
\begin{equation}
    {\rm Cov}_{\mathbb{O}(d)}[C_j(\rho_i)C_j(\rho_{i'})]=\frac{2(d+1)}{(d+2)(d-1)}\left(\Tr[\rho_{i}\rho_{i'}]\left(1-\frac{1}{d+1}\right)-\frac{1}{d+1}\right)\,.
\end{equation}

\end{proof}

\section{Proof of Lemma 2}\label{sec:proof-lem2}

Let us now prove  Lemma 2, which we recall for convenience.

\begin{lemma} \label{si-lem:moments}
    Let $\mathscr{C}$ be a vector of expectation values of a Haar random QNN as in Eq. (2), where one measures the same operator $O_j$ over a set  of states from $\mathscr{D}$. Furthermore, let $\rho_{i_1},\ldots,\rho_{i_k}\in\mathscr{D}$ be a multiset of states taken from those appearing in $\mathscr{C}$.
    In the large-$d$ limit we find that if $k$ is odd then $\mathbb{E}_{\mathbb{U}(d)}\left[C_{j}(\rho_{i_1})\cdots C_{j}(\rho_{i_k})\right]=\mathbb{E}_{\mathbb{O}(d)}\left[C_{j}(\rho_{i_1})\cdots C_{j}(\rho_{i_k})\right]=0$. Moreover, if $k$ is even and   $\Tr[\rho_i\rho_{i'}]\in\Omega\left(\frac{1}{\poly(\log(d))}\right)$ for all $i,i'$,  we have
    \begin{align}
\mathbb{E}_{\mathbb{U}(d)}\left[C_{j}(\rho_{i_1})\cdots C_{j}(\rho_{i_k})\right]&=\frac{1}{d^{k/2}}\sum_{\sigma\in T_{k}}\prod_{\{t,t'\}\in \sigma} \Tr[\rho_{t}\rho_{t'}]\\
&=\frac{\mathbb{E}_{\mathbb{O}(d)}\left[C_{j}(\rho_{i_1})\cdots C_{j}(\rho_{i_k})\right]}{2^{k/2}}\,,\nonumber
\end{align}
where the summation runs over all the possible disjoint pairing of indexes in the set $\{1,2,\ldots,k\}$, $T_k$, and the product is over the different pairs in each pairing.
\end{lemma}

\begin{proof}

The proof of this lemma follows from results previously derived in this SI. Namely, we simply need to use Eq.~\eqref{eq:exp-val-un}  for the unitary group, and Eq.~\eqref{eq:exp-val-orto-ortogonal}  for the orthogonal group. 

\end{proof}

\section{Proof of Corollary 1}\label{sec:proof-cor1}

For ease of calculation, let us first prove Corollary 1 before our main theorems. The statement of Corollary 1 is as follows. 
\begin{corollary}\label{si-cor:gauss}
    Let $C_j(\rho_i)$ be the expectation value of a Haar random QNN as in Eq.(1).   Then for any $\rho_i\in \mathscr{D}$ and $O_j\in \mathscr{O}$, we have 
    \begin{equation}
        P(C_j(\rho_i))=\NC(0,\sigma^2)\,,
    \end{equation}
 where $\sigma^2=\frac{1}{d},\frac{2}{d}$ when $U$ is Haar random over $\mathbb{U}(d)$ and $\mathbb{O}(d)$, respectively.
\end{corollary}

\begin{proof}

In order to prove Corollary~\ref{si-cor:gauss} we need to consider two distinct cases: when we sample over the unitary and orthogonal groups. 

We start by taking averages over $\mathbb{U}(d)$. Using Lemma~\ref{si-lem:moments} we have that the $k$-th moment (for $k$ even) of the distribution is 
\begin{align}
\mathbb{E}_{\mathbb{U}(d)}\left[C_{j}(\rho)^k\right]&=\frac{1}{d^{k/2}}\sum_{\sigma\in T_{k}}\prod_{\{t,t'\}\in \sigma} \Tr[\rho\rho]=\frac{1}{d^{k/2}}\sum_{\sigma\in T_{k}}\prod_{\{t,t'\}\in \sigma} 1 = \frac{k!}{d^{k/2}2^{k/2} (k/2)!}.
\end{align}
Clearly, we can verify that these moments match those of a Gaussian distribution, as
\begin{equation}
    \frac{\mathbb{E}_{\mathbb{U}(d)}\left[C_{j}(\rho)^k\right]}{\mathbb{E}_{\mathbb{U}(d)}\left[C_{j}(\rho)^2\right]^{k/2}}=\frac{k!}{2^{k/2} (k/2)!}\,.
\end{equation}
To prove that these moments uniquely determine the distribution of $\mathscr{C}$, we use Carleman's condition.
\begin{suplemma}[Carleman's condition, Hamburger case~\cite{kleiber2013multivariate}]\label{lem:carleman}
Let $\gamma_k$ be the (finite) moments of the distribution of a random variable $X$ that can take values on the real line $\mathbb{R}$. These moments determine uniquely the distribution of $X$ if
\beq \sum_{k=1}^\infty \gamma_{2k}^{-1/2k} = \infty \ .\eeq
\end{suplemma}

In our case, we have
\begin{equation}
    \sum_{k=1}^\infty \left(\frac{1}{d^k} \frac{(2k)!}{2^k k!}\right)^{-1/2k} = \sqrt{2d} \sum_{k=1}^\infty \left((2k)\cdots(k+1)\right)^{-1/2k} \geq  \sqrt{2d} \sum_{k=1}^\infty \left((2k)^k\right)^{-1/2k} = \sqrt{2d}\sum_{k=1}^\infty \frac{1}{\sqrt{2k}} = \infty \,, 
\end{equation}
and so, according to Supplemental Lemma~\ref{lem:carleman}, Carleman's condition is satisfied. Therefore, $P(C_j(\rho_i))$ is distributed according to  a Gaussian distribution with zero-mean and variance $\sigma^2=\frac{1}{d}$.

Next, we consider taking averages over $\mathbb{O}(d)$. Using Lemma~\ref{si-lem:moments} we have that the $k$-th moment (for $k$ even) of the distribution is 
\begin{align}
\mathbb{E}_{\mathbb{O}(d)}\left[C_{j}(\rho)^k\right]= \frac{ k!}{d^{k/2} (k/2)!}.
\end{align}
We can again check that these moments match those of a Gaussian distribution, as one has
\begin{equation}
    \frac{\mathbb{E}_{\mathbb{O}(d)}\left[C_{j}(\rho)^k\right]}{\mathbb{E}_{\mathbb{O}(d)}\left[C_{j}(\rho)^2\right]^{k/2}}=\frac{k!}{2^{k/2} (k/2)!}\,,
\end{equation}
and also that Carleman's condition is satisfied,
\begin{equation}
    \sum_{k=1}^\infty \left(\frac{(2k)!}{ d^k k!}\right)^{-1/2k} = \sqrt{d} \sum_{k=1}^\infty \left((2k)\cdots(k+1)\right)^{-1/2k} \geq   \sum_{k=1}^\infty \left((2k)^k\right)^{-1/2k} = \sum_{k=1}^\infty \frac{1}{\sqrt{2k}} = \infty \,.
\end{equation}
Therefore, $P(C_j(\rho_i))$ is distributed according to  a Gaussian distribution with zero mean and variance $\sigma^2=\frac{2}{d}$.

\end{proof}

\section{Proof of Theorem 1}\label{sec:proof-theo1}

Let us now present a proof for Theorem 1, which we recall here.
\begin{theorem} \label{si-th:1}
Under the same conditions for which Lemma~\ref{si-lem:moments} holds, the vector $\mathscr{C}$ forms a GP with mean vector $\vec{\mu}=\vec{0}$ and covariance matrix given by $
    \vec{\Sigma}_{i, i'}^{\mathbb{U}}=\frac{\vec{\Sigma}_{i, i'}^{\mathbb{O}}}{2}= \frac{\Tr[\rho_{i}\rho_{i'}]}{d}$.
\end{theorem}

\begin{proof}

First, let us recall that a multivariate Gaussian distribution $\NC(\vec{\mu},\vec{\Sigma})$,  is fully defined by its $m$-dimensional mean vector $\vec{\mu}=(\mathbb{E}[X_{1}],\ldots,\mathbb{E}[X_{m}])$, and its $m\times m$ dimensional covariance matrix with entries $(\vec{\Sigma})_{\alpha\beta}={\rm Cov}[X_{\alpha},X_{\beta}]$. From here, any higher order  moments can be obtained from Isserlis  theorem~\cite{isserlis1918formula}. Specifically, if we want to compute a $k$-th  order moment, then we have $\mathbb{E}[X_{1}X_{2}\cdots X_{k}]=0$ if $k$ is odd, and 
\begin{equation}\label{eq:wick}
    \mathbb{E}[X_{1}X_{2}\cdots X_{k}]=
        \sum_{\sigma\in T_{k}}\prod_{\{t,t'\}\in \sigma} {\rm Cov} [X_{t},X_{t'}]\,,
\end{equation}
if $k$ is even. Here, the summation runs over all the possible pairing of indexes, $T_k$, in the set $\{1,2,\ldots,k\}$.

A direct comparison between the results in Lemma~\ref{si-lem:moments} and Eq.~\eqref{eq:wick} shows that indeed the moments of $\mathscr{C}$ match those of a GP with covariance matrix $
    \vec{\Sigma}_{i, i'}^{\mathbb{U}}=\frac{\vec{\Sigma}_{i, i'}^{\mathbb{O}}}{2}= \frac{\Tr[\rho_{i}\rho_{i'}]}{d}$. To prove that these moments uniquely determine the distribution of $\mathscr{C}$, we use the fact that since its marginal distributions are determinate via Carleman's condition (see the previous section), then so is the distribution of $\mathscr{C}$~\cite{kleiber2013multivariate}. This result holds for both the unitary and orthogonal groups.

\end{proof}

\section{Proof of Theorem 2}\label{sec:proof-theo2}

Here we present a proof for Theorem 2, which we restate for convenience. 
\begin{theorem} \label{si-th:2}
    Let $\mathscr{C}$ be a vector of expectation values of an operator in $\mathscr{O}$ over a set of states in $\mathscr{D}$. If $\Tr[\rho_i\rho_{i'}]=\frac{1}{d}$ for all $i\neq i'$, then in the large-$d$ limit $\mathscr{C}$ forms a GP with mean vector $\vec{\mu}=\vec{0}$ and diagonal covariance matrix
    \begin{equation}
\vec{\Sigma}_{i, i'}^{\mathbb{U}}= \frac{\vec{\Sigma}_{i, i'}^{\mathbb{O}}}{2} = \begin{cases}
    \frac{1}{d}\quad\text{if $i=i'$} \\
    0\quad \text{if $i\neq i'$}\
\end{cases}\,.
\end{equation}
\end{theorem}

\begin{proof}
   We use Isserlis  theorem~\cite{isserlis1918formula} (see Eq.~\eqref{eq:wick}), that states that if $\mathscr{C}$ is a GP, then
\begin{equation} \label{eq:suppI_wick}
    \mathbb{E}_G[C_j(\rho_1)\cdots C_j(\rho_k)]=
        \sum_{\sigma\in T_{k}}\prod_{\{i,i'\}\in \sigma} {\rm Cov}_G [C_j(\rho_i)C_j(\rho_{i'})]\,.
\end{equation}
for $G=\mathbb{U}(d),\mathbb{O}(d)$. Using Lemma~\ref{si-lem:exp-cov-SM}, we know that the covariance matrix is diagonal with entries that in the large-$d$ limit are given by $\vec{\Sigma}_{i, i}^{\mathbb{U}}=\frac{\vec{\Sigma}_{i, i}^{\mathbb{O}}}{2}= \frac{1}{d}$. Hence, the terms that contribute in~\eqref{eq:suppI_wick} are those permutations in $T_k$ that only connect identical states. Comparing~\eqref{eq:suppI_wick} with Eqs.~\eqref{eq:app_moments_uncorrelated} and~\eqref{eq:app_moments_uncorrelated_orto}, it is clear that the entries of $\mathscr{C}$ follow a joint multivariate Gaussian distribution.
\end{proof}

\section{Proof of Theorem 3}\label{sec:proof-theo3}

We now provide a proof for Theorem 3, whose statement is as follows. 
\begin{theorem} \label{si-th:3}
 Let $\mathscr{C}$ be a vector of expectation values of an operator in $\mathscr{O}$ over a set of states in $\mathscr{D}$. If $\Tr[\rho_i\rho_{i'}]=0$ for all $i\neq i'$, then in the large-$d$ limit $\mathscr{C}$ forms a GP with mean vector $\vec{\mu}=\vec{0}$ and covariance matrix
\begin{equation}
\vec{\Sigma}_{i, i'}^{\mathbb{U}(d)} = \frac{\vec{\Sigma}_{i, i'}^{\mathbb{O}(d)}}{2}=\begin{cases}
    \frac{\Tr[\rho_i^2]}{d}\quad\text{if $i=i'$} \\
    -\frac{1}{d^2}\quad \text{if $i\neq i'$}\,
\end{cases}.  
\end{equation}
\end{theorem}

\begin{proof}
In what follows, we will again us the strategy of proving that the moments of $\mathscr{C}$ match those of a GP. From Isserlis  theorem~\cite{isserlis1918formula} (see Eq.~\eqref{eq:wick}), we know that if $\mathscr{C}$ is indeed a GP, then
\begin{equation}
    \mathbb{E}_G[C_j(\rho_1)\cdots C_j(\rho_k)]=
        \sum_{\sigma\in T_{k}}\prod_{\{i,i'\}\in \sigma} {\rm Cov}_G [C_j(\rho_i)C_j(\rho_{i'})]\,.
\end{equation}
for $G=\mathbb{U}(d),\mathbb{O}(d)$.

There are three situations we must consider. When the states $\rho_1,\ldots,\rho_k$ are all  the same, when they are all different,  and when there are $k_1$ of the same state $\rho_1$, $k_2$ of the same state $\rho_2$, and so on. In the last case, we assume that there are $q$ different states so that $\sum_{\beta=1}^q k_{\beta}=k$. 

Let us start with taking averages over $\mathbb{U}(d)$. If all the states are the same we simply use Supplemental Corollary~\ref{cor:moments-gaussian-single} to find 
\begin{align}
\mathbb{E}_{\mathbb{U}(d)}\left[\prod_{\gamma=1}^k C_{\gamma}\right]&=\frac{1}{d^{k/2}}\sum_{\sigma\in T_k}\prod_{\alpha=1}^{k/2}=\frac{k!}{d^{k/2}2^{k/2} (k/2)!}\,,
\end{align}
and we see that the moments match. Then, if all the states are orthogonal, we use Supplemental Theorem~\ref{theo-moments-orth} to obtain
\begin{align}
\mathbb{E}_{\mathbb{U}(d)}\left[\prod_{i=1}^k C(\rho_i)\right]&=\frac{k!}{2^{k/2} (k/2)!}\frac{(-1)^{k/2}}{d^k}\,,
\end{align}
which again shows that the moments match. Finally, if there are $k_\beta$ copies of $\rho_\beta$, we know from Supplemental Theorem~\ref{theo-moments-orth-mixed} that in the large-$d$ limit  the moments go as
\begin{align}
\mathbb{E}_{\mathbb{U}(d)}\left[\prod_{i=1}^k C(\rho_i)\right]&=(-1)^{\frac{k}{2} -R}\,\frac{d^R}{d^{k}} \left( \frac{(k - 2R)!}{2^{\frac{k}{2}-R} \left(\frac{k}{2}-R\right)!} \prod_{k_\beta\%2=1 }k_\beta \frac{(2\lfloor k_\beta/2\rfloor)!}{2^{\lfloor k_\beta/2\rfloor} (\lfloor k_\beta/2\rfloor)!} \prod_{k_\beta\%2=0 } \frac{ k_\beta!}{2^{ k_\beta/2} (k_\beta/2)!} \prod_{\beta=1}^q \Tr[\rho_\beta^2]^{\lfloor\frac{k_\beta}{2}\rfloor}\right)\,.
\end{align}
This equation can be understood as follows. First, we need to count how many copies of the same state $\rho_\beta$ can we pair at the same time. This is simply given by $R=\sum_{k_\beta\%2=0 }\lfloor\frac{k_\beta}{2}\rfloor$. Next, we see that the first and second term in the multiplications respectively counts how many different ways we have to pair copies of the same state when $k_\beta\geq 2$ is odd or  even. Additionally, the term $\frac{(k - 2R)!}{2^{\frac{k}{2}-R} \left(\frac{k}{2}-R\right)!}$ counts how many different ways we have to pair copies of the remaining states.

Inspecting Isserlis  theorem we see what each term in the sum
$\sum_{\sigma\in T_{k}}\prod_{\{i,i'\}\in \sigma} {\rm Cov}_G [C_j(\rho_i)C_j(\rho_{i'})]$ will contain factors that are equal to $\frac{\Tr[\rho_i^2]}{d}$ and which come  from ${\rm Cov}_G [C_j(\rho_i)C_j(\rho_{i})]$; and factors that are equal to $-\frac{1}{d^2}$ coming from ${\rm Cov}_G [C_j(\rho_i)C_j(\rho_{i'})]$ for $i\neq i'$. One can readily see that one will have a number of terms of the form ${\rm Cov}_G [C_j(\rho_i)C_j(\rho_{i})]$ equal to  $\sum_{k_\beta\%2=0 }\lfloor\frac{k_\beta}{2}\rfloor$. As such, Isserlis  theorem indicates that the leading order terms of $\mathbb{E}_G[C_j(\rho_1)\cdots C_j(\rho_k)]$ will scale as $\frac{d^R}{d^{k}}$. Then, counting how many such terms exist leads to $\frac{(k - 2R)!}{2^{\frac{k}{2}-R} \left(\frac{k}{2}-R\right)!} \prod_{k_\beta\%2=1 }k_\beta \frac{(2\lfloor k_\beta/2\rfloor)!}{2^{\lfloor k_\beta/2\rfloor} (\lfloor k_\beta/2\rfloor)!}\prod_{k_\beta\%2=0 } \frac{ k_\beta!}{2^{ k_\beta/2} (k_\beta/2)!}$.

Taken together, the previous results show that the moments of the deep QNN outcomes indeed match those of a GP. Here, we can again prove that these moments uniquely determine the distribution of $\mathscr{C}$ from the  fact that since its marginal distributions are determinate via Carleman's condition (see Corollary~\ref{si-cor:gauss}), then so is the distribution of $\mathscr{C}$~\cite{petz2004asymptotics,kleiber2013multivariate}. 

A similar argument can be made for the orthogonal group using Supplemental Theorem~\ref{theo-moments-orth-mixed-orto}. 
\end{proof}

\section{Proof of Theorem 4}\label{sec:proof-theo4}

In this section we will provide a proof for Theorem 4. 

\begin{theorem}
    Consider a GP obtained from a Haar random QNN. Given the set of observations $(y(\rho_1),\ldots,y(\rho_m))$ obtained from $N\in\OC(\poly(\log(d)))$ measurements, then  the predictive distribution of the GP is trivial:
    \small
    \begin{equation}
P(C_j(\rho_{m+1})|C_j(\rho_{1}),\ldots,C_j(\rho_{m}))=P(C_j(\rho_{m+1}))=\NC(0,\sigma^2)\,,\nonumber
    \end{equation}
    \normalsize
where $\sigma^2$ is given by Corollary~\ref{si-cor:gauss}.
\end{theorem}

\begin{proof}
    
Let us consider  a setting where we are interested in computing the expectation value of an operator $O$ at the output of a Haar random QNN, $U$. That is, we define the quantity $C(\rho)=\Tr[U\rho UO]$.  Then, given $m$ quantum states $\rho_1,\cdots,\rho_m$, we want to predict the  expectation value $C(\rho_{m+1})$, given the quantities $C(\rho_1),\ldots, C(\rho_m)$. As we have seen in the Methods section, in order to make predictions with the Gaussian process, we need the covariances
\begin{align}
    {\rm Cov}_{\mathbb{U}(d)}[C(\rho_i),C(\rho_{i'})]&=\frac{d}{d^2-1}\left(\Tr[\rho_{i}\rho_{i '}]-\frac{1}{d}\right)\,,\\
    {\rm Cov}_{\mathbb{O}(d)}[C(\rho_i),C(\rho_{i'})]&=\frac{2(d+1)}{(d+2)(d-1)}\left(\Tr[\rho_{i}\rho_{i'}]\left(1-\frac{1}{d+1}\right)-\frac{1}{d+1}\right)\,.
\end{align}
And we  recall that in the large-$d$ limit, $P(C(\rho_i))=\NC(0,\sigma^2)$ with $\sigma^2=\frac{1}{d},\frac{2}{d}$ for the unitary and orthogonal groups, respectively (see Corollary~\ref{si-cor:gauss}). Moreover, we will assume that the expectation values are computed with $N$ shots. This leads to a statistical noise in the observation procedure that is modeled by a zero-mean Gaussian with a variance given by $\sigma_N^2=\frac{1}{N}$, so that $\frac{1}{\sigma_N^2}\in\OC(N)$.

From Eq. (20), we know that we need to invert the matrix $\vec{\Sigma}+\sigma_N^2\id$.  Using Supplemental Lemma~\ref{lem:inv}, we can then write 
\begin{equation}
    (\vec{\Sigma}+\sigma_N^2\id)^{-1}=\frac{1}{\sigma_N^2}\id-\frac{1}{\sigma_N^4}(\id+\frac{1}{\sigma_N^2}\vec{\Sigma})\vec{\Sigma}\,.
\end{equation}
Noting that the absolute value of the matrix elements of $\vec{\Sigma}$ are at most in $\OC(\frac{1}{d})$, then in the large-$d$ limit we will have 
\begin{equation}
    (\vec{\Sigma}+\sigma_N^2\id)^{-1}\sim\frac{1}{\sigma_N^2}\id\,.
\end{equation}
Hence, the mean and variance  of $P(C(\rho_{m+1})|C(\rho_{1}),\ldots,C(\rho_{m}))$ will be 
\begin{align}
    \mu(C(\rho_{m+1}))&=\frac{1}{\sigma_N^2} \vec{m}^T \cdot\vec{C}\,,\\
    \sigma^2(C(\rho_{m+1}))&=\sigma^2- \frac{1}{\sigma_N^2} \vec{m}^T\cdot\vec{m}\,.
\end{align}
Next, we note that the correction $\frac{1}{\sigma_N^2} \vec{m}^T \cdot\vec{C}\in\OC(\frac{N}{d})$ and $\frac{1}{\sigma_N^2} \vec{m}^T\cdot\vec{m}\in\OC(\frac{N}{d^2})$, where we have use the fact that $|C(\rho_\gamma)|\leq 1$ for all $\rho_\gamma$ and that the absolute values of the entries of $\vec{m}$ are  in $\OC(\frac{1}{d})$. If 
$N\in\OC(\poly(\log(d)))$,  we have that in the large-$d$ limit 
\begin{align}
    \mu(C(\rho_{m+1}))&=0\,,\\
    \sigma^2(C(\rho_{m+1}))&=\sigma^2\,.
\end{align}
Thus, using a poly-logarithmic-in-$d$ number of shots we cannot use the Gaussian process to learn anything about the probability of $C(\rho_{m+1})$.

\end{proof}

\section{Proof of Corollary 2}\label{sec:proof-coro3}

In this section we present a proof for Corollary 2, which we restate for convenience. 

\begin{corollary}
    Let $C_j(\rho_i)$ be the expectation value of a Haar random QNN as in Eq. (1). Assuming that there exists a parametrized gate in $U$ of the form $e^{-i \theta H}$ for some Pauli operator $H$, then 
    \begin{equation}
       P(|C_j(\rho_i)|\geq c), \,P(|\partial_{\theta}C_j(\rho_i)|\geq c)\in\OC\left(\frac{1}{ce^{dc^2}\sqrt{d}}\right)\,.\nonumber
    \end{equation}
\end{corollary}

\begin{proof}
Let us start by evaluating the probability $ P(|C_j(\rho_i)|\geq c)$. Since we know that $C_j(\rho_i)$ follows a  Gaussian distribution $\NC(0,\sigma^2)$ with $\sigma^2=\frac{1}{d}$ (see Corollary~\ref{si-cor:gauss}), we can use the equality
\begin{equation}\label{eq:tail-probab}
   P(|C_j(\rho_i)|\geq c)=\frac{2\sqrt{d}}{\sqrt{2\pi}}\int_{c}^{\infty} dx e^{-x^2 d}=\frac{{\rm Erfc}\left[c\sqrt{d}\right]}{\sqrt{2}}\,,
\end{equation}
where ${\rm Erfc}$ denotes the complementary error function. Using the fact that for large $x$, ${\rm Erfc}[x]\leq \frac{e^{-x^2}}{x\sqrt{\pi}}$, we then find 
\begin{equation}\label{eq:order-prob}
     P(|C_j(\rho_i)|\geq c)\in\OC\left(\frac{1}{ce^{dc^2}\sqrt{d}}\right)\,.
\end{equation}

Next, let us consider the probability $P(|\partial_{\theta}C_j(\rho_i)|\geq c)$, and let us write again the explicit dependence of $U$ on some set of parameters $\thv$. That is, we write $U(\thv)$. Then, let us note that if the parameter $\theta\in\thv$ appears in $U$ as $e^{-i \theta H}$ for some Pauli operator $H$, we can use the parameter shift-rule~\cite{cerezo2020variationalreview} to compute 
\begin{equation}    \partial_{\theta}C_j(\rho_i)=C_j^+(\rho_i)-C_j^-(\rho_i)\,,
\end{equation}
where 
\begin{equation}
C_j^\pm(\rho_i)=\Tr[U(\thv^\pm)\rho_iU\ad(\thv^\pm)O_j]\,,
\end{equation}
and $\thv^\pm=\thv+\frac{\pi}{4}\hat{e}_{\theta}$. Here $\hat{e}_{\theta}$ denotes a unit vector with an entry equal to one in the same position as that of $\theta$.   Then, let us define the events $\EC_\pm$ as $|C_j^\pm(\rho_i)|>2c$. Clearly, $|\partial_{\theta}C_j(\rho_i)|>c$ is a subset of $\EC_+\cup\EC_-$. Hence, 
\begin{align}
P(|\partial_{\theta}C_j(\rho_i)|\geq c)&\leq P(\EC_+\cup\EC_-)\nonumber\\
&\leq P(\EC_+)+P(\EC_-)\nonumber\\
&= {\rm Erfc}\left[\frac{2c\sqrt{d}}{\sqrt{2}}\right]+{\rm Erfc}\left[\frac{2c\sqrt{d}}{\sqrt{2}}\right]\in \OC\left(\frac{1}{ce^{4dc^2}\sqrt{d}}\right)\,.\label{eq:union-probs}
\end{align}
In the second inequality we have used the union bound, and in the first equality we used Eqs.~\eqref{eq:tail-probab} and~\eqref{eq:order-prob}.
    
\end{proof}

\section{Proof of Corollary 3}\label{sec:proo-coro4}

Here we provide a proof for Proof of Corollary 3. 

\begin{corollary}
    Let $U$ be drawn from a $t$-design. Then, under the same conditions for which Theorems~\ref{si-th:1},~\ref{si-th:2} and~\ref{si-th:3} hold, the vector $\mathscr{C}$ matches the first $t$ moments of a GP.
\end{corollary}

\begin{proof}
    The proof follows directly by using Theorems~\ref{si-th:1},~\ref{si-th:2} and~\ref{si-th:3}, and the fact that by definition a $t$-design matches the first $t$ moments of a Haar random unitary. Hence, the first $t$-moments of $\mathscr{C}$ match those of a GP under the same conditions under which Theorems~\ref{si-th:1},~\ref{si-th:2} and~\ref{si-th:3} hold.
\end{proof}

Here we also provide a proof for the following equation presented in the main text:
\beq P(|C_j(\rho_i)|\geq c), \,P(|\partial_{\theta}C_j(\rho_i)|\geq c)\in\OC\left(\frac{\left(2\left\lfloor\frac{t}{2}\right\rfloor\right)!}{2^{\left\lfloor\frac{t}{2}\right\rfloor }(dc^2)^{ \left\lfloor\frac{t}{2}\right\rfloor}\left(\left\lfloor\frac{t}{2}\right\rfloor\right)!}\right)\,.\eeq 

We start by considering $P(|C_j(\rho_i)|\geq c)$ Here we can use the generalization of Chebyshev's inequality to higher-order moments:
\begin{equation}\label{eq:cheb-higher}
    \Pr \left(|X- \mathbb{E}[X]|\geq c\right)\leq \frac{  \mathbb{E}[|X-\mathbb{E} [X]|^k]}{c^{k}}\,,
\end{equation}
for $c>0$ and for $k\geq 2$.  If $U$ forms a $t$-design, then we can evaluate $\mathbb{E}[C_j(\rho_i)^{t}]$ by noting that it matches the $t$ first moment of a $\NC(0,\sigma^2)$ distribution with $\sigma^2=\frac{1}{d}$ (see Corollary~\ref{si-cor:gauss}). In particular, we know that since the odd moments are zero, we only need to care about the largest even moment that the $t$-design matches. Hence we can use  
\begin{equation}
    \mathbb{E}[|C_j(\rho_i)^{2\lfloor \frac{t}{2}\rfloor }]=\frac{\left(2\left\lfloor\frac{t}{2}\right\rfloor\right)!}{2^{\left\lfloor\frac{t}{2}\right\rfloor }\left(\left\lfloor\frac{t}{2}\right\rfloor\right)!}\mathbb{E}[C_j(\rho_i)^{2}]^{\left\lfloor\frac{t}{2}\right\rfloor }=\frac{\left(2\left\lfloor\frac{t}{2}\right\rfloor\right)!}{2^{\left\lfloor\frac{t}{2}\right\rfloor }d^{ \left\lfloor\frac{t}{2}\right\rfloor}\left(\left\lfloor\frac{t}{2}\right\rfloor\right)!}.
\end{equation}
Combining this with Eq.~\eqref{eq:cheb-higher} leads to the desired result.

One can obtain a similar bound for $\mathbb{E}[|\partial_\theta C_j(\rho_i)^{2\lfloor \frac{t}{2}\rfloor }]$ following the same steps as the ones used to derive Eq.~\eqref{eq:union-probs}.

\section{Proof of Theorem 5}\label{sec:theo5}

Let us here provide a proof for Theorem 5.

\begin{theorem} \label{si-th:5}
    The results of Theorems~\ref{si-th:1} and~\ref{si-th:2} will hold, on average, if $\mathbb{E}_{\rho_i,\rho_{i'}\sim \mathscr{D}}\Tr[\rho_i\rho_{i'}]\in\Omega\left(\frac{1}{\poly(\log(d))}\right)$ and  $\mathbb{E}_{\rho_i,\rho_{i'}\sim \mathscr{D}}\Tr[\rho_i\rho_{i'}]=\frac{1}{d}$, respectively.
\end{theorem}

\begin{proof}
    The proof for Theorem~\ref{si-th:5} simply follows that of Theorems~\ref{si-th:1} and~\ref{si-th:2}: anytime we used in these proofs the fact that $\Tr[\rho_i\rho_{i'}]\in\Omega\left(\frac{1}{\poly(\log(d))}\right)$ or  $\Tr[\rho_i\rho_{i'}]=\frac{1}{d}$, we replace that by $\mathbb{E}_{\rho_i,\rho_{i'}\sim \mathscr{D}}\Tr[\rho_i\rho_{i'}]\in\Omega\left(\frac{1}{\poly(\log(d))}\right)$ and  $\mathbb{E}_{\rho_i,\rho_{i'}\sim \mathscr{D}}\Tr[\rho_i\rho_{i'}]=\frac{1}{d}$, respectively. The rest of the proofs are the same.
\end{proof}

\end{document}